%% file: main.tex
\documentclass[twoside]{article}
\usepackage{aistats2027}
\usepackage[T1]{fontenc}
\usepackage[utf8]{inputenc}
\usepackage{amsmath,amsthm,mathtools}
\usepackage{aliascnt}
\usepackage{amssymb}
\usepackage{microtype}
\usepackage{booktabs,array,tabularx,longtable}
\usepackage{algorithm}
\usepackage[noend]{algpseudocode}
\usepackage{balance}
\algrenewcommand{\algorithmicrequire}{\textbf{Input:}}
\algrenewcommand{\algorithmicensure}{\textbf{Output:}}

\usepackage{enumitem}
\usepackage{xcolor}
\usepackage{hyperref}
\usepackage[nameinlink,noabbrev]{cleveref}
\usepackage{placeins}
\usepackage[numbers,sort&compress]{natbib}

\makeatletter
\providecommand{\theHALG@line}{}
\renewcommand{\theHALG@line}{\thealgorithm.\arabic{ALG@line}}
\makeatother
\usepackage{needspace}
\hypersetup{colorlinks=true,linkcolor=blue!45!black,citecolor=blue!45!black,
 urlcolor=blue!45!black,pdftitle={BetweenCut: Private Heavy-Node Classification with Doubly Logarithmic Error in Tree Height},
 pdfauthor={Ergute Bao and Tudor Cebere},
 pdfsubject={Explicit-constant privacy and deterministic utility analysis}}
\allowdisplaybreaks[2]
\newtheorem{theorem}{Theorem}[section]
\newaliascnt{lemma}{theorem}
\newtheorem{lemma}[lemma]{Lemma}
\aliascntresetthe{lemma}
\newaliascnt{proposition}{theorem}
\newtheorem{proposition}[proposition]{Proposition}
\aliascntresetthe{proposition}
\newaliascnt{corollary}{theorem}

\aliascntresetthe{corollary}
\theoremstyle{definition}
\newaliascnt{definition}{theorem}
\newtheorem{definition}[definition]{Definition}
\aliascntresetthe{definition}
\theoremstyle{remark}
\newaliascnt{remark}{theorem}

\aliascntresetthe{remark}
\newcommand{\N}{\mathbb{N}}

\newcommand{\Prb}{\Pr}

\newcommand{\Lo}{\mathsf{L}}
\newcommand{\Bt}{\mathsf{B}}
\newcommand{\Hi}{\mathsf{H}}
\newcommand{\Unk}{\mathsf{?}}
\newcommand{\Lap}{\operatorname{Lap}}
\newcommand{\Anc}{\operatorname{Anc}}
\newcommand{\Desc}{\operatorname{Desc}}

\newcommand{\parent}{\operatorname{parent}}

\newcommand{\ind}{\mathbf{1}}
\newcommand{\Hhat}{\widehat H}

\newcommand{\eps}{\varepsilon}
\newcommand{\GBT}{\mathsf{GuardedBT}}
\newcommand{\BT}{\mathsf{BT}}
\newcommand{\qstar}{q_\lambda}

\crefname{algorithm}{Algorithm}{Algorithms}
\Crefname{algorithm}{Algorithm}{Algorithms}
\crefname{lemma}{lemma}{lemmas}
\Crefname{lemma}{Lemma}{Lemmas}
\crefname{proposition}{proposition}{propositions}
\Crefname{proposition}{Proposition}{Propositions}
\crefname{corollary}{corollary}{corollaries}
\Crefname{corollary}{Corollary}{Corollaries}
\crefname{definition}{definition}{definitions}
\Crefname{definition}{Definition}{Definitions}
\crefname{remark}{remark}{remarks}
\Crefname{remark}{Remark}{Remarks}
\begin{document}
\twocolumn[
\aistatstitle{BetweenCut: Private Heavy-Node Classification with\\ Doubly Logarithmic Error in Tree Height}
% Keep the manuscript globally in review mode so that the AISTATS review
% running heads and the "Do not distribute" notice remain in place.  Switch
% to the identity-rendering branch only for the title-page author block.
\begingroup
\renewcommand{\statePaper}{1}
\runningauthor{Manuscript under review}
\aistatsauthor{%
\begin{tabular}{c@{\hspace{0.8in}}c}
Ergute Bao & Graham Cormode\\
{\normalfont Inria} & {\normalfont University of Oxford}\\[0.75em]
Xiaokui Xiao & Ting Yu\\
{\normalfont National University of Singapore} &
{\normalfont Mohamed bin Zayed University of Artificial Intelligence}
\end{tabular}}

\vskip 0.18in % space between author block and abstract

\endgroup
]
\allowdisplaybreaks
\vspace{5em}

\input{0-abs}
\input{1-intro}
\input{2-formulation}

\input{3-related-works}

\input{4-algo}

\input{6-analysis-privacy-trade-off}
\input{7-comparison-short}
\input{8-conclusion}

\input{AI}

\FloatBarrier

\phantomsection
\addcontentsline{toc}{section}{References}

\bibliographystyle{plainnat}
\bibliography{references}

\balance

\clearpage
\appendix
\onecolumn

\input{App-other-related}

\input{App-algo-details}

\input{App-proofs}

\input{App-sparse}

\input{App-extension-tree-counting}

\end{document}

%% file: 0-abs.tex
\begin{abstract}
Finding heavy nodes in a tree---those whose counts exceed a given threshold---is a building block for analysis and learning over structured data. 
Achieving record-level differential privacy (DP) without sacrificing accuracy is challenging because each record contributes to counts along an entire root-to-leaf path, allowing privacy costs to accumulate across levels. 
Existing methods account for the multiple threshold comparisons for each record incur additive error margins of $\Omega_{\varepsilon,\delta}(\log h)$ or $\Omega_{\varepsilon,\delta}(\sqrt{\log h})$ for tree height $h$. 
We introduce \textsc{BetweenCut}, an $(\varepsilon,\delta)$-DP algorithm with an additive error margin of $O_{\varepsilon,\delta}(\log\log h)$, improving the existing bounds for deep trees. 
This error holds simultaneously for all nodes and is independent of the input database size.

\end{abstract}

%% file: 1-intro.tex
\section{Introduction}\label{sec:intro}

Many data spaces are naturally represented (implicitly or explicitly) as a hierarchy~\cite{zhu2020federated,kim2021ngram,wu2022heavyhitters,practical-ldp-hh,apple2017learning}. 
Searching for important nodes---those whose ``weight'' exceeds a given threshold $\tau$---is a building block for analysis and learning over structured data, such as random forests~\cite{lumberjack2026,ferry-random-forest-attack}, frequent word discovery~\cite{zhu2020federated}, and spatial partitioning pruning~\cite{zhang2016privtree}. 
We study this task on a public hierarchical tree of height $h$, where each private record belongs to one leaf and contributes to every count on its root-to-leaf path. 
The goal is to classify all nodes simultaneously while satisfying record-level $(\varepsilon,\delta)$-differential privacy ($(\varepsilon,\delta)$-DP)~\cite{dwork:roth:14}, with an additive error margin around $\tau$ that remains small even when $h$ is large. \looseness=-1

A natural baseline releases a DP node histogram independently at each level and thresholds the released counts against $\tau$. Within a level, the histogram bins contain disjoint sets of records. Across levels, however, the same record participates in $h$ levels. The basic Laplace mechanism therefore gives noise scale $h/\varepsilon$, while the Gaussian mechanism gives scale proportional to $\sqrt{h}/\varepsilon$ at fixed $\delta$, both leading to errors in $O(\operatorname{poly}(h))$. \looseness=-1
% These polynomial height factors can be further amplified by requiring simultaneous accuracy guarantees over all nodes, leading to errors in $poly(h)$.

One approach is to \emph{reduce the number of threshold tests}. Lumberjack~\cite{lumberjack2026} searches middle levels first and propagates `Heavy' labels upward and `Light' labels downward. This balanced search reduces each record's participation from $h$ to
$O(\log h)$ tests and leads to an improved error margin of $O_{\varepsilon,\delta}(\sqrt{\log h\,\log\log h})$.

A different approach is to \emph{avoid a separate privacy charge for every test}. 
The sparse vector technique (SVT) controls privacy through a bounded number of positive answers rather than the total number of query participation~\cite{svt-origin,dwork:roth:14}.
Ghazi et al.~\cite{ghazi2023} use this principle in a bottom-up tree classifier, and Biswas et al.~\cite{biswas2026} develop per-record threshold-monitoring arguments for hierarchical heavy hitters.
These approaches can traverse many levels without a privacy budget proportional to their number, but their stated errors retain $\log h$ terms due to a union bound over all levels. 

This suggests a question: \emph{can a tree search make a small number of queries per record while avoiding a separate privacy charge for each query?} 
These two requirements are distinct. The label propagation idea in Lumberjack~\cite{lumberjack2026} reduces the number of queries, but each query still triggers a separate privacy charge: a `High' answer leaves descendants to be classified, whereas a `Low'
answer can leave ancestors unresolved; either answer may still recur along the same root-to-leaf path for a record. On the other hand, applying SVT-style threshold tests to hierarchal nodes still requires composing the privacy costs for all positive answers: Ghazi et al.~\cite{ghazi2023} imposes a global $O(M/\tau)$ positive-answer budget, where $M\ge n$ is a public input database-size bound, introducing a different source of error. \looseness=-1

We provide a positive answer. Our algorithm fuses these two previously separate lines of work: histogram-based node counting~\cite{lumberjack2026} and threshold-testing~\cite{ghazi2023,biswas2026} with SVT accounting. We make a near-threshold answer terminate every affected root-to-leaf path through the queried node. This threshold test distinguishes `Low', `High', and `Middle' outcomes using fresh Laplace noise and two fixed public thresholds around $\tau$. `Low' and `High' outcomes propagate labels in one direction. A `Middle' outcome does both: it labels every proper ancestor Heavy, and the queried node and all its descendants Light. In other words, the `Middle' outcome terminates the search---there can be at most one `Middle' outcome in any root-to-leaf path. 

In our analysis, we argue that the `Low' and `High' outcomes need not trigger privacy charges. This requires the introduction of a novel ``one-target, finite horizon'' variant of the Target-charging technique (TCT)~\cite{cohenlyu2023}, which itself generalizes the Sparse Vector Technique (SVT)~\cite{svt-origin} and BetweenThresholds~\cite{bnsv2015}.
This analysis permits reusing a noiseless threshold across multiple queries on disjoint paths, simplifying the algorithm and avoiding re-sampling and composition overheads. 
The co-creation of the algorithm and its analysis was not an easy task: prior work that attempted to use SVT for finding heavy nodes~\cite{biswas2026} suffered a subtle bug~\cite{lumberjack2026}; correcting this required extra privacy noise and a lengthy proof~\cite{biswas2026}. 
In contrast, our contribution is a clean algorithm that is similar in shape to the state-of-the-art~\cite{lumberjack2026} while achieving tighter asymptotic bounds. 
We count the clarity of the algorithm and its accompanying proofs as a major part of our contribution, allowing future work to adopt this framework and apply it to related private search problems. 

%% file: 2-formulation.tex
\section{Problem Formulation}\label{sec:problem}

\subsection{Heavy Node Classification}\label{sec:problem-tree}

\vspace{0pt}
\noindent\textbf{Tree structure and input.} Let $T$ be a rooted tree of height $h$ and arity $k$. Let nodes/vertices be $V$ and leaves be $X$. The tree structure and the ordering of its nodes are public. A private database $D$ is a finite multiset of records in $X$. Each record contributes to the count of a leaf node. We write $n := \lvert D \rvert$ as the size of $D$.
For any leaf node $v\in X$, we write its count as
\begin{align}\label{eq:count-leaf}
\textstyle
    c_v(D)=|\{z\in D:z=v\}|, \text{with } n = \sum_{v\in X}c_v(D)
\end{align}
For any node $u\in V$, we write $\Desc(u)$ and $\Anc(u)$ as its descendants and ancestors, respectively. Both $\Desc(u)$ and $\Anc(u)$ include $u$ itself. \emph{Proper} ancestors and descendants are expressed as $\Anc(u)\setminus \{u\}$ and $\Desc(u)\setminus \{u\}$, respectively.

The total count (or weight) of node $u$ is 
\begin{align}\label{eq:count-node}
\textstyle
 c_u(D)=\sum_{v\in X\cap\Desc(u)} c_v(D).
\end{align}
If $v\in\Desc(u)$, then $c_v(D)\le c_u(D)$.

The tree height $h$ is the maximum number of nodes on a root-to-leaf path. We focus on deep trees and assume $h\ge 3$ throughout. The depth of a node starts at zero, so the deepest leaves have depth $h-1$. All logarithms are natural unless a base is explicitly indicated. \looseness=-1

\vspace{0pt}
\noindent\textbf{Classification problem.} Fix a public threshold $\tau\geq 0$. On any given input database $D$, the goal is to classify each node as \emph{Heavy} or \emph{Light} according to whether its count exceeds $\tau$.
We allow an additive error margin of $E$ around the threshold while respecting record-level privacy (defined next). In particular, the following must hold simultaneously for every node $u\in V$, except with error probability at most $\beta$:\\
%\setlength{\itemsep}{0ex}\setlength{\topsep}{0ex}
%\item 
\hspace*{2mm}$\bullet$
If $c_u(D)\ge \tau + E$, we classify node $u$ as \emph{Heavy}. \\
%\item 
\hspace*{2mm}$\bullet$
If $c_u(D)< \tau - E$, we classify node $u$ as \emph{Light}. \\
%\item  
\hspace*{2mm}$\bullet$
If $c_u(D)\in [\tau-E,\tau+E)$, $u$ may receive either label.%

%\end{itemize}

Formally, we want to achieve the following:

\begin{definition}[Classification guarantee]\label{def:utility}
Fix $\tau\ge 0$, $E\ge 0$, and $\beta\in[0,1)$.
A randomized algorithm $\mathcal A$ satisfies the $(E,\beta)$ classification guarantee at threshold $\tau$ if on every input database $D$ it returns an ancestor-closed set
$\widehat H\subseteq V$ such that
\begin{align}\label{eq:utility}
\Pr\!\big[
  &\{u\in V:c_u(D)\ge\tau+E\}
  \subseteq \widehat H \land \\\nonumber
  & \quad\widehat H\subseteq
  \{u\in V:c_u(D)\ge\tau-E\}
\big]
\ge 1-\beta,
\end{align}
where the probability is over the internal randomness of $\mathcal A$. 
A deterministic error margin means $\beta=0$.
\end{definition}

\subsection{Differential Privacy}\label{sec:problem-dp}

\begin{definition}[Add/remove adjacency]
Databases $D,D'$ are neighbors, written $D\sim D'$, when one is obtained from the other by adding one leaf record. 
\end{definition}
Consider any leaf record $z\in X$ and neighboring pairs $D=D'\uplus\{z\}$. Here $\uplus\{z\}$ means adding one occurrence of $z$. Let $P_z$ be the root-to-leaf path of $z$, then
\begin{equation}\label{eq:adjcounts}
 c_{u}(D)-c_{u}(D')=\ind\{u\in P_z\}.
\end{equation}
Namely, the extra record $z$ influences only the counts of the nodes that lie on $P_z$; and its influence is exactly~$1$.

\begin{definition}[Sensitivity~\citep{tcc06}]
The sensitivity of function $f$ is $ \textstyle
        S(f):=\max_{D\sim D'} \|f(D)-f(D')\|,$
where $\|\cdot\|$ is some norm of interest, e.g., the $\mathcal{L}_1$ norm.
\end{definition}

\begin{definition}[$(\eps,\delta)$-differential privacy~\citep{eps-delta-indistinguish,tcc06}]
A randomized algorithm $\mathcal A$ is $(\eps,\delta)$-DP if, for all $D$ and $D'$ s.t. $D\sim D'$ and every measurable output event $S$,
\begin{equation}\label{eq:DP}
 \Prb[\mathcal A(D)\in S]\le e^{\eps}\Prb[\mathcal A(D')\in S]+\delta.
\end{equation}
\end{definition}

% \vspace{0pt}
% \noindent\textbf{Benchmark for comparisons.}
% We compare algorithms at the same privacy parameters $(\varepsilon,\delta)$, tree parameters, and prescribed simultaneous failure bound $\beta$, and with the same public mass bound $M$ when applicable. In the additive-margin formulation, we fix the threshold $\tau$ and compare the \emph{smallest} error margin $E$. 
% For comparison with the relative formulation~\cite{ghazi2023}, we instead fix a relative tolerance $\gamma\in(0,1)$ and compare the smallest sufficient threshold $\tau$ for which $E\le\gamma\tau$.

\begin{definition}[Laplace distribution]
A random variable $Z$ has a zero-mean Laplace distribution with scale $b>0$, written $Z\sim \mathrm{Lap}(b)$, if its probability density function~(pdf) is $
p_Z(z)=\frac{1}{2b}\exp\!\left(-\frac{|z|}{b}\right)$, for $z\in\mathbb{R}$.
\end{definition}

\begin{lemma}[Laplace mechanism]\label{lem:lap-mechanism}
Let $f$ be a counting query with sensitivity $1$. For $\varepsilon>0$, define the Laplace mechanism for  input $D$ as
$\mathcal A(D) = f(D) + Z\,\textrm{ with }\, Z \sim \Lap\!\left(\frac{1}{\varepsilon}\right)$. Then mechanism $\mathcal A$ satisfies $(\varepsilon,0)$-DP.
\end{lemma}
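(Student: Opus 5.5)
We argue directly from the density of the output, comparing $\mathcal A(D)$ and $\mathcal A(D')$ pointwise for an arbitrary neighboring pair $D\sim D'$. Since the noise $Z\sim\Lap(1/\varepsilon)$ has a density, $\mathcal A(D)=f(D)+Z$ has density $p_D(y)=\frac{\varepsilon}{2}\exp(-\varepsilon|y-f(D)|)$ for $y\in\mathbb R$. The same holds for $D'$ with $f(D')$ in place of $f(D)$.

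The key step is the pointwise ratio bound. By the reverse triangle inequality,
\begin{equation*}
\frac{p_D(y)}{p_{D'}(y)}=\exp\bigl(\varepsilon(|y-f(D')|-|y-f(D)|)\bigr)\le \exp\bigl(\varepsilon|f(D)-f(D')|\bigr)\le e^{\varepsilon}.
\end{equation*}
The last inequality uses that $f$ has sensitivity $1$, so $|f(D)-f(D')|\le 1$ for every neighboring pair. For counting queries this is exactly \eqref{eq:adjcounts}: the difference is $\ind\{u\in P_z\}\in\{0,1\}$. Because adjacency is stated as ``one is obtained from the other by adding a record,'' the argument is symmetric, so the bound holds in both directions.

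Next, for any measurable $S$ we integrate the pointwise bound over $S$:
\begin{equation*}
\Pr[\mathcal A(D)\in S]=\int_S p_D\le e^{\varepsilon}\int_S p_{D'}=e^{\varepsilon}\Pr[\mathcal A(D')\in S].
\end{equation*}
This is \eqref{eq:DP} with $\delta=0$.

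There is no real obstacle here. The only points requiring care are stating the density correctly (scale $b=1/\varepsilon$ gives the factor $\varepsilon/2$) and noting that measurability and integrability are automatic, since both densities are continuous and positive everywhere.
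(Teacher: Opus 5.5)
Your proof is correct: the pointwise density-ratio bound via the reverse triangle inequality, followed by integrating over $S$, is the standard argument. The paper states this lemma without proof, as a standard fact from Dwork and Roth, but it uses exactly the same density-ratio-then-integrate step in Step~1 of the proof of Lemma~\ref{lem:BTtarget}.
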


% \subsection{Target Charging Technique}
% \label{sec:problem-tct-background}

\vspace{0pt}
\noindent\textbf{SVT and TCT.} 
The \emph{sparse vector technique} (SVT) is a fundamental tool for
answering adaptively chosen threshold queries under differential
privacy~\cite{svtrevisited,svt-app-10.1145/2488608.2488651,svt-app-mw-linear,svt-app-pmlr-v117-nandi20a,svt-app-pmlr-v83-nissim18a,svt-app-valid,svt-origin,understandingsvt,svt-app-dp-level-erm,svt-app-pmlr-v65-feldman17a}. It is useful when an analyst must examine many queries but expects only a small number of positive responses (in our case, nodes having count $\ge\tau$). 
Under linear composition or advanced composition~\cite{dwork:roth:14,pmlr-v37-kairouz15,boosting-and-dp}, posing $K$ queries would impose a $\Theta(K)$ or $\Theta(\sqrt{K})$ multiplicative factor on the final $\varepsilon$ parameter. \looseness=-1

As a result, more queries often lead to larger errors due to larger DP noise magnitude. 
In contrast, the privacy cost of SVT can be $O(1)$. 
SVT's basic \texttt{AboveThreshold} primitive compares noisy query values against a noisy threshold and stops at the first positive response; 
more general SVT procedures allow a prescribed number of positive responses. 
The privacy bound is governed by the number of positive responses rather than the total number of queries~\cite{svt-origin,dwork:roth:14,svt-app-10.1145/2488608.2488651,svt-app-valid,svt-app-mw-linear,svt-app-dp-level-erm,svt-app-pmlr-v65-feldman17a}.

The \emph{target-charging technique} (TCT)~\cite{cohenlyu2023} extends SVT-style accounting from threshold tests to general DP mechanisms. 
Each run is associated with a \emph{target}, a subset of possible outputs from a DP mechanism. A \emph{target hit} occurs when the released output belongs to that subset. Roughly speaking, the probability mass of the target decomposition and the observed number of target hits control the overall privacy cost. Similar to SVT, this cost can therefore be independent of the overall number of queries to the input database.
For threshold testing over adaptive queries, 
TCT injects fresh noise into each query answer to compare it with a public pre-fixed (noiseless) threshold. 

\begin{definition}[$\lambda$-Indistinguishability]
For probability distributions $P_0,P_1$ on an outcome space
$\mathcal Y$, we say $P_0$ and $ P_1$ are $\lambda$-indistinguishable (written as $P_0\approx_\lambda P_1$) if, 
for every measurable $S\subseteq\mathcal Y$,
\begin{align}
  P_0(S)\le e^\lambda P_1(S)
  \qquad\text{and}\qquad
  P_1(S)\le e^\lambda P_0(S).
\end{align}
\end{definition}

\begin{definition}[$q$-target
  {\cite{cohenlyu2023}}]
\label{def:tct-general-target}
Let $q\in(0,1]$, $\lambda>0$, and $\kappa\in[0,1)$.
A measurable set $\top\subseteq\mathcal Y$ is a \emph{$q$-target} for a pair of probability distributions $(P_0,P_1)$ with parameters $(\lambda,\kappa)$ if there exist $p\in[0,1]$ and 
probability distributions $C,N_0,N_1,F_0,F_1$ on $\mathcal Y$ such that, for
$b\in\{0,1\}$,
\begin{align}
  P_b=(1-\kappa)\bigl(pC+(1-p)N_b\bigr)+\kappa F_b, \quad\text{ and }
  \label{eq:tct-general-decomposition}
\end{align}
\begin{align} \textstyle
  N_0\approx_\lambda N_1,
  \qquad
  \min_{b\in\{0,1\}}N_b(\top)\ge q.
  \label{eq:tct-general-target-conditions}
\end{align}
We say the set $\top$ is a $q$-target of a randomized mechanism $\mathcal A$
with parameters $(\lambda,\kappa)$ if this condition holds for
$(\mathcal A(D),\mathcal A(D'))$ for every neighboring pair $D\sim D'$.
\end{definition}

Sampling from $C$ does not require knowing the bit $b$ selecting the input. Sampling from $N_b$ is a $(\lambda,0)$-DP access to the input and hits the target with probability at least $q$.
The unrestricted component $F_b$ contributes to the exceptional probability $\kappa$. 
This simulation functions primarily as an analysis device and need not be explicitly implemented by the release mechanism.

%% file: 3-related-works.tex
\section{Heavy Node Classification}\label{sec:literature}

Prior work relevant to the private heavy node classification problem has fallen into two categories: histogram-based and threshold-testing.  
Part of our contribution is to unite these previously separate approaches. 

\noindent\textbf{Histogram-based counting}.
The  naive \emph{full-tree} baseline yields additive errors in $\Omega(\operatorname{poly}({h}))$ for heavy node classification. 
The naive Gaussian mechanism~\cite{dwork:roth:14} publishes a histogram of the node counts at every level in the tree, and labels a node heavy exactly when its noisy count is at least $\tau$. 
For a hierarchy of height $h$, the $\mathcal{L}_2$ sensitivity is $\sqrt{h}$.
Adding independent Gaussian noise of standard deviation $O(\sqrt{h}/\eps)$ to each node count results in an additive error margin of $O_{\eps,\delta}(h)$ for classification (see Appendix~\ref{sec:related-naive-baselines}). 

Lebeda et al.~\cite{lumberjack2026} consider more sophisticated histogram approach. Their algorithm, named Lumberjack, queries middle layers and recurses down the tree if the noisy count is heavy, and up the tree if it is light. 
This is conceptually equivalent to performing (parallel) binary searches on a tree.
The result is to publish a collection of histograms of node counts with bins of varying widths. 
Each leaf is counted towards $O(\log h)$ histograms, resulting in $O(\sqrt{\log h})$ error under Gaussian noise. 
For $0<\eps<1$, \cite[Corollaries 17--18]{lumberjack2026} Lumberjack achieves absolute additive error $E_{\mathrm{LJ}}$, as
\begin{align}
\textstyle
 E_{\mathrm{LJ}}
 &=O\left(\frac{1}{\varepsilon}\sqrt{\log(1/\delta)\log h\log\log h}\right).
\label{eq:ljerror}
\end{align}
This error margin is deterministic; namely, it holds with probability $1$. 

\noindent\textbf{Threshold-testing.}
The second line of work uses SVT-like mechanisms to test which nodes' weight is above the threshold, by traversing the hierarchy. 
\cite[Algorithm 1]{ghazi2023} walks the tree from the deepest level toward the root, maintaining the unclassified nodes $S_i$ at each level $i$.
It uses an SVT procedure to find the (disjoint) nodes in a level that are above the (noisy) threshold. 
Heavy nodes and all their ancestors are removed, and the search proceeds to higher levels. 
The number of positive answers $c$ is bounded by $O(M/\tau)$, where $M$ is an upper bound on the total count $n$. 
The resulting additive error margin $E_G$ is given by 
\begin{align}
\textstyle
E_G= \frac{2M}{\eps\tau}\max\left\{48\ln\frac{2h}{\beta},\;
 6\ln\left(1+\frac{e^{\eps/2}-1}{\delta}\right)\right\}.
\end{align}

A related but more challenging task is to find hierarchical heavy hitters. 
The (corrected) algorithm of Biswas et al.~\cite{biswas2026} also utilizes the idea of SVT and bottom-up traversal. They obtain an additive error margin of
\begin{align}
\hspace{-1em} \textstyle
 E_B=O\!\left(
 \frac{\log(1/\delta)}{\eps}
  \log\frac{\log(1/\delta)}{\eps}\left[\log(1/\delta)+\log(h/\beta)\right]
\right).
\end{align}
In summary,~\cite{ghazi2023,biswas2026} give threshold-based algorithms for heavy node classification with absolute additive error margins in $\Omega_{\varepsilon,\delta}(\log h)$ while the histogram-based method of~\cite{lumberjack2026} achieves error of $\Omega_{\varepsilon,\delta}(\sqrt{\log h})$. 
In this work, we converge these paradigms in order to reduce the error for node classification to $O_{\varepsilon,\delta}(\log\log h)$.

Our problem is also related other fundamental tasks in differential privacy, such as heavy hitters and counting for non-hierarchical data~\cite{zhu2020federated,blocki2023differentially,ldp-hh-wang,federated-hh-ldp,differentially-private-heavy-fa,wu2022heavyhitters,practical-ldp-hh}, histogram release~\cite{sandtdp,dp-sparse-linear,lebeda_et_al:LIPIcs.FORC.2025.23,korolova2009,dp-sparse-data,instance-mean}, quantile estimation and learning thresholds~\cite{DP_unbounded,pmlr-v139-gillenwater21a,alabi2023bounded,imola2025differentially,optimal-dp-thresholds,bnsv2015,DBLP:conf/icml/KaplanSS22,DBLP:conf/colt/KaplanLMNS20}, and counting queries on trees and multi-dimensional spaces~\cite{Xiao2012DPCubeDP,rectangle-queries,ghazi2023,zhang2016privtree,dp-spatial,hierarchical-spatial-ldp,chang2021oneround}. 
While some techniques could also be applicable to our setting, we aware of no algorithm that achieves a simultaneous error for all nodes with doubly-logarithmic dependence on tree height $h$ and no dependence on input database size $n$.

%% file: 4-algo.tex
\section{Our Algorithm}\label{sec:algorithm}

\subsection{Building Blocks}
\label{subsec:buildingblocks}
\vspace{0pt}
\noindent\textbf{Finite-horizon, one-hit TCT.}
We present the following specialization of the target-charging simulation of \cite{cohenlyu2023}: the interaction stops at its first target hit and makes at most $K$ calls.

\begin{lemma}[Finite-horizon one-hit TCT]
\label{lem:tct-one-hit}
Fix an integer $K\ge 1$ and parameters $\lambda>0$, $q\in(0,1]$, and $\kappa_0\in[0,1)$.
Let $\mathcal A$ be an algorithm that accesses its input database $D$ only through adaptive calls to randomized mechanisms.
At round $i$, $\mathcal A$ selects a mechanism $\mathcal M_i$ with finite output space $\mathcal Y_i$ and a target $\top_i\subseteq\mathcal Y_i$, and obtains $Y_i\sim\mathcal M_i(D)$ using fresh randomness.
The selections depend only on public information and previous responses.
For every round and every possible response history, assume that $\top_i$ is a $q$-target of $\mathcal M_i$ with parameters $(\lambda,\kappa_0)$ (Definition~\ref{def:tct-general-target}).
Algorithm $\mathcal A$ stops making calls immediately after its first target hit $Y_i\in\top_i$, or after $K$ calls if no hit occurs. Its output is any post-processing of the response sequence, including its length.
Then, for every integer $1\le r\le K$, $\mathcal A$ is
\begin{align}
\label{eq:finite-one-hit-tct}
  \left(
    r\lambda,\;
    K\kappa_0+\mathbf 1_{\{r<K\}}(1-q)^r
  \right)\text{-DP}.
\end{align}
\end{lemma}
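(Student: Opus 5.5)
We prove the bound by a coupling/simulation argument in the style of \cite{cohenlyu2023}. Fix neighbors $D,D'$ and write $P^{(i)}_0,P^{(i)}_1$ for the laws of $\mathcal M_i(D),\mathcal M_i(D')$ given the history. By Definition~\ref{def:tct-general-target}, each round's response can be generated in two stages. First draw an exceptional bit with probability $\kappa_0$; if it is set, sample from $F_b$. Otherwise, with probability $p$ sample from the input-independent $C$, and with probability $1-p$ sample from $N_b$.

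The first step removes the exceptional components. By a union bound over the at most $K$ rounds, the probability that any round uses $F_b$ is at most $K\kappa_0$. Since both executions can be coupled on these coins, conditioning on ``no exceptional draw'' changes the probability of any event by at most $K\kappa_0$ on each side. This contributes the $K\kappa_0$ term in~\eqref{eq:finite-one-hit-tct}. On the complement, each round is a mixture $pC+(1-p)N_b$ with $N_0\approx_\lambda N_1$ and $N_b(\top_i)\ge q$.

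Next we consider the simulated process, which makes the privacy accounting explicit. Draws from $C$ are free post-processing of public information and the history. Each draw from $N_b$ is a $\lambda$-DP access. We charge $\lambda$ only for $N_b$ draws. Each such draw hits the target with probability at least $q$, and a hit terminates the interaction.

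The number of charged draws is therefore stochastically dominated by a geometric count. Let $G$ be the number of $N_b$ calls. Then $\Pr[G>r]\le(1-q)^r$, because each of the first $r$ charged calls must miss. This bound holds conditionally on any history, since $q$ is uniform over rounds and histories. When $r\ge K$, or when $r=K$, at most $K$ calls occur at all, so $G\le r$ always holds and the tail term vanishes; this gives the indicator $\mathbf 1_{\{r<K\}}$. On the event $G\le r$, the view is a composition of at most $r$ $\lambda$-indistinguishable draws interleaved with input-independent steps.

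The delicate point is to turn this into a ratio bound. We will define a truncated simulator that, after $r$ charged draws, switches to an arbitrary fixed output distribution. Viewing the run as an adaptive composition of at most $r$ $(\lambda,0)$-DP mechanisms, together with a random number of free steps, the truncated transcript laws are $e^{r\lambda}$-close. This is pure-DP composition applied to the history density: each $N$-draw contributes a factor in $[e^{-\lambda},e^\lambda]$, and $C$ factors cancel. Truncation is invisible except on $\{G>r\}$, which has probability at most $(1-q)^r$ under $D$.

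The main obstacle is an asymmetry in the event $G>r$. Its probability must be controlled under the input whose probability appears on the left of~\eqref{eq:DP}. This is fine because the tail bound holds for either $b$. The probability on the right is simply lower bounded by the truncated version. Combining the three pieces, for every $S$ we obtain $\Pr[\mathcal A(D)\in S]\le e^{r\lambda}\Pr[\mathcal A(D')\in S]+K\kappa_0+\mathbf 1_{\{r<K\}}(1-q)^r$, and symmetrically with $D$ and $D'$ exchanged. Since the output is a post-processing of the transcript, the statement follows.
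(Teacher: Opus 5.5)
Your proof is essentially the paper's. It uses the same augmented transcript with exceptional, common and residual branches, charges $K\kappa_0$ by a union bound and $\mathbf{1}_{\{r<K\}}(1-q)^r$ because an $(r+1)$st residual draw forces the first $r$ to miss, bounds the likelihood ratio by $e^{r\lambda}$ on the good event $G$ (no exception, at most $r$ residual draws), and pays for $G^c$ only under the left-hand input. The one step to tighten is that literal conditioning on ``no exceptional draw'' renormalizes by an input-dependent factor, and the full law of a truncated simulator need not be dominated by the true law on $S$, so you should work with restricted measures, $\Pr_0[S]\le\Pr_0[\widetilde S\cap G]+\Pr_0[G^c]\le e^{r\lambda}\Pr_1[\widetilde S\cap G]+\Pr_0[G^c]\le e^{r\lambda}\Pr_1[S]+\Pr_0[G^c]$ with $\widetilde S$ the preimage of $S$ in the augmented transcript space, exactly as the paper does.
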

Here $r$ should be understood as describing the budget for draws from $N_b$---it is not an additional stopping rule for the interaction. 
In the pairwise simulation, each draw from $N_b$ costs $\lambda$ privacy budget in $\varepsilon$ and hits its target with conditional probability at least $q$.
Thus, more than $r$ draws requires the first $r$ to miss their targets, an event of probability at most $(1-q)^r$.
This overflow is impossible when $r=K$.
The term $K\kappa_0$ accounts for selecting the exceptional mass component in any of the calls.
The proof appears in Appendix~\ref{sec:app-proof-one-hit}.

\begin{algorithm}[t]
\caption{Count query $\GBT(c;a,b,R,\lambda)$}\label{alg:guard}
\begin{algorithmic}[1]
\Require Original node count $c\in\N$; public thresholds $a,b\in\mathbb{R}$ with $a<b$, guarded margin $R\ge1$ for a deterministic error, local privacy parameter $\lambda>0$
\If{$c\le a-R$} \Comment{deterministic output}
  \State \Return $\Lo$
\ElsIf{$c\ge b+R$} \Comment{deterministic output}
  \State \Return $\Hi$
\Else \Comment{draw fresh DP noise}
  \State Draw fresh independent $Z\sim\Lap(1/\lambda)$
  \If{$c+Z<a$}
    \State \Return $\Lo$
  \ElsIf{$c+Z>b$}
    \State \Return $\Hi$
  \Else
    \State \Return $\Bt$
  \EndIf
\EndIf
\end{algorithmic}
\end{algorithm}

\vspace{0pt}
\noindent\textbf{Count query.} Given threshold $\tau$, choose parameters $0<\lambda\le1$ and $R\ge1$, and define
\begin{align}\label{eq:primitiveparams}
\textstyle
\hspace{-1em} w=\frac1\lambda,\hspace{1em} a=\tau-w,\hspace{1em}
b=\tau+w,
% \hspace{1em} E=w+R.
\end{align}
Algorithm~\ref{alg:guard} is the building block for our algorithm. 
Given a (true) node count $c$ to inspect, we call the routine $\GBT(c;a,b,R,\lambda)$, and return $\Lo$ or $\Hi$, indicating whether the noisy count is smaller or larger than the public thresholds $a$ and $b$, respectively. If the returned value is $\Bt$, i.e., `between', that means the value is judged to be close to the thresholds $a$ and $b$. 
The semantics of a $\Bt$ response is that the node count is on the borderline: it is small enough to be treated as effectively light ($\Lo$), and simultaneously large enough to be treated as effectively heavy ($\Hi$).  
Our algorithm will use this duality to treat any discovered $\Bt$ labeled nodes as marking the boundary between heavy ancestors and light descendants. 
Here, the parameter $R$ can be thought of as the maximum deterministic error margin, which is used to avoid a union bound over all queries. 
$\lambda$ can be viewed as the local privacy parameter (not the final $\eps$). 

\FloatBarrier
\noindent\textbf{Public traversal ordering.} On any tree $T$ of height $h$, we call \textsc{DepO}$(0,h-1)$ (Algorithm~\ref{alg:public-order}) to determine the traversal ordering for the different levels of the public tree. 
It can be understood in terms of binary search: first the middle layer is visited, then recursively the lower layers, followed by the higher layers. 
This ordering is public; it does not rely on information from input database. For example, on $h=15$, applying the definition gives the ordering $[7, 11, 13, 14, 12, 9, 10, 8, 3, 5, 6, 4, 1, 2, 0]$.

\begin{algorithm}[t]
\caption{Depth Traversal Ordering \textsc{DepO}$(i,j)$}\label{alg:public-order}
\begin{algorithmic}[1]
 \If{$i>j$} \State \Return the empty list \EndIf
 \State $m\gets i+\lfloor(j-i+1)/2\rfloor$
 \State \Return $[m]\,\Vert\,\Call{DepO}{m+1,j}\,\Vert\,\Call{DepO}{i,m-1}$
\end{algorithmic}
\end{algorithm}

\vspace{0pt}
\noindent\textbf{Label propagation.} 
For node $u$ with count $c$, if $\GBT(c;a,b,R,\lambda)$ returns $\Lo$, we classify all its descendants as well as $u$ as $\Lo$, by monotonicity. 
Similarly, if $\GBT(c;a,b,R,\lambda)$ returns $\Hi$, we classify all its ancestors as well as $u$ as $\Hi$; if $\GBT(c;a,b,R,\lambda)$ returns $\Bt$, we leverage its boundary nature to classify all its descendants as well as $u$ as $\Lo$ and all its ancestors (excluding $u$) as $\Hi$. 
We refer to these routines as $\mathsf{MarkUp}(u)$ and $\mathsf{MarkDown}(u)$ for propagating $\Hi$ and $\Lo$ labels, respectively. 
Other nodes in the tree that don't overlap with $u$'s records are unaffected by the marking procedure, since their counts are not directly comparable with $u$'s. For completeness, 
Algorithm~\ref{alg:marking-routines} in Appendix~\ref{sec:app-marking} formalizes this simple rule of label propagation, originating from node $u$.

\subsection{The \textsc{BetweenCut} Algorithm}

\begin{algorithm}[t]
\caption{Node classifier $\mathsf{BetweenCut}(T,D,\tau,\lambda,R)$}\label{alg:tree}
\begin{algorithmic}[1]
\Require Public tree $T$ of height $h$; private leaf dataset $D$, public parameters $\tau,\lambda,R$.
\State $w\gets1/\lambda$; $a\gets\tau-w$; $b\gets\tau+w$
\State Compute all original counts $c_u(D)$
\State Initialize $\mathrm{label}[u]\gets\Unk$ for all $u\in V$
\For{$d$ in \Call{DepO}{$0,h-1$}}
 \For{nodes $u$ of depth $d$
 % , in the fixed public left-to-right order
 }
  \If{$\mathrm{label}[u]\ne\Unk$} \State \textbf{continue} \EndIf
  \State $s\gets\GBT(c_u(D);a,b,R,\lambda)$
  \If{$s=\Lo$}
   \State \Call{MarkDown}{$u$}
  \ElsIf{$s=\Hi$}
   \State \Call{MarkUp}{$u$}
  \Else % \Comment{\parbox[t]{0.58\linewidth}{$s=\Bt$: remove every root-to-leaf path through $u$}}
   \State \Call{MarkUp}{$\parent(u)$}
   \State \Call{MarkDown}{$u$}
  \EndIf
 \EndFor
\EndFor
\State \Return $\Hhat=\{u\in V:\mathrm{label}[u]=\Hi\}$
% \Statex \Comment{A public wrapper restricts this set to original nodes when padding was used.}
\end{algorithmic}
\end{algorithm}
We present $\mathsf{BetweenCut}$, listed as in Algorithm~\ref{alg:tree}. 
The overarching idea is to turn a node $u$'s answer near the given threshold $\tau$ (in particular, a `between' answer from Algorithm~\ref{alg:guard}) into a stopping rule that resolves all affected paths through that node, by propagating labels to $u$'s ancestors and descendants (using Algorithm~\ref{alg:marking-routines}). 
Overall, on each root-to-leaf path, our algorithm makes at most $\lceil\log_2 (h+1)\rceil$ query calls of $\GBT(c_u(D);a,b,R,\lambda)$ through some node $u$ in that path. 
This bound follows from the midpoint-first traversal and label propagation. Within any recursive interval of levels, the middle level is processed first. Consider node $u$ on path $P_z$. A query at $u$ labels its descendants Light for an $\Lo$ response, or its ancestors Heavy for an $\Hi$ response; a $\Bt$ response resolves the entire path. After the middle level is processed, all remaining queries on $P_z$ within that interval lie in at most one of its two subintervals.\looseness=-1

\noindent\textbf{Remark on consistency and termination.}
After each node is processed, the Heavy set (labeled as $\Hi$) is ancestor-closed and the Light set (labeled as $\Lo$) is descendant-closed. The propagation rule of labels also cannot conflict with existing labels; hence, each node is labeled at most once and its label remains unchanged afterwards. Every node is eventually labeled, so the algorithm terminates with a complete classification. We formalize and prove these properties in
Appendix~\ref{app:algorithm-properties}. 

The next two lemmas establish the query bound and stopping property crucial to the privacy analysis.

\begin{lemma}[Logarithmically many queries on each root-to-leaf path]
\label{lem:path-query-bound}
For every database $D$ and every execution of Algorithm~\ref{alg:tree}, let $\mathcal Q$ be the set of nodes at which Algorithm~\ref{alg:guard} is invoked.
Then every root-to-leaf path $P$ of the public tree satisfies
\begin{align}
    |\mathcal Q\cap P|
    \le \lceil\log_2(h+1)\rceil.
\end{align}
\end{lemma}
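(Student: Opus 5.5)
We fix a root-to-leaf path $P$ and argue by strong induction on the length of the depth interval handled by \textsc{DepO}. The ordering \textsc{DepO}$(0,h-1)$ is built recursively: for an interval $[i,j]$ it lists the midpoint $m$ first, then \textsc{DepO}$(m+1,j)$, then \textsc{DepO}$(i,m-1)$. Along $P$ there is exactly one node at each depth in $[0,\mathrm{depth}(\text{leaf})]$; for shorter paths we treat missing depths as absent, which can only reduce the count. Let $f(\ell)$ denote the maximum number of queries on $P$ at depths in an interval of length $\ell=j-i+1$ during the processing of \textsc{DepO}$(i,j)$. The claim is that
\begin{align*}
f(\ell)\le\lceil\log_2(\ell+1)\rceil .
\end{align*}
The base case $f(0)=0$ is immediate.

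\textbf{Key step.} We analyse what happens after the node $u\in P$ at depth $m$ is processed. It is either already labeled, in which case it is skipped, or it is queried.
\begin{itemize}
\item If $u$ is queried and the outcome is $\Lo$, then \textsf{MarkDown}$(u)$ labels every node of $P$ below $u$. No later query on $P$ occurs at depths $m+1,\dots,j$, so the remaining queries lie in $[i,m-1]$.
\item If the outcome is $\Hi$, then \textsf{MarkUp}$(u)$ labels every node of $P$ above $u$, so the remaining queries lie in $[m+1,j]$.
\item If the outcome is $\Bt$, all of $P$ becomes labeled and no further queries occur on $P$.
\item If $u$ was already labeled when visited, its label was set by propagation from an earlier query. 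Since $u$ is the first depth visited within this interval, that query lies at a depth outside $[i,j]$, or was itself in this interval's subproblem ordering.
\end{itemize}

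\textbf{Handling a pre-labeled $u$.} Here we use the consistency remark (Appendix~\ref{app:algorithm-properties}). The Heavy set is ancestor-closed and the Light set is descendant-closed. Hence if $u$ is labeled $\Hi$, every ancestor of $u$ on $P$ is labeled, so again only $[m+1,j]$ can contain queries. Symmetrically, a label $\Lo$ on $u$ leaves only $[i,m-1]$. In every case at most one of the two subintervals contains further queries on $P$, and
\begin{align*}
f(\ell)\le 1+\max\{f(\lfloor \ell/2\rfloor),\,f(\lceil \ell/2\rceil-1)\}.
\end{align*}
This uses the fact that the two subintervals have sizes $j-m=\lceil\ell/2\rceil-1$ and $m-i=\lfloor\ell/2\rfloor$.

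\textbf{Closing the recursion.} Using $\lfloor\ell/2\rfloor+1\le(\ell+1)/2$ rounded up, we get
\begin{align*}
1+\lceil\log_2(\lfloor\ell/2\rfloor+1)\rceil\le\lceil\log_2(\ell+1)\rceil .
\end{align*}
Applying this at $\ell=h$ gives the lemma.

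\textbf{Main obstacle.} The delicate point is justifying that labels set while processing one subinterval, or earlier intervals, never force queries into both subintervals. Concretely, we must show that once a node of $P$ at depth $m$ is labeled, every node of $P$ on one side of it is labeled too. This is exactly the closure property of the Heavy and Light sets, so we will invoke that invariant rather than reprove it.
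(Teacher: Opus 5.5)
Your proposal is correct and follows the paper's proof essentially step for step: induction on the number of levels in a \textsc{DepO} interval, the observation that once the midpoint level is processed the closure properties of Lemma~\ref{lem:consistency-termination} confine all remaining queries on $P$ to one subinterval of at most $\lfloor \ell/2\rfloor$ levels, and the identity $1+\lceil\log_2(\lfloor \ell/2\rfloor+1)\rceil=\lceil\log_2(\ell+1)\rceil$. Two small points to tidy: you also need the permanence part of that lemma, not just closure, so that the labels present after the midpoint stay in place through both recursive calls; and the clause that a pre-label ``was itself in this interval's subproblem ordering'' is spurious, because the midpoint is the first depth processed in its interval.
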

% In particular, Lemma~\ref{lem:path-query-bound} applies to $P_z$ even
% when $z\notin D$, as needed when comparing $D$ with
% $D\uplus\{z\}$.

\begin{lemma}[At most one middle target on each root-to-leaf path]
\label{lem:one-middle-target}
For every execution of Algorithm~\ref{alg:tree},
a $\Bt$ response
at a node $u$ prevents all subsequent queries on every
root-to-leaf path containing $u$.
Consequently, if $\mathcal{B}$ is the set of queried nodes
whose responses are $\Bt$, then every root-to-leaf
path $P$ satisfies $|\mathcal{B}\cap P|\le 1$.
\end{lemma}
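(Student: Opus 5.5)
The plan is to argue directly from the labeling dynamics of Algorithm~\ref{alg:tree} together with the monotone-label properties stated in the consistency remark. Those properties say that each node is labeled at most once, the label never changes afterwards, the Heavy set is ancestor-closed and the Light set is descendant-closed. The key observation is that Algorithm~\ref{alg:guard} is invoked at a node $u$ only if $\mathrm{label}[u]=\Unk$ at the moment $u$ is reached in the loop. So it suffices to show that right after a $\Bt$ response at $u$, every node on every root-to-leaf path through $u$ carries a label.

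First, fix a node $u$ that receives $\Bt$, and a root-to-leaf path $P$ with $u\in P$. Every node $v\in P$ is comparable with $u$: either $v\in\Anc(u)\setminus\{u\}$ or $v\in\Desc(u)$. In the $\Bt$ branch the algorithm calls $\mathsf{MarkUp}(\parent(u))$, which (per Algorithm~\ref{alg:marking-routines}) labels $\parent(u)$ and all its ancestors $\Hi$. This is exactly $\Anc(u)\setminus\{u\}$; if $u$ is the root, this set is empty and the call is vacuous. It then calls $\mathsf{MarkDown}(u)$, which labels all of $\Desc(u)$ as $\Lo$.

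There is one subtlety. Some of these nodes may already have been labeled earlier, and the marking routines might skip them or might not overwrite them. Either way, the consistency remark guarantees that after these calls every node of $\Anc(u)\cup\Desc(u)$ has a non-$\Unk$ label, and that labels persist. Hence every $v\in P$ processed later in the loop hits the \textbf{continue} branch, so no further query is made on $P$.

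For the consequence, suppose two distinct nodes $u,u'\in\mathcal B\cap P$, and say $u$ is queried first. Then $u'$ is queried after the $\Bt$ response at $u$, on a path containing $u$, which contradicts the first part. Therefore $|\mathcal B\cap P|\le 1$.

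The main obstacle is the non-conflict step. We must be sure that $\mathsf{MarkUp}(\parent(u))$ does not try to relabel a node already marked $\Lo$ (and symmetrically for $\mathsf{MarkDown}$), and that no node ends up $\Unk$ because of early termination inside the marking routines. I would handle this by invoking the closure invariants from Appendix~\ref{app:algorithm-properties}.

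Suppose some proper ancestor $v$ of $u$ were labeled $\Lo$. Then descendant-closure would force $u$ to be $\Lo$, contradicting that $u$ was unlabeled when queried. Symmetrically, suppose some descendant of $u$ were labeled $\Hi$. Then ancestor-closure would force $u$ to be $\Hi$, again a contradiction.

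So all previously labeled nodes on $\Anc(u)\setminus\{u\}$ are already $\Hi$, and all previously labeled nodes in $\Desc(u)$ are already $\Lo$. The marking calls are therefore consistent, and they leave every node on $P$ labeled.
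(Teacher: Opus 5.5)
Your proposal is correct and follows essentially the same route as the paper. Like the paper, you show that after the $\Bt$ branch every proper ancestor of $u$ is Heavy and all of $\Desc(u)$ is Light, use permanence of labels and the skip rule to rule out later queries on any path through $u$, and then obtain $|\mathcal B\cap P|\le 1$ by contradiction on whichever node is queried first. Your inline non-conflict argument via the closure invariants is exactly what the paper's Lemma~\ref{lem:consistency-termination} supplies. The same invariants also settle the early-stopping concern you raise: \textsc{MarkUp} halts only at a Heavy node whose ancestors are all Heavy, and \textsc{MarkDown} halts only at a Light node whose subtree is all Light.
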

If we can prove that $\{\Bt\}$ is a $q$-target for Algorithm~\ref{alg:tree} (recall Def.~\ref{def:tct-general-target}), then Lemmas~\ref{lem:path-query-bound} and~\ref{lem:one-middle-target} enable us to utilize Lemma~\ref{lem:tct-one-hit}---the bespoke TCT analysis with at most $\lceil \log_2(h+1) \rceil$ calls and at most $1$ target hit---to analyze the privacy of Algorithm~\ref{alg:tree}.

%% file: 6-analysis-privacy-trade-off.tex
\section{Privacy and Utility}\label{sec:analysis}

In this section, we present the privacy and utility guarantees of Algorithm~\ref{alg:tree}. The formal proofs are deferred to Appendix~\ref{sec:app-transcript-privacy} and~\ref{sec:app-proof-utility}.
Studying Algorithm~\ref{alg:tree} and its properties shown above, it is clear that it makes only $O(\log h)$ queries that depend on any input item. 
However, each such query materializes a noisy count, so it might appear that the privacy cost should behave like $O(\log h)$, or maybe $O(\sqrt{\log h})$ at best, due to composition bounds. 
This would give a result similar to the Lumberjack algorithm. 
Our approach to more tightly bounding this cost is based on the careful choice of query that is posed.  
The insight is that it is only the ``boundary inputs'' that incur a genuine privacy cost. 
Essentially, if the count of a node is far below the threshold $\tau$, then we can reveal this fact without any impact on privacy; symmetrically, if the count is sufficiently far above the threshold, revealing this also does not cost privacy. 
It is only the inputs that reside close to $\tau$ that risk private information. 
However, for such inputs, we are most likely to receive a $\Bt$ output for them, which stems the leakage of further private information. 
In particular, we can bound the probability that boundary inputs along a path repeatedly fail to yield a $\Bt$ output. 
This fits neatly into the specialized version of TCT that we introduced in Section~\ref{subsec:buildingblocks}.

\subsection{Privacy Analysis}

For Algorithm~\ref{alg:guard}, we instantiate the $q$-target decomposition (recall Definition~\ref{def:tct-general-target}) for target outcome $\{\Bt\}$.
Informally, we decompose the probability that our algorithm outputs $\Bt$ into component probabilities. 

\begin{definition}[Target decomposition for outcome $\{\Bt\}$]\label{def:target}
The target $\{\Bt\}$ is a $q$-target for a pair $(P_0,P_1)$ with parameters $(\lambda,\kappa)$ if there is $p\in[0,1]$ and probability distributions $C,N_0,N_1,F_0,F_1$ s.t., for $b\in\{0,1\}$,
\begin{align}\label{eq:targetdecomp}
 P_b=(1-\kappa)\bigl(pC+(1-p)N_b\bigr)+\kappa F_b,
\end{align}
where $N_0\approx_\lambda N_1$ and $\min_b N_b(\Bt)\ge q$.
\end{definition}
Note that the common component $C$ costs no privacy for the fixed pair. The residual component $N_b$ is a pure-private access and has a substantial chance to hit the target (i.e., with probability $\ge q$). The component $F_b$ is paid for in final $\delta$. 
This decomposition is for analysis purpose only; it is never computed by the released mechanism and need not be common to different neighboring pairs. Next, we bound the parameters, exceptional mass $\kappa$ and target hit probability $q$.

\vspace{0pt}
\noindent\textbf{Target decomposition under pure-DP.} We warm up with a pure-DP version of Algorithm~\ref{alg:guard}. Let $\BT(c; a, b, \lambda)$ denote \cref{alg:guard} without either deterministic guard, listed as in Algorithm~\ref{alg:no-guard}. Let $P_c$ be its law on $\{\Lo,\Bt,\Hi\}$. Applying the ``Effectiveness of the between target'' Lemma~\cite[Lemma G.1]{cohenlyu2023}, we obtain the following. 
For completeness, we provide the full self-contained proof in Appendix~\ref{sec:app-transcript-privacy}.

\begin{algorithm}[t]
\caption{No-guard count query $\BT(c;a,b,\lambda)$}\label{alg:no-guard}
\begin{algorithmic}[1]
\Require Original node count $c\in\N$; public thresholds $a,b\in\mathbb{R}$ with $a<b$, local privacy parameter $\lambda>0$
  \State Draw fresh independent $Z\sim\Lap(1/\lambda)$
  \If{$c+Z<a$}
    \State \Return $\Lo$
  \ElsIf{$c+Z>b$}
    \State \Return $\Hi$
  \Else
    \State \Return $\Bt$
  \EndIf
\end{algorithmic}
\end{algorithm}

\begin{lemma}[Target decomposition for $\BT$]\label{lem:BTtarget}
Fix $\lambda>0$ and public thresholds $a<b$ satisfying
$b-a=2/\lambda$. For each count $c\in\mathbb Z_{\ge0}$, let $P_c$
denote the output distribution of the unguarded mechanism
$\operatorname{BT}(c;a,b,\lambda)$ on
$\mathcal Y:=\{\mathsf L,\mathsf B,\mathsf H\}$. For any $c_0,c_1\in\mathbb Z_{\ge0}$ with $|c_1-c_0|\le1$,
the singleton target $\{\mathsf B\}$ is a $\qstar$-target for the pair
$(P_{c_0},P_{c_1})$ with parameters $(\lambda,0)$, where
\begin{align}\label{eq:qexact}
%\textstyle
  \qstar
  =
  \frac{\bigl(1-e^{-(b-a)\lambda}\bigr)(e^\lambda-1)}
       {e^{2\lambda}-1}
  =
  \frac{1-e^{-2}}{1+e^\lambda}.
\end{align}
Specifically, there exist $p\in[0,1]$ and probability distributions
$C,N_0,N_1$ on $\mathcal Y$ such that
\begin{align}
  &P_{c_i}= p C + (1 - p) N_i,
  \qquad i\in\{0,1\},\\
&\text{and }\quad
  N_0\approx_\lambda N_1,
  \qquad
  \min_{i\in\{0,1\}}N_i(\mathsf B)\ge\qstar.
\end{align}
% The witnesses $p,C,N_0,N_1$ may depend on the pair $(c_0,c_1)$.
% If, additionally, $0<\lambda\le1$, then $\qstar>1/5$.
\end{lemma}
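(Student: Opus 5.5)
Throughout, write $P_0:=P_{c_0}$ and $P_1:=P_{c_1}$. The plan is to build the decomposition explicitly, working with unnormalized residual measures. If $c_0=c_1$, take $p=1$, $C=P_0$, and $N_0=N_1=\delta_{\mathsf B}$; all conditions hold trivially. Otherwise, by symmetry of the roles, assume $c_1=c_0+1$.

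Set $\Delta_L:=P_0(\mathsf L)-P_1(\mathsf L)\ge0$ and $\Delta_H:=P_1(\mathsf H)-P_0(\mathsf H)\ge0$. Both are nonnegative because shifting the count up by one moves mass from $\mathsf L$ toward $\mathsf H$. A decomposition $P_i=pC+(1-p)N_i$ is equivalent to finding nonnegative measures $\mu_i:=(1-p)N_i$ with $\mu_i\le P_i$ pointwise and $\mu_0-\mu_1=P_0-P_1$. Given such measures, set $1-p:=\mu_0(\mathcal Y)=\mu_1(\mathcal Y)$ and $C:=(P_i-\mu_i)/p$; this $C$ is the same for $i=0,1$ precisely because $\mu_0-\mu_1=P_0-P_1$. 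I would choose
\begin{align*}
\mu_0(\mathsf L)&=\tfrac{e^\lambda}{e^\lambda-1}\Delta_L, &\mu_1(\mathsf L)&=\tfrac{1}{e^\lambda-1}\Delta_L,\\
\mu_0(\mathsf H)&=\tfrac{1}{e^\lambda-1}\Delta_H, &\mu_1(\mathsf H)&=\tfrac{e^\lambda}{e^\lambda-1}\Delta_H,\\
\mu_i(\mathsf B)&=P_i(\mathsf B),
\end{align*}
so the residual puts all of the $\mathsf B$ mass into $N_i$.

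Three checks follow.

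First, $\mu_0-\mu_1=P_0-P_1$ on each of the three outcomes. On $\mathsf L$ and $\mathsf H$ this is immediate. On $\mathsf B$ it holds because the three differences of $P_0-P_1$ sum to zero.

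Second, $\mu_i\le P_i$. For example, $\mu_0(\mathsf L)\le P_0(\mathsf L)$ is equivalent to $P_0(\mathsf L)\le e^\lambda P_1(\mathsf L)$, and $\mu_1(\mathsf L)\le P_1(\mathsf L)$ is equivalent to $\Delta_L\le(e^\lambda-1)P_1(\mathsf L)$, which is the same inequality. This inequality is the standard Laplace ratio bound $\Pr[Z<t]\le e^\lambda\Pr[Z<t-1]$. The $\mathsf H$ coordinate is handled symmetrically.

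Third, $N_0\approx_\lambda N_1$. Since $N_0,N_1$ share the normalizer $1-p$, it suffices to bound pointwise ratios of $\mu_0,\mu_1$. These ratios are exactly $e^{\pm\lambda}$ on $\mathsf L$ and $\mathsf H$ by construction. On $\mathsf B$ the ratio is $P_0(\mathsf B)/P_1(\mathsf B)\in[e^{-\lambda},e^\lambda]$, because $\{\mathsf B\}$ corresponds to $c+Z$ lying in a fixed interval, and shifting a Laplace$(1/\lambda)$ density by $1$ changes it pointwise by at most $e^{\lambda}$. Ratio bounds on singletons extend to all subsets of the finite space $\mathcal Y$.

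The remaining and main step is the target bound
\begin{align*}
N_i(\mathsf B)=\frac{P_i(\mathsf B)}{\mu_i(\mathsf L)+\mu_i(\mathsf H)+P_i(\mathsf B)}\ge \qstar .
\end{align*}
I would write $P_c(\cdot)$ in closed form using the Laplace CDF, with $b-a=2/\lambda$. I would then split into cases according to the position of $c_0$ relative to $a-1$, $[a,b-1]$, and $b$, i.e.\ whether the shift window $[c_0,c_0+1]$ lies below $a$, inside $[a,b]$, above $b$, or straddles a threshold. In each case I would show that the quantity $\mu_i(\mathsf L)+\mu_i(\mathsf H)$ is at most $\frac{1+e^\lambda}{1-e^{-2}}P_i(\mathsf B)-P_i(\mathsf B)$.

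As a sanity check, the regime $c_0\ll a$ already gives the claimed constant. There $\Delta_L=\tfrac12e^{-\lambda(a-c_0)}(e^\lambda-1)$, which makes $\mu_0(\mathsf L)=\tfrac12e^{\lambda}e^{-\lambda(a-c_0)}$. Similarly $\mu_0(\mathsf H)=\tfrac12e^{-\lambda(b-c_0)}=\tfrac12 e^{-2}e^{-\lambda(a-c_0)}$ and $P_0(\mathsf B)=\tfrac12(1-e^{-2})e^{-\lambda(a-c_0)}$. The ratio is then exactly $(1-e^{-2})/(1+e^\lambda)$. So the tail cases should be extremal, with the inside and straddling cases giving larger values.

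I expect this monotonicity/case analysis to be the main obstacle. I would handle it by viewing $N_i(\mathsf B)$ as a function of the continuous shift $c_0$ and checking that it is minimized in the outer regimes, where every term is a pure exponential in $c_0$ and the ratio is constant.
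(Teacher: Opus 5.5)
Your decomposition is the paper's: your residual measures $\mu_i$ coincide with the paper's $(\ell_i-\alpha_{\mathsf L},\,m_i,\,u_i-\alpha_{\mathsf H})$ (e.g.\ $\ell_0-\alpha_{\mathsf L}=\frac{e^\lambda}{e^\lambda-1}\Delta_L$), and your three checks correctly reproduce its Steps~2--4. The gap is the target bound $N_i(\mathsf B)\ge q_\lambda$, which is the only substantive part of the lemma. You verify it only in the regime $c_1\le a$; your computation there is correct and shows the constant is attained. For the interior and straddling configurations you only assert that the values are larger, and defer this to a monotonicity check you never perform. That step is a conjecture, not a proof, and a literal monotonicity argument would fail. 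When $a\le c_0<c_1\le b$ your construction gives $p=0$, so $N_0(\mathsf B)=P_0(\mathsf B)$. For $\lambda<1$ this reaches $1-e^{-1}$ when $c_0$ is at the midpoint of $[a,b]$, which is above both tail values $q_\lambda$ and $e^\lambda q_\lambda$. The straddling cases also mix the two branches of the Laplace CDF, so casework on the joint position of $(c_0,c_1)$ gets messy.

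The missing idea is that you never need to locate the pair jointly. Bound the common normalizer $\rho:=1-p$ by discarding one residual coordinate and applying the Laplace shift bound to a complementary event. Write $\ell_i:=P_i(\mathsf L)$, $u_i:=P_i(\mathsf H)$, $\xi:=e^\lambda$. Your check~2 gives $\mu_0(\mathsf H)\le u_0$, so $\rho\le\mu_0(\mathsf L)+1-\ell_0$. The shift bound for the event $\{c+Z\ge a\}$ gives $1-\ell_1\le\xi(1-\ell_0)$, i.e.\ $\Delta_L\le(\xi-1)(1-\ell_0)$. Hence $\mu_0(\mathsf L)\le\xi(1-\ell_0)$ and
\[
\rho\le(1+\xi)(1-\ell_0)\le(1+\xi)(1-\ell_i),\qquad i\in\{0,1\}.
\]
Symmetrically, $\mu_1(\mathsf L)\le\ell_1$ and $1-u_0\le\xi(1-u_1)$ give $\rho\le(1+\xi)(1-u_1)\le(1+\xi)(1-u_i)$. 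Now split into cases on each $c_i$ alone.
\begin{itemize}
\item If $c_i\le a$, both thresholds lie in the exponential upper tail of $c_i+Z$, so $P_i(\mathsf B)=(1-e^{-2})(1-\ell_i)$ exactly.
\item If $c_i\ge b$, likewise $P_i(\mathsf B)=(1-e^{-2})(1-u_i)$.
\item If $a<c_i<b$, set $x:=e^{-\lambda(c_i-a)}$ and $y:=e^{-\lambda(b-c_i)}$. Then $xy=e^{-2}$ and $(1-x)(1-y)\ge0$ yield $P_i(\mathsf B)=1-\frac{x+y}{2}\ge\frac{1-e^{-2}}{2}$, which together with the trivial $\rho\le1$ suffices.
\end{itemize}
In every case $N_i(\mathsf B)=P_i(\mathsf B)/\rho\ge(1-e^{-2})/(1+\xi)=q_\lambda$. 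Straddling pairs need no separate treatment, because the bounds on $\rho$ hold in every configuration.
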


\vspace{0pt}
\noindent\textbf{Target decomposition of Algorithm~\ref{alg:guard}.} 
We now analyze the full version of the algorithm which includes the deterministic outputs for inputs that are truly far from the threshold.  
This requires an approximate DP analysis. 
Restoring these deterministic guards recovers Algorithm~\ref{alg:guard}.

\begin{lemma}[Target decomposition for $\GBT$]\label{lem:guards}
Fix public parameters $0<\lambda\le1$, $R\ge1$, and $b-a=2/\lambda$.
For each count $c\in\mathbb Z_{\ge0}$, let $Q_c$ denote the output
distribution of the guarded mechanism $\GBT(c;a,b,R,\lambda)$ on
$\mathcal Y:=\{\mathsf L,\mathsf B,\mathsf H\}$.

For any $c_0,c_1\in\mathbb Z_{\ge0}$ with $|c_1-c_0|\le1$, the singleton target $\{\mathsf B\}$ is a $\qstar$-target for the pair $(Q_{c_0},Q_{c_1})$ with parameters $(\lambda,\kappa)$ for some $\kappa\in[0,\delta_0]$, where $\qstar$ is defined in Eq.~\eqref{eq:qexact} and
\begin{align}\label{eq:localdelta}
\textstyle
  \delta_0=\frac12e^{-\lambda(R-1)}.
\end{align}
Specifically, there exist $\kappa\in[0,\delta_0]$, $p\in[0,1]$,
and probability distributions $C,N_0,N_1,F_0,F_1$ on $\mathcal Y$
such that
\begin{align}
\hspace{-1em}  Q_{c_i}
  =
  (1-\kappa)\bigl(pC+(1-p)N_i\bigr)+\kappa F_i, i\in\{0,1\},  
\end{align}
\begin{align}\textstyle
\text{ and }  N_0\approx_\lambda N_1,
  \hspace{1em}
  \min_{i\in\{0,1\}}N_i(\mathsf B)\ge\qstar.
\end{align}

\end{lemma}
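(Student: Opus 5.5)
We reduce to Lemma~\ref{lem:BTtarget} by viewing $Q_c$ as a small perturbation of $P_c$. By symmetry assume $c_1=c_0+1$ (the case $c_0=c_1$ is trivial with $p=1$, $C=P_{c_0}$, $\kappa=0$). We split into cases according to where the pair sits relative to the guards.

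\textbf{Case 1: both counts unguarded.} Suppose $a-R<c_0$ and $c_1<b+R$. Then $Q_{c_i}=P_{c_i}$ for both $i$. We take $\kappa=0$ and the decomposition from Lemma~\ref{lem:BTtarget}.

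\textbf{Case 2: both counts guarded on the same side.} If both satisfy $c_i\le a-R$, or both satisfy $c_i\ge b+R$, then $Q_{c_0}=Q_{c_1}$ is a point mass. We take $p=1$, $C=Q_{c_0}$, $\kappa=0$, and set $N_0=N_1=P_{c_0}$ only to meet the formal requirement $N_i(\Bt)\ge\qstar$ (they carry zero weight). Both counts cannot be guarded on opposite sides, since $b-a=2/\lambda\ge2>1-2R$.

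\textbf{Case 3: exactly one count guarded.} Say $c_0\le a-R<c_1=c_0+1$, so $c_0\in(a-R-1,a-R]$. The $\Hi$-side case is symmetric. Here $Q_{c_1}=P_{c_1}$, while $Q_{c_0}=\delta_{\Lo}$ differs from $P_{c_0}$ only by moving the mass $\kappa:=P_{c_0}(\{\Bt,\Hi\})=\Pr[c_0+Z\ge a]=\tfrac12e^{-\lambda(a-c_0)}$ onto $\Lo$. Since $a-c_0\ge R$, we get $\kappa\le\tfrac12 e^{-\lambda R}\le\delta_0$; the slack $e^{\lambda}$ in $\delta_0$ is what would cover putting the guard on $c_1$ instead.

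Take the decomposition $P_{c_i}=pC+(1-p)N_i$ from Lemma~\ref{lem:BTtarget}. The natural construction is to write $P_{c_i}=(1-\kappa)G_i+\kappa F_i$ with $F_0$ the conditional law of $P_{c_0}$ on $\{\Bt,\Hi\}$, and then absorb the discrepancy. This is the main obstacle: we need $Q_{c_0}=(1-\kappa)(pC+(1-p)N_0)+\kappa F_0$ with the \emph{same} $pC+(1-p)N_\cdot$ structure, yet $Q_{c_0}$ puts zero mass on $\Bt$ while $N_0(\Bt)\ge\qstar$ is required.

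The resolution we would try exploits the fact that $c_0$ is far below $a$. There $P_{c_0}$ and $P_{c_1}$ both put mass at most $\tfrac12e^{-\lambda(R-1)}=\delta_0$ on $\{\Bt,\Hi\}$. So we treat both distributions as common-plus-exceptional: set $\kappa:=\max_i P_{c_i}(\{\Bt,\Hi\})\le\delta_0$ (this bound uses $c_1\le a-R+1$), $p=1$, and $C=\delta_{\Lo}$. Then
\[
Q_{c_i}=(1-\kappa)\delta_{\Lo}+\kappa F_i
\]
for suitable probability distributions $F_i$. These exist because $Q_{c_i}(\Lo)\ge 1-\kappa$ for both $i$, and $F_i$ is simply the normalized remainder. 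The component $N_i$ has weight zero and can again be set to $P_{c_i}$, which satisfies $N_0\approx_\lambda N_1$ and $N_i(\Bt)\ge\qstar$ by Lemma~\ref{lem:BTtarget}.

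Finally, we check that the $\Hi$-side guard gives the identical bound, using $\Pr[c+Z\le b]$ for $c\ge b+R-1$. This yields $\kappa\le\delta_0$ in all cases and completes the proof.
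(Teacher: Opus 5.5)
\textbf{Verdict: same approach as the paper, but one verification step is false as written.} Your case split and your final construction for the one-guard case match the paper's proof. You take $p=1$ and $C=\Delta_{\Lo}$ (the point mass on $\Lo$). You take $\kappa=P_{c_1}(\{\Bt,\Hi\})=\tfrac12e^{-\lambda(a-c_1)}\le\delta_0$, using $a-c_1\ge R-1$; your $\max_i$ is just this quantity, since $c_1>c_0$. The normalized remainders come out as $F_0=\Delta_{\Lo}$ and $F_1$ equal to the conditional law of $P_{c_1}$ on $\{\Bt,\Hi\}$, exactly as in the paper. The $\Hi$-side bound and the exclusion of opposite-side guards are also fine.

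\textbf{The false step.} For the zero-weight residual components you set $N_i=P_{c_i}$ and assert $N_i(\Bt)\ge\qstar$ ``by Lemma~\ref{lem:BTtarget}''. That lemma lower-bounds the residual $N_i=(P_{c_i}-pC)/(1-p)$ of its own decomposition, i.e.\ $m_i/\rho$, not the raw probability $P_{c_i}(\Bt)=m_i$. The division by $\rho$ is precisely what makes the bound survive far from the thresholds. Without it the bound fails in your Cases 2 and 3: for any $c\le a-1$,
\[
P_c(\Bt)=\tfrac12e^{-\lambda(a-c)}\bigl(1-e^{-2}\bigr)\le\tfrac12e^{-\lambda}\bigl(1-e^{-2}\bigr)<\frac{1-e^{-2}}{1+e^{\lambda}}=\qstar,
\]
since $2e^{\lambda}>1+e^{\lambda}$, and symmetrically $P_c(\Bt)<\qstar$ for $c\ge b+1$. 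In Case 3 you have $c_0\le a-R\le a-1$, so $N_0(\Bt)<\qstar$. In Case 2 both residuals fail, on either side. Definition~\ref{def:tct-general-target} requires $\min_iN_i(\Bt)\ge\qstar$ even when the residual weight $1-p$ is zero, so your witnesses do not establish the statement.

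\textbf{Repairs.} Take $N_0=N_1=\Delta_{\Bt}$, as the paper does, or the residuals that Lemma~\ref{lem:BTtarget} supplies for the pair $(P_{c_0},P_{c_1})$. Either choice gives $N_0\approx_\lambda N_1$ and $N_i(\Bt)\ge\qstar$ trivially. Separately, in your $c_0=c_1$ case the common component must be $C=Q_{c_0}$, not $P_{c_0}$, because these differ when $c_0$ lies in a guarded region. You also still need to name admissible $N_i$ in that case.
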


The guard test ensures that any count that is far above $a$ or below $b$ is classified correctly with probability $1$. 
This property does not come for free in the analysis. 
We pay with an additional $\delta_0$ in privacy. 
Then the proof of Lemma~\ref{lem:guards} is rather intuitive. There are three cases to consider: i) both inputs $c_0$ and $c_1$ are below $a-R$ or above $b+R$ ii) both $c_0$ and $c_1$ are within $[a-R,b+R]$, iii) one of $c_0$ or $c_1$ is below $a-R$ or above $b+R$ while the other is within $[a-R,b+R]$. 
Note that the case where one of $c_0$ and $c_1$ appears $(-\infty,a-R)$ and the other appears in $(b+R,\infty)$ is impossible, since $R\ge 1$ and by assumption we have $|c_0-c_1|\le 1$. 
Case i) is trivial to handle; case ii) degenerates to the no-guard Algorithm~\ref{alg:no-guard}, which is handled by Lemma~\ref{lem:BTtarget}. 
For case iii), we bound the error probability $\delta_0$ that the answers on $c_0$ and $c_1$ disagree, which is the probability that $Z\ge R-1$ for $Z\sim \Lap(1/\lambda)$. 

For neighboring databases differing by one leaf record $z$, only counts on the root-to-leaf path $P_z$ change. Algorithm~\ref{alg:tree} makes at most
$\lceil\log_2(h+1)\rceil$ queries on this path and makes
no further query there after a $\Bt$ response (namely, a target hit).
Fixing the off-path responses, we can apply the bespoke
finite-horizon one-hit TCT bound in Lemma~\ref{lem:tct-one-hit} to the affected queries on $P_z$, and then reconstructs the complete transcript for Algorithm~\ref{alg:tree}. 
This gives the following privacy guarantee. The full proof is
deferred to Appendix~\ref{sec:app-transcript-privacy}.

\begin{theorem}[Privacy of Algorithm~\ref{alg:tree}]
\label{thm:guarded-tree-privacy}
Let $T$ be a finite public rooted tree of height $h\ge 1$,
with fixed public traversal orders.
Fix public parameters $\tau\ge 0$, $0<\lambda\le 1$,
and $R\ge 1$.
Run Algorithm~\ref{alg:tree} with thresholds
$a:=\tau-1/\lambda$ and $b:=\tau+1/\lambda$, using fresh
independent noise in every randomized invocation of $\GBT$.
Write
$L:=\lceil\log_2(h+1)\rceil, 
  \qstar:=\frac{1-e^{-2}}{1+e^\lambda}.
$ For every integer $1\le r\le L$,
Algorithm~\ref{alg:tree} is
\begin{align}
\label{eq:dense-privacy-curve}\textstyle
  \left(
    r\lambda,\;
    \frac{L}{2}e^{-\lambda(R-1)}
    +\mathbf{1}_{\{r<L\}}(1-\qstar)^r
  \right)\text{-DP}.
\end{align}
\end{theorem}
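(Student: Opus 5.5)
Fix a neighboring pair $D=D'\uplus\{z\}$ and let $P_z$ be the root-to-leaf path of $z$. By \eqref{eq:adjcounts}, every node off $P_z$ has the same count under $D$ and $D'$. Every node on $P_z$ has counts differing by exactly one. The plan is to view Algorithm~\ref{alg:tree} as an adaptive interaction in the sense of Lemma~\ref{lem:tct-one-hit}, in which only the on-path queries count as genuine accesses to the database.

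The first step is to define the interaction $\mathcal A$. It runs the public traversal of Algorithm~\ref{alg:tree}. At each step, the next node $u$ to be processed, and whether it is skipped because it is already labeled, is a deterministic function of the previous responses.

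When $u\notin P_z$, $\mathcal A$ answers by running $\GBT(c_u;a,b,R,\lambda)$ itself with fresh noise. It can do this because $c_u(D)=c_u(D')$, so this is post-processing with randomness independent of the choice of database. It is cleanest to treat such a node as a call to a data-independent mechanism, i.e.\ one with $P_0=P_1$. Such a call is taken entirely in the common component ($p=1$), so it never consumes $N_b$ budget. Its target can be set to be empty or, equivalently, the call is simply not counted.

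When $u\in P_z$, $\mathcal A$ issues a genuine call with mechanism $\GBT(c_u(\cdot);a,b,R,\lambda)$ and target $\{\Bt\}$. By Lemma~\ref{lem:guards}, which applies because $b-a=2/\lambda$ and $|c_u(D)-c_u(D')|=1$, $\{\Bt\}$ is a $\qstar$-target with parameters $(\lambda,\kappa)$ for some $\kappa\le\delta_0=\tfrac12e^{-\lambda(R-1)}$. Padding $\kappa$ up to $\delta_0$ is harmless, since we can reshuffle mass into $F_b$ (or note that Lemma~\ref{lem:tct-one-hit} only needs an upper bound).

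The second step is to verify the hypotheses of Lemma~\ref{lem:tct-one-hit} with $K=L$. Lemma~\ref{lem:path-query-bound} gives at most $L$ on-path calls. Lemma~\ref{lem:one-middle-target} shows that after the first on-path $\Bt$, no further queries occur on $P_z$. So $\mathcal A$ effectively stops its charged calls at the first hit, and any remaining off-path simulation is post-processing.

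To match the lemma's stopping rule literally, I would let $\mathcal A$ also carry out the off-path simulations internally between genuine calls. Since these use only public information, previous responses, and independent coins, they can be folded into the selection of the next mechanism. The output of Algorithm~\ref{alg:tree} (indeed the full transcript) is then a post-processing of the on-path response sequence together with independent coins. Lemma~\ref{lem:tct-one-hit} then gives $(r\lambda,\,L\delta_0+\mathbf 1_{\{r<L\}}(1-\qstar)^r)$-DP, and $L\delta_0=\tfrac L2e^{-\lambda(R-1)}$, which is the claimed bound. The argument is symmetric in the direction of the pair, so it covers both inequalities of \eqref{eq:DP}.

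The main obstacle is making the folding of off-path randomness rigorous. The selections in Lemma~\ref{lem:tct-one-hit} must depend only on public information and previous responses, whereas here they also depend on internal coins for off-path nodes. I would handle this by conditioning on those coins: fix the entire off-path noise vector in advance, which is valid because it is independent of everything else. For each fixed vector, the on-path interaction is a legitimate instance of Lemma~\ref{lem:tct-one-hit}. Averaging the resulting $(\varepsilon,\delta)$ inequality over the fixed vector preserves it, by linearity in $S$. Care is needed that the "fixed vector" is indexed by node rather than by call order, so that it is well defined independent of adaptivity.
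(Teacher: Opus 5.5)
Your final argument is essentially the paper's proof of Lemma~\ref{lem:transcript-privacy}, followed by post-processing. You fix the node-indexed off-path randomness (the paper pre-samples the off-path outputs $W_u\sim Q_{c_u(D_0)}=Q_{c_u(D_1)}$), apply Lemma~\ref{lem:tct-one-hit} to the on-path calls with $K=L$, $q=\qstar$, $\kappa_0=\delta_0$, rebuild the full transcript deterministically from the fixed off-path answers, and average. Drop only your early suggestion of giving off-path calls an empty target: an empty set cannot satisfy $N_b(\top)\ge q>0$, and counting those calls would push $K$ above $L$. Your conditioning step already makes that device unnecessary.
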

Here the integer $r\in\{1,\ldots,L\}$ is referred to as the \emph{private-call cutoff}. 
It is used for analysis purpose only. 
We charge at most $r$ calls to the data holder (i.e., the mechanism runs on $N_b$ in Eq.~\eqref{eq:targetdecomp}). 
This leads to a factor of $r$ in the multiplicative privacy parameter (accounted in the final $\varepsilon$) and we treat the event of more than $r$ such calls as an additive privacy failure (accounted in the final $\delta$). 
The conclusion holds for every such $r\in [1, L] = [1, \lceil\log_2 (h+1)\rceil]$. 

We now choose the public parameters so that Algortihm~\ref{alg:tree} is calibrated to $(\varepsilon,\delta)$-DP.

\begin{theorem}[Privacy with explicit parameters]
\label{thm:calibrated-privacy}
Let $T$ be a finite public rooted tree of height $h\ge 1$,
with fixed public traversal orders, and let $\tau\ge 0$
be a public count threshold.
Fix privacy parameters $0<\varepsilon\le 1$ and
$0<\delta<1/10$.
Set $L:=\lceil\log_2(h+1)\rceil$ and $r_0
    := \left\lceil
        \frac{\ln(2/\delta)}{\ln(5/4)}
       \right\rceil$. Choose 
       $r:=\min\left\{L,\; r_0\right\}$ 
and define
\begin{align}
\label{eq:privacy-calibration-parameters}
%\textstyle
  \lambda:=\frac{\varepsilon}{r},
  \qquad
  w:=\frac{1}{\lambda}=\frac{r}{\varepsilon},
  \qquad
  R:=1+\frac{r}{\varepsilon}\ln\frac{L}{\delta}.
\end{align}
Then Algorithm~\ref{alg:tree} with $\lambda, R$ and thresholds
$a:=\tau-w$ and $b:=\tau+w$ satisfies
$(\varepsilon,\delta)$-DP.
\end{theorem}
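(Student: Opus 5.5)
We obtain this as a direct instantiation of Theorem~\ref{thm:guarded-tree-privacy}, which holds for every integer $1\le r\le L$. The chosen $r=\min\{L,r_0\}$ is an integer in $[1,L]$, because $L\ge1$ for $h\ge1$ and $r_0\ge1$. The parameters $\lambda=\eps/r$ and $R$ are admissible: $0<\lambda\le\eps\le1$ since $r\ge1$, and $R\ge1$ since $\ln(L/\delta)>0$ when $\delta<1/10\le L$. Plugging into \eqref{eq:dense-privacy-curve}, the multiplicative parameter is $r\lambda=\eps$. It remains to show that the additive term is at most $\delta$.

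\textbf{Guard term.} With $\lambda(R-1)=\frac{\eps}{r}\cdot\frac{r}{\eps}\ln\frac{L}{\delta}=\ln\frac{L}{\delta}$, we get
\[
\tfrac L2 e^{-\lambda(R-1)}=\tfrac{L}{2}\cdot\tfrac{\delta}{L}=\tfrac{\delta}{2}.
\]

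\textbf{Overflow term.} If $r=L$, the indicator vanishes and the total additive term is $\delta/2\le\delta$. Otherwise $r=r_0<L$, and we need $(1-\qstar)^{r_0}\le\delta/2$. We use that $\lambda\le1$ implies
\[
\qstar=\frac{1-e^{-2}}{1+e^\lambda}\ge\frac{1-e^{-2}}{1+e}\approx0.232>\tfrac15 .
\]
Hence $1-\qstar<4/5$, and
\[
(1-\qstar)^{r_0}\le(4/5)^{r_0}=e^{-r_0\ln(5/4)}\le e^{-\ln(2/\delta)}=\delta/2,
\]
where the last step uses $r_0\ge\ln(2/\delta)/\ln(5/4)$. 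The total additive term is then at most $\delta/2+\delta/2=\delta$.

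\textbf{Main obstacle.} The only step needing care is the numerical lower bound $\qstar>1/5$ for $\lambda\le1$. We verify it explicitly: $1-e^{-2}\approx0.8647$ and $1+e\approx3.718$, so the ratio is about $0.2325>0.2$. The function $\lambda\mapsto\qstar$ is decreasing, so the worst case is $\lambda=1$.

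The remaining items are routine: $r$ is a valid integer, $\lambda$ and $R$ lie in the ranges required by Theorem~\ref{thm:guarded-tree-privacy}, and the thresholds $a=\tau-w$ and $b=\tau+w$ match $\tau\mp1/\lambda$. The condition $\delta<1/10$ is only needed to ensure positivity of the logarithms. With these checks, $(\eps,\delta)$-DP follows.
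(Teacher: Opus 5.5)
Your proposal is correct and follows the paper's proof essentially step for step. As in the paper, you verify $1\le r\le L$, $0<\lambda\le 1$, and $R\ge1$, invoke Theorem~\ref{thm:guarded-tree-privacy}, show the guard term equals $\delta/2$, and bound the overflow term by $(4/5)^{r_0}\le\delta/2$ when $r<L$. Your explicit check that $\qstar\ge(1-e^{-2})/(1+e)>1/5$ for $\lambda\le1$ fills in a step the paper uses without comment.
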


\subsection{Utility Analysis}\label{sec:utility}

By Theorem~\ref{thm:calibrated-privacy} and the construction of the count query $\GBT$ (Algorithm~\ref{alg:guard}), we have the following privacy-utility trade-off for Algorithm~\ref{alg:tree}.
\begin{theorem}[Utility guarantee]\label{thm:main}
Fix $h\ge1$, $\tau\ge0$, $0<\eps\le1$, and $0<\delta<1/10$. Let $L$, $r$, $\lambda$, $w$,  and $R$ be defined as in Theorem~\ref{thm:calibrated-privacy}. Then Algorithm~\ref{alg:tree} is $(\eps,\delta)$-DP . In addition, the output labels satisfy the $(E,\beta)$ classification guarantee of Definition~\ref{def:utility} with error probability $\beta=0$ and additive error margin
\begin{align}\textstyle
 E:=w+R=1+\frac{r}{\eps}\left(1+\ln\frac{L}{\delta}\right).\label{eq:mainerror}
\end{align}
\end{theorem}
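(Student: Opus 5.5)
The privacy half of the statement is exactly Theorem~\ref{thm:calibrated-privacy}, since the parameters $L,r,\lambda,w,R$ are defined identically; the DP claim will be cited directly. The work lies in the deterministic classification guarantee with $\beta=0$ and $E=w+R$. The plan is to prove a per-query correctness property of $\GBT$ that holds for every realization of the noise. Then we show that the label-propagation rules of Algorithm~\ref{alg:tree} carry this property to every node. For this we use the consistency and termination facts of Appendix~\ref{app:algorithm-properties}: every node gets exactly one label, the Heavy set is ancestor-closed, and labels are never overwritten.

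\textbf{Step 1 (per-query guarantee).} For a queried node $u$ with count $c$, we claim deterministically: if $\GBT$ returns $\Hi$ or $\Bt$, then $c>a-R=\tau-E$; if it returns $\Lo$ or $\Bt$, then $c<b+R=\tau+E$. This is read off Algorithm~\ref{alg:guard}. If $c\le a-R$ the output is $\Lo$ with no noise drawn, and if $c\ge b+R$ the output is $\Hi$. Otherwise $c\in(a-R,b+R)$, so every randomized output is consistent with both inequalities. No tail bound on $Z$ is needed, which is why $\beta=0$.

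\textbf{Step 2 (propagation).} Every label is created by a query at some node $u$, through one of three rules. A $\Hi$ response at $u$ marks the ancestors $v$ of $u$ as $\Hi$, and $c_v\ge c_u>\tau-E$. An $\Lo$ response marks the descendants $v$ of $u$ as $\Lo$, and $c_v\le c_u<\tau+E$. A $\Bt$ response marks the proper ancestors of $u$ as $\Hi$, and their counts are $\ge c_u>\tau-E$. It also marks $u$ and its descendants as $\Lo$, with counts $\le c_u<\tau+E$. Each of these uses monotonicity of counts along root-to-leaf paths (Eq.~\eqref{eq:count-node}). So every node labeled $\Hi$ has count $\ge\tau-E$, and every node labeled $\Lo$ has count $<\tau+E$. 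Since every node receives exactly one label, a node with $c_u\ge\tau+E$ cannot be $\Lo$ and hence lies in $\Hhat$. This gives both inclusions of Eq.~\eqref{eq:utility}, and ancestor-closedness of $\Hhat$ comes from the consistency lemma.

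\textbf{Step 3 (arithmetic).} Substituting $w=r/\eps$ and $R=1+\frac{r}{\eps}\ln\frac{L}{\delta}$ gives $E=1+\frac r\eps(1+\ln\frac L\delta)$. The main obstacle is mild bookkeeping in Step~2. We must check that the $\Bt$ rule's split at $\parent(u)$ is handled by the correct inequality on each side. We must also rely on the appendix fact that a label is never overwritten, so that the label a node ends with is the one certified by the query that created it.
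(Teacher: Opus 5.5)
Your proposal is correct and takes essentially the same route as the paper, which cites Theorem~\ref{thm:calibrated-privacy} for privacy and proves the deterministic margin (Theorem~\ref{thm:detutility}) the way you do. The guards force $c>a-R$ for any $\Hi$ or $\Bt$ response and $c<b+R$ for any $\Lo$ or $\Bt$ response, and these bounds pass to every labeled node through the query that created its label, using monotonicity of counts and the completeness and permanence of labels.
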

% This follows from Theorem~\ref{thm:calibrated-privacy} and the fact that the maximum classification error is bounded by $w+R$ with probability $1$. 

Eq.~\eqref{eq:mainerror} satisfies the explicit relaxation
\begin{align}\label{eq:simpleconstant}\textstyle
 E\le1+\frac{5\ln(2/\delta)}{\eps}
       \left(1+\ln\frac{\log_2 h}{\delta}\right)=O_{\varepsilon,\delta}\left(\log\log h\right),
\end{align}
which is doubly logarithmic in $h$ at fixed $\varepsilon$ and $\delta$. 

\vspace{0pt}
\noindent\textbf{Remark on time complexity.} Algorithm~\ref{alg:tree} marks every node in tree $T$. For a tree of arity $k$ and height $h$, the overall time complexity is $O(k^h)$. This cost can be reduced to $O(nh)$, where $n$ is the input database size, utilizing a sparsity-aware implementation of Algorithm~\ref{alg:tree}. The algorithm, its properties, and proofs are deferred to Appendix~\ref{sec:spv4-sparse}; we focus on the privacy-utility trade-off in the main paper.

%% file: 7-comparison-short.tex
% Drop-in subsection. Required packages: amsmath, amssymb, booktabs,
% tabularx, array. Reuses bibliography keys lumberjack2026 and ghazi2023.
\subsection{Discussion and Comparison to Prior Work}
\label{sec:cmp:additive}

\begin{table}[t]
\centering
\caption{Additive error margin comparison at height $h$. An entry $f$ denotes an error margin $E=O(f)$.}
\label{tab:cmp:dependencies}
\small
\setlength{\tabcolsep}{4pt}
\renewcommand{\arraystretch}{1.3}
\begin{tabularx}{\linewidth}{@{}>{\raggedright\arraybackslash}X r@{}}
\toprule
Method & Additive error margin $O(\cdot)$ \\
\midrule

\multicolumn{2}{@{}l@{}}{%
  \textbf{Probabilistic error guarantee} ($\beta>0$)} \\
% \addlinespace[0.5ex]

Ghazi et al.~\cite{ghazi2023}
& $\displaystyle
  \frac{1}{\varepsilon}\frac{M}{\tau}
  \left[
    \log\!\left(1+\frac{\varepsilon}{\delta}\right)+\log\frac{h}{\beta}
  \right]$
\\
\addlinespace[0.5ex]

Biswas et al.~\cite{biswas2026} & $\displaystyle\frac
{\log(\frac1\delta)}{\eps}
  \log\frac{\log(\frac1\delta)}{\eps}\left[\log\left(\frac1\delta\right)+\log\left(\frac{h}{\beta}\right)\right]$\\
\addlinespace[0.5ex]

Lumberjack~\cite{lumberjack2026}
& $\displaystyle
  \frac{1}{\varepsilon}
  \sqrt{
    \log(1/\delta)\,\log h\,
    \log\!\left(\frac{M\log h}{\beta}\right)
  }$
\\

\midrule
\multicolumn{2}{@{}l@{}}{%
  \textbf{Deterministic error guarantee} ($\beta=0$)} \\
% \addlinespace[0.5ex]

Lumberjack~\cite{lumberjack2026}
& $\displaystyle
  \frac{1}{\varepsilon}
  \sqrt{
    \log(1/\delta)\,\log h\,
    \bigl(\log(1/\delta)+\log\log h\bigr)
  }$
\\
% \addlinespace[0.5ex]

Ours
& $\displaystyle
  \frac{1}{\varepsilon}
  \log(1/\delta)
  \bigl(\log(1/\delta)+\log\log h\bigr)$
\\

\bottomrule
\end{tabularx}
\end{table}

Table~\ref{tab:cmp:dependencies} compares \textsc{BetweenCut}, Lumberjack~\cite{lumberjack2026}, 
Ghazi et al.~\cite{ghazi2023} and Biswas et al.~\cite{biswas2026} on one public tree, at the same threshold $\tau$ and under $(\varepsilon,\delta)$-DP.
An additive margin $E$ guarantees that, simultaneously for all nodes, counts at least $\tau+E$ are labeled Heavy and counts below $\tau-E$ are labeled Light, except with probability at most $\beta$.
Here $h$ is the tree height, $n$ is the database size, and $M\ge n$ is a public size bound when required.
We take privacy parameters $0<\varepsilon\le 1$ and $0<\delta<1/10$, and error probability $0<\beta<1/2$; deterministic guarantees are indicated by $\beta=0$. 
For~\cite{ghazi2023}, we convert relative error $\alpha$ to additive by setting $E=\alpha\tau$.

\balance

\vspace{0pt}
\noindent\textbf{Height dependence.}
For fixed privacy parameters, our error margin is $O(\log\log h)$, with error probability $\beta=0$. For fixed $\varepsilon,\delta,\beta>0$, Ghazi et al.'s bound is $O(\frac{M}{\tau}\log h)$; Biswas et al.'s algorithm gives $O(\log h)$. 
Lumberjack reduces each record's participation to $O(\log h)$ comparisons and pays a Gaussian square-root composition factor, giving errors in $O(\sqrt{\log h\,\log\log h})$. 
While the deterministic simultaneous accuracy ($\beta=0$) is already available in Lumberjack, our improvement concerns the exponentially better dependence on $h$.

\vspace{0pt}
\noindent\textbf{On input size $n\le M$.} Our deterministic error guarantee and the deterministic version of Lumberjack do not require an upper bound on the input size $n$. 
In Ghazi et al.~\cite{ghazi2023}, $M$ impacts the utility trade-off through the privacy budget: there can be at most $O(M/\tau)$ positive answers during the SVT invocations. 
In the probabilistic version of Lumberjack~\cite{lumberjack2026}, the $M$ factor is introduced by the union bound over $O(n)$ independent, unbounded noises. \looseness=-1

\vspace{0pt}
\noindent\textbf{On parameter regime $\varepsilon \le 1$.} Our results can be extended to any $\varepsilon > 1$. The $\varepsilon \le 1$ restriction simplifies our parameter calibration rather than being intrinsic to \textsc{BetweenCut}. 
In particular, it ensures that $\lambda=\varepsilon/r\le 1$ in each call of $\GBT$, hence the $q$-target decomposition satisfies $q_\lambda\ge1/5$, so the TCT overflow term is bounded by $(4/5)^r$.
Lumberjack similarly assumes $\varepsilon\le1$ in its closed-form Gaussian calibrations for simplicity---their restriction comes from the classical Gaussian mechanism bound used to obtain a noise scale proportional to $1/\varepsilon$~\cite{dwork:roth:14}. 
Accordingly, \textsc{BetweenCut} can be calibrated for $\varepsilon> 1$, but exploiting the larger privacy budget requires more detailed analysis on the residual-call cutoff threshold. 
We omit this additional analysis for brevity here.

\vspace{0pt}
\noindent\textbf{Relation to lower bounds in~\cite{ghazi2023}.}
Our result is consistent with the lower bounds in~\cite{ghazi2023}. ~\cite[Theorem~5.2]{ghazi2023} establish an
$\Omega_{\varepsilon,\delta}(\sqrt h)$ lower bound for
numerical estimation of all node counts under an additive-only error criterion. Our \textsc{BetweenCut} releases only Heavy/Light
labels at a single public threshold $\tau$, not numerical count
estimates with uniformly bounded additive error. Their pure-DP lower bound~\cite[Theorem~5.1]{ghazi2023} also does not apply to our approximate-DP guarantee. For every fixed multiplicative tolerance $0<\alpha<1$,
a geometric-threshold reduction using \textsc{BetweenCut} achieves $\alpha$-mRMSE $O_{\alpha,\varepsilon,\delta}(\log\log h)$,
improving the height dependence of the $O_{\alpha,\varepsilon,\delta}(\log h)$ bound in~\cite[Theorem~4.1]{ghazi2023}.
The construction and proofs appear in Appendix~\ref{sec:app-extension}.
We leave matching height-dependent lower bounds for approximate-DP
heavy-node classification and additive--multiplicative tree
counting to future work.

%\vspace{0pt}
\noindent\textbf{Relation to Lumberjack's algorithmic design.}
\textsc{BetweenCut} shares Lumberjack's balanced traversal and upward/downward label propagation; the two-sided deterministic guards to avoid union bounding over all node also appear in~\cite{lumberjack2026}. 
Our central innovation is to turn a middle `Between' response into a permanent cut (hence the name \textsc{BetweenCut}): all proper ancestors are marked Heavy, while the queried node and its descendants are marked Light.
This stops every subsequent query on every root-to-leaf path through that node, so each record participates in at most one middle response. 
Constructing the $q$-target decomposition, we obtain a finite-horizon TCT analysis that charges $O(\log(1/\delta))$ calls to the actual private input, with the overflow probability absorbed into $\delta$. 
This avoids accounting for all $O(\log h)$ affected tests, which would lead to an $\Omega(\operatorname{poly}(\log h))$ error as in~\cite{lumberjack2026}. 
This approach allows us to obtain an exponential improvement in height dependence over~\cite{lumberjack2026}.
The compact design of our solution---one public schedule, a three-outcome test, and deterministic label propagation---offers a promising primitive for private hierarchical counting and other structured-data applications.\looseness=-1

%% file: 8-conclusion.tex
\section{Conclusion}\label{sec:conclusion}

We introduce an $(\eps,\delta)$-DP algorithm for heavy node classification on a public tree of height $h$, with an input size-independent additive error margin of $O_{\eps,\delta}(\log\log h)$ that holds simultaneously for all nodes with probability one (Theorem~\ref{thm:main}). 
The key idea is to turn a target hit (middle answer for the BetweenThresholds mechanism) into a final cut: each record can affect at most one target hit, while balanced level search limits its affected queries to O$(\log h)$. 
Our analysis instantiates target-charging technique in this finite-horizon, one-target-hit setting, yielding the explicit privacy guarantee (Lemma~\ref{lem:tct-one-hit}), which may be of independent interest---our formulation is useful when at most one target can be encountered and there are too few target hits to amortize the overhead of the general TCT analysis. A limitation of the current analysis is that the hidden constants are sufficiently large that the point where \textsc{BetweenCut} achieves smallest absolute error is only for extremely deep trees. 
Future directions therefore include tightening the dependence on $\varepsilon$ and $\delta$, determining the optimal height dependence, and investigating downstream applications.\looseness=-1

%% file: AI.tex
% \clearpage
\subsection*{AI use statement}
In this work, we used generative AI tools for checking our proof details, identifying potential errors and ambiguities, deriving constants in the big-oh analyses, and improving the paper writing such as finding grammatical errors and missing references. The research questions and main results are determined and produced by the authors. We have not used generative AI tools for designing the algorithm, and experiments are not applicable to this complexity-focused work. We have reviewed all AI-assisted work: we have confirmed that the computation is correct line by line. We take responsibility for the final content of this work, including all claims and proofs produced with the aid of generative AI.

%% file: App-other-related.tex
\section{Extended Discussion on Related Works}\label{app:related}

\subsection{Naive Baselines for Differentially Private Heavy Nodes}\label{sec:related-naive-baselines}
The following \emph{full-tree} baselines visit every node, adds independent noise to their true node count, and labels them heavy exactly when their noisy count is at least $\tau$. 
Algorithm~\ref{alg:full-tree-baseline} gives a common template for the Laplace and Gaussian variants; their noise distributions are specified below.

\begin{algorithm}[h!]
\caption{Full-tree Classification Baseline}
\label{alg:full-tree-baseline}
\begin{algorithmic}[1]
\Require Public tree $T$ of height $h$; public threshold $\tau$; private database $D$; 
         noise distribution $\mathcal{P}$.
\For{$i=h-1,h-2,\ldots,0$}
    \For{each $u$ at depth $i$}
        \State Draw $Z_u\sim\mathcal{P}$ independently.
        \State $\widetilde{c}_u(D)\gets c_u(D)+Z_u$.
        \If{$\widetilde{c}_u(D)\ge\tau$}
    \State Classify node $u$ as \emph{Heavy}
  \Else
    \State Classify node $u$ as \emph{Light}
  \EndIf
    \EndFor
\EndFor
\end{algorithmic}
\end{algorithm}

\vspace{0pt}
\noindent\textbf{Laplace baseline.}
For $\varepsilon>0$, choose the noise distribution $\mathcal{P}$ to be $\Lap\!\left(\frac{h}{\varepsilon}\right)$. Each node count has sensitivity one. Within each level, the descendant leaf sets are disjoint, so the full count vector for all nodes has an $\ell_1$-sensitivity of $h$. 
The thresholding step is then just post-processing of the DP output~\cite{dwork:roth:14}.

Using the Laplace tail bound and a union bound over all $N=k^{h-1}$
nodes ($k$ is the arity),
\begin{align}
\Pr\!\left[\max_{u\in V}|Z_u|>E\right]
\le
N\exp\!\left(-\frac{\varepsilon E}{h}\right).
\end{align}
Therefore, a sufficient simultaneous classification margin is
\begin{align}
E_{\mathrm{Lap}}
:=
\frac{h}{\varepsilon}\ln\frac{N}{\beta}
=
O\!\left(
\frac{h}{\varepsilon}\ln\frac{N}{\beta}
\right)=O\!\left(\frac{\ln(1/\beta)}{\varepsilon}h^2
\right),
\label{eq:baseline-laplace-margin}
\end{align}
with $h=O(\log N)$ substituted.

% Using the advanced composition~\cite{boosting-and-dp,pmlr-v37-kairouz15}, we can choose the noise distribution $\mathcal{P}$ to be $\Lap\!\left(\frac{b\sqrt{h}}{\varepsilon}\right)$, where $b$ is some factor depending on the privacy parameter $\delta$. With algebraic substitutions, the sufficient simultaneous classification margin is 
% \begin{align}
% E_{\mathrm{Lap}, \text{advanced}}
% =
% O\!\left(\frac{\ln(1/\beta)}{\varepsilon}h\sqrt{h}
% \right),
% \label{eq:baseline-laplace-margin}
% \end{align}

\paragraph{Gaussian baseline.}
For $0<\varepsilon<1$ and $0<\delta<1/2$, choose the noise distribution $\mathcal{P}$ to be $\mathcal{N}(0,\sigma^2)$, with
\begin{align}
\sigma^2
=
\frac{2h\ln(1.25/\delta)}{\varepsilon^2}.
\label{eq:baseline-gaussian-noise}
\end{align}
Adding or removing one individual changes at most $h$
coordinates of the full count vector, each by one.
Its $\ell_2$-sensitivity is therefore at most $\sqrt{h}$.
Consequently, the vector Gaussian
mechanism~\cite{dwork:roth:14} gives
$(\varepsilon,\delta)$-DP. The Gaussian tail bound and the union bound give
\[
\Pr\!\left[\max_{u\in V}|Z_u|>E\right]
\le
2N\exp\!\left(-\frac{E^2}{2\sigma^2}\right).
\]
Thus, a sufficient simultaneous classification margin is
\begin{align}
E_{\mathrm{Gauss}}
&:=
\sigma\sqrt{2\ln\frac{2N}{\beta}}\\
&=
\frac{2}{\varepsilon}
\sqrt{
h\ln\frac{1.25}{\delta}
 \ln\frac{2N}{\beta}
}\\
&=
O\!\left(
\frac{
\sqrt{h\ln(1/\delta)\ln(2N/\beta)}
}{\varepsilon}
\right)\\
&=O\!\left(
\frac{
\sqrt{\ln(1/\delta)\ln(1/\beta)}
}{\varepsilon}
h\right).
\label{eq:baseline-gaussian-margin}
\end{align}
More advanced mechanisms aim at sublinear dependencies on $h$, as reviewed in Section~\ref{sec:literature}.

\subsection{Additive Error Guarantee of Ghazi et al.}
Ghazi et al.~\cite{ghazi2023} originally uses a relative tolerance $\gamma\in[0,1)$ rather than the additive absolute tolerance $E$. It also includes a public mass bound $M\ge n$. Its required answers are \emph{Heavy} for $c_u(D)\ge(1+\gamma)\tau$ and \emph{Light} for $c_u(D)<(1-\gamma)\tau$, provided $n\le M$. We show how to translate their relative error guarantee to our absolute error guarantee.

Define
\begin{align}\label{eq:LambdaG}
\Lambda_G(h,\beta,\eps,\delta)=\max\left\{48\ln\frac{2h}{\beta},\;
 6\ln\left(1+\frac{e^{\eps/2}-1}{\delta}\right)\right\}.
\end{align}
Lemma 4.3 of \cite{ghazi2023} gives an $(\eps,\delta)$-DP algorithm with a relative tolerance $\gamma\in(0,1/2)$, whenever
\begin{align}\label{eq:ghazithreshold}
 \tau\ge\sqrt{\frac{2M\Lambda_G}{\gamma\eps}}.
\end{align}
We consider failure probability $0<\beta<1/2$, following~\cite{ghazi2023}. At a fixed $\tau$, rearranging Eq.~\eqref{eq:ghazithreshold} gives
\begin{align}
 E_G&:=\gamma\tau\\\
 &\ge\ \frac{2M\Lambda_G}{\eps\tau}\\
 &=\frac{2M}{\eps\tau}\max\left\{48\ln\frac{2h}{\beta},\;
 6\ln\left(1+\frac{e^{\eps/2}-1}{\delta}\right)\right\}.
\end{align}
Therefore, at a prescribed threshold $\tau$, ~\cite{ghazi2023} provides an algorithm that certifies an additive margin in $O(\log h)$.
% ; at a prescribed relative tolerance $\gamma$, the corresponding sufficient threshold is in $O(\sqrt{\log h})$. 
% We can also translate our utility definition (see Eq.~\eqref{eq:utility}) to theirs by considering $ 0<\gamma<1,$  and set $\tau\ge E/\gamma.$

\subsection{Other Related Works}\label{sec:related-work-others}
\vspace{0pt}
\noindent\textbf{PrivTree.} PrivTree \cite{zhang2016privtree} provides depth-independent noise calibration for private hierarchical decomposition, but its splitting rule includes a depth-dependent bias. In particular, for node $v$ at depth $d$, its noisy count is compared with a threshold biased by $O(d)$. It does not automatically achieve a bounded error margin and is not comparable with our method. 

% PrivTree has depth-independent noise, but tests a \emph{depth-biased} score, roughly $c(v)-\operatorname{depth}(v)\,d_0$ subject to a floor, not the same raw-count threshold at every depth \cite[Algorithm~2]{privtree2016}. Consequently a long constant-count path need not be kept until the common raw count falls below $\tau$. Its unbounded noise also does not supply our probability-one count-margin guarantee. This is a mismatch of utility specifications, not a claim that PrivTree is invalid or that no adaptation can work.

\vspace{0pt}
\noindent\textbf{Threshold-release reduction.} In Section 1.3 of~\cite{ghazi2023}, Ghazi et al. discuss a reduction from the tree counting problem to the learning thresholds problem~\cite{bnsv2015}. This would give an additive error
\begin{equation}\label{eq:thresholdrelease}
 O_{\eps,\delta}\!\left((\log^*|\mathcal X|)^{O(1)}(\log n)^{2.5}\right).
\end{equation}
We note that the error is dependent on the size of the input database, $n$. It is therefore not directly compared with our method: Our method has a size-independent 
% and deterministic 
error margin while the error of~\cite{bnsv2015} 
% is probabilistic and 
depends on the input size due to the $(\log n)^{2.5}$ factor. 
% Cohen et al.~\cite{optimal-dp-thresholds} further improve the error dependency on $h$ for the learning thresholds problem. Still, using the reduction from~\cite{ghazi2023} and~\cite{bnsv2015}, the resulting error guarantee for heavy hitter classification is dependent on the input database size. For the same reason, we exclude it from our comparison.

%% file: App-algo-details.tex
\section{Further Algorithmic Details for \sc{BetweenCut}}

\subsection{Label Propagation}
\label{sec:app-marking}

Algorithm~\ref{alg:marking-routines} implements the deterministic label propagation routines used by Algorithm~\ref{alg:tree}. The routines inspect only the public tree and the current labels and make no additional count queries.

\begin{algorithm}[t]
\caption{Marking routines}\label{alg:marking-routines}
\begin{algorithmic}[1]
\Procedure{MarkUp}{$u$}
    \While{$u \neq \mathrm{null}$ \textbf{and}
           $\mathsf{label}[u] \neq \Hi$}
        \State $\mathsf{label}[u] \gets \Hi$
        \State $u \gets \operatorname{parent}(u)$
    \EndWhile
\EndProcedure
%\Statex
\Procedure{MarkDown}{$u$}
    \If{$\mathsf{label}[u] = \Lo$}
        \State \Return
    \EndIf
    \State $\mathsf{label}[u] \gets \Lo$
    \For{each child $v$ of $u$, in public child order}
        \State \Call{MarkDown}{$v$}
    \EndFor
\EndProcedure
\end{algorithmic}
\end{algorithm}

\subsection{Algorithm Properties}\label{app:algorithm-properties}

\begin{lemma}[Consistency and termination]\label{lem:consistency-termination}
For every execution of Algorithm~\ref{alg:tree} on a finite rooted tree $T$, the
following statements hold.
\begin{enumerate}
    \item[(i)] \textbf{Structural consistency.}
    Immediately after initialization and after each completed iteration
    of the inner node loop (lines~5--15), the Heavy set is
    ancestor-closed, the Light set is descendant-closed, and the two
    sets are disjoint.

    \item[(ii)] \textbf{Permanent labels.}
During Algorithm~\ref{alg:tree}, every label assignment after initialization
changes an unknown label to Heavy or Light: $
? \longrightarrow \mathrm{H}, 
? \longrightarrow \mathrm{L}.$ Consequently, each node is labeled at most once, and its label thereafter remains unchanged.

    \item[(iii)] \textbf{Termination and completeness.}
    Algorithm~\ref{alg:tree} terminates, and every node is labeled Heavy or Light
    when it returns.
\end{enumerate}
\end{lemma}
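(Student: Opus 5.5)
We prove the three statements together by induction over the completed iterations of the inner loop of Algorithm~\ref{alg:tree}, with (i) as the invariant. After initialization every label is $\Unk$, so both sets are empty and (i) holds trivially. For the inductive step, fix an iteration at a node $u$ and assume (i) holds before it. If $\mathrm{label}[u]\neq\Unk$ the iteration does nothing, and the invariant carries over. Otherwise $u$ is unlabeled, and we use two consequences of the invariant. Since the Heavy set $H$ is ancestor-closed, no descendant of $u$ is Heavy; otherwise $u$ itself would be Heavy. Since the Light set $L$ is descendant-closed, no ancestor of $u$ is Light; otherwise $u$ would be Light.

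We then check the three branches against Algorithm~\ref{alg:marking-routines}.
\begin{itemize}
\item $\mathsf{MarkUp}(u)$ walks from $u$ toward the root and stops at the first node already labeled $\Hi$. Every node it overwrites is an ancestor of $u$. Such a node is not Light, and it is not Heavy because the loop would have stopped there, so it was $\Unk$. The walk stops at a node of $H$, and $H$ is ancestor-closed, so all ancestors above that point are already Heavy. Hence the new set equals $H\cup\Anc(u)$, which is ancestor-closed.
\item $\mathsf{MarkDown}(u)$ recurses over $\Desc(u)$ and returns at nodes already labeled $\Lo$. Since $L$ is descendant-closed, the subtree below such a node is already Light. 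Every other node it overwrites is a descendant of $u$, hence not Heavy and not Light, so it was $\Unk$. The new Light set equals $L\cup\Desc(u)$, which is descendant-closed.
\item The $\Bt$ branch performs $\mathsf{MarkUp}(\parent(u))$ and then $\mathsf{MarkDown}(u)$. The same arguments apply: $\parent(u)$ is an ancestor of an unlabeled node, so it is not Light. The first step labels only proper ancestors of $u$, which were unknown or already Heavy, and leaves $u$ as $\Unk$. Its descendants remain non-Heavy, because proper ancestors of $u$ are not descendants of $u$. So $\mathsf{MarkDown}(u)$ still only overwrites $\Unk$ labels. The result is $H\cup(\Anc(u)\setminus\{u\})$ and $L\cup\Desc(u)$. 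These sets are ancestor-closed and descendant-closed respectively, and they are disjoint because $\Anc(u)\setminus\{u\}$ and $\Desc(u)$ are disjoint.
\end{itemize}
Disjointness follows in every case, because each assignment overwrites only $\Unk$ labels.

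This bookkeeping proves (ii) as well: every assignment is $\Unk\to\Hi$ or $\Unk\to\Lo$, so each node is labeled at most once and never changes afterwards. The main subtlety is the $\Bt$ branch, where we must check that the first $\mathsf{MarkUp}$ does not disturb $u$ or its subtree before $\mathsf{MarkDown}$ runs. This is handled by the observation that $\mathsf{MarkUp}(\parent(u))$ touches only proper ancestors of $u$.

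For (iii), note that $T$ is finite. The list $\textsc{DepO}(0,h-1)$ is finite and, by a short induction on $j-i$, is a permutation of $\{0,\dots,h-1\}$: the midpoint $m$ lies in $[i,j]$ and splits it into the disjoint intervals $[m+1,j]$ and $[i,m-1]$. $\mathsf{MarkUp}$ terminates because each step reduces the depth. $\mathsf{MarkDown}$ terminates because it recurses on a finite subtree. Hence the algorithm terminates. For completeness, every depth occurs in the traversal, so every node $u$ is visited in the inner loop. If $u$ is still $\Unk$ at that visit, then in every branch $u$ receives a label: $\Lo$ by $\mathsf{MarkDown}(u)$, $\Hi$ by $\mathsf{MarkUp}(u)$, or $\Lo$ in the $\Bt$ branch. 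By (ii), that label then persists until the algorithm returns, so every node is labeled Heavy or Light at the end.
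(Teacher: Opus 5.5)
Your proof is correct and follows the paper's argument essentially step for step. The shared ingredients are: induction over completed inner-loop iterations with (i) as the invariant; the consequences $\Anc(u)\cap L=\varnothing$ and $\Desc(u)\cap H=\varnothing$ for an unlabeled $u$; the observation that early stopping skips only already-Heavy ancestors or already-Light subtrees; and, in the $\Bt$ branch, the point that $\mathsf{MarkUp}(\parent(u))$ touches only proper ancestors of $u$, so it cannot interfere with $\mathsf{MarkDown}(u)$. For (iii) you likewise argue as the paper does, that $\textsc{DepO}$ lists each depth exactly once, the marking routines traverse finite chains or subtrees, and every node visited while unknown receives a label that persists by (ii).
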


\begin{lemma}[Positive-count query bound]
\label{lem:positive-count-queries}
Fix a database $D$ of size $n$. For any execution of Algorithm~\ref{alg:tree}, let $\mathcal Q$ be the set of nodes at which $\GBT$ (Algorithm~\ref{alg:guard}) is invoked, and define
\begin{align}
    \mathcal Q_+
    := \{u\in\mathcal Q : c_u(D)>0\}.
\end{align}
Then
\begin{align}
    |\mathcal Q_+|
    \le
    \sum_{u\in\mathcal Q} c_u(D)
    \le n\lceil\log_2(h+1)\rceil.
\end{align}
\end{lemma}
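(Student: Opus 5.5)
The proof is a double-counting argument. For the first inequality, every $u\in\mathcal Q_+$ has $c_u(D)$ a positive integer, hence $c_u(D)\ge 1$. All counts are nonnegative, so
\begin{align}
|\mathcal Q_+|=\sum_{u\in\mathcal Q_+}1\le\sum_{u\in\mathcal Q_+}c_u(D)\le\sum_{u\in\mathcal Q}c_u(D).
\end{align}

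For the second inequality, we expand each node count using Eq.~\eqref{eq:count-node}, writing it as a sum over records: $c_u(D)=\sum_{z\in D}\ind\{u\in P_z\}$. Here $D$ is treated as a multiset, and $P_z$ is the root-to-leaf path of record $z$; $u\in P_z$ holds exactly when the leaf $z$ lies in $\Desc(u)$. Exchanging the two finite sums gives
\begin{align}
\sum_{u\in\mathcal Q}c_u(D)=\sum_{z\in D}\sum_{u\in\mathcal Q}\ind\{u\in P_z\}=\sum_{z\in D}|\mathcal Q\cap P_z|.
\end{align}

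We then apply Lemma~\ref{lem:path-query-bound} to each path $P_z$. Each $P_z$ is a root-to-leaf path of the public tree, and the lemma holds for every execution and every database. It gives $|\mathcal Q\cap P_z|\le\lceil\log_2(h+1)\rceil$. Summing over the $n=|D|$ records, with multiplicity, yields $n\lceil\log_2(h+1)\rceil$.

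The only point needing care is the path convention. If a leaf sits at depth less than $h-1$, its path has fewer than $h$ nodes; Lemma~\ref{lem:path-query-bound} still covers it because it is stated for every root-to-leaf path. Also, $\mathcal Q$ is a set: by Lemma~\ref{lem:consistency-termination}(ii) each node is queried at most once, so the sum over $\mathcal Q$ has no repeated terms. Beyond these two remarks, the argument is the one-line exchange of summation above.
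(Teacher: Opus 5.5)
Your proposal is correct and takes essentially the same approach as the paper. You exchange the sums to get $\sum_{u\in\mathcal Q}c_u(D)=\sum_{z\in D}|\mathcal Q\cap P_z|$, apply Lemma~\ref{lem:path-query-bound} to each record's path, and use $c_u(D)\ge 1$ on $\mathcal Q_+$; this matches the paper's counting identity and query bound step for step.
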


\subsubsection{Dense implementation cost}
\label{sec:app-dense-cost}
We analyze the dense implementation of Algorithm~\ref{alg:tree}.

\vspace{0pt}
\noindent\textbf{Computational conventions.}
We assume that accessing counts and labels, moving along one tree edge,
arithmetic, comparisons, and Laplace sampling each take constant time.
Each node identifier, count, or label occupies constant space.
Each record gives direct access to its leaf, and the public height $h$
is given. Auxiliary space excludes the input database and the given
public tree.

\begin{lemma}[Dense implementation cost]
\label{lem:spv4-dense-cost}
Let $n:=|D|$ and $N:=|V|$. For every execution of
Algorithm~\ref{alg:tree}, the following hold.
\begin{enumerate}
    \item[(i)] \textbf{Scans, queries, and assignments.}
    The inner node loop considers each node exactly once. There are at
    most $N$ calls to \textsc{GuardedBT} and exactly $N$ marking
    assignments (i.e., labeling a node to $\Hi$ or $\Lo$) after initialization.

    \item[(ii)] \textbf{Total cost.}
    Under the computational conventions above, the total marking work
    is $O(N)$. Including count computation, level-list construction,
    traversal, and output, the algorithm can be implemented in
    $O(n+N)$ time and $O(N)$ auxiliary space.
\end{enumerate}
\end{lemma}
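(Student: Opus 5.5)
The statement to prove is Lemma~\ref{lem:spv4-dense-cost}. The plan is to count everything against the invariants of Lemma~\ref{lem:consistency-termination}, starting with part (i).

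\textbf{The inner loop and the query count.} The outer loop iterates over \textsc{DepO}$(0,h-1)$. A short induction on the recursion shows that \textsc{DepO}$(i,j)$ outputs a permutation of $\{i,\dots,j\}$: the midpoint $m$ appears once, and the two recursive calls cover the disjoint intervals $[m+1,j]$ and $[i,m-1]$. Hence every depth appears exactly once. The inner loop then lists the nodes of each depth, so every node is considered exactly once. Each consideration makes at most one call to $\GBT$, which gives at most $N$ calls.

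\textbf{Exactly $N$ assignments.} By Lemma~\ref{lem:consistency-termination}(ii), every assignment after initialization changes $\Unk$ to $\Hi$ or $\Lo$. So each node is assigned at most once. By part (iii), every node ends labeled, so each node is assigned at least once. Together these give exactly $N$ assignments.

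\textbf{The marking work is $O(N)$.} I will charge each step of the routines to an assignment. This is the step that needs the most care, because a cheap bound requires the early-exit checks.

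\begin{itemize}
\item \textsc{MarkUp}: every iteration of the while loop performs one assignment. The loop also does one terminating check, at $\mathrm{null}$ or at a node already labeled $\Hi$, costing $O(1)$ per call.
\item \textsc{MarkDown}: each recursive invocation either returns immediately (its node is already $\Lo$) or performs one assignment.
\begin{itemize}
\item An assigning invocation at $u$ makes one recursive call per child, and each such call is charged to the edge from $u$ to that child.
\item Since $u$ is assigned $\Lo$ only once, each edge is charged at most once. Since there are $N-1$ edges, the total number of invocations is $O(N)$.
\end{itemize}
\end{itemize}

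Descendant-closedness of the Light set guarantees that when \textsc{MarkDown} is called on a node whose label is $\Unk$, none of its proper ancestors is Light. So the first invocation of \textsc{MarkDown} at any node $v$ comes from an assignment at $v$ or from the parent edge of $v$. One point still has to be ruled out: line 13 calling \textsc{MarkDown}$(u)$ when $u$ is already labeled. This cannot happen, because the loop skips labeled nodes.

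\textbf{Top-level calls and the remaining bookkeeping.} The top-level calls per scanned node are $O(1)$, so across all nodes they cost $O(N)$. The other costs are:

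\begin{itemize}
\item Computing all counts: $O(n)$ to increment the leaf counts, then $O(N)$ for a post-order aggregation over the tree.
\item Building the per-depth node lists: one BFS in $O(N)$ time.
\item Generating \textsc{DepO}: $O(h)\subseteq O(N)$.
\item The label array and the output set: $O(N)$.
\end{itemize}

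The counts, labels, level lists, and recursion stack each occupy $O(N)$ space. Summing all of these gives $O(n+N)$ time and $O(N)$ auxiliary space, which completes part (ii).
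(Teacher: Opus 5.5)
Your proposal is correct and follows essentially the same argument as the paper. Both proofs rest on the same four steps: the depth order lists each node exactly once, Lemma~\ref{lem:consistency-termination}(ii)--(iii) gives exactly $N$ assignments, \textsc{MarkUp} iterations are charged to assignments and recursive \textsc{MarkDown} calls to the $N-1$ tree edges, and the remaining bookkeeping costs $O(n+N)$. The differences are cosmetic: you build the level lists by BFS instead of an ordered DFS, and your paragraph ruling out a call of \textsc{MarkDown} on an already-labeled node is not needed, since such a call would return in $O(1)$ anyway.
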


\subsection{Proofs of Algorithmic Properties}

\begin{proof}[Proof of Lemma~\ref{lem:consistency-termination}]
\textbf{(i)--(ii) Consistency and permanent labels.}
We induct on the completed iterations of the inner node loop.
Initially, both labeled sets are empty, and skipping an already
labeled node changes nothing. Otherwise, let $u$ be the queried
node and $H,L$ the current Heavy and Light sets. Since $u$ is
unknown, the closure invariant implies
\[
    \operatorname{Anc}(u)\cap L=\varnothing,
    \qquad
    \operatorname{Desc}(u)\cap H=\varnothing.
\]
Thus upward marking encounters no Light node, and downward
marking encounters no Heavy node. In the Between case, the
upward call modifies only proper ancestors of $u$, disjoint
from $\operatorname{Desc}(u)$, so it cannot introduce a conflict
for the downward call. Both routines stop before assigning to
a node already carrying the intended label. Hence every
assignment changes an unknown label to Heavy or Light.

By the closure invariant, early stopping omits only ancestors
already Heavy or subtrees already Light. Consequently, High
adds $\operatorname{Anc}(u)$ to $H$, Low adds
$\operatorname{Desc}(u)$ to $L$, and Between adds
$\operatorname{Anc}(u)\setminus\{u\}$ to $H$ and
$\operatorname{Desc}(u)$ to $L$. These updates preserve closure
and disjointness, completing the induction.

\medskip
\noindent\textbf{(iii) Termination and completeness.}
The recursive intervals in \textsc{DepthOrder} strictly shrink
and partition the current interval except for its middle depth.
Thus the recursion terminates and lists each depth exactly once,
so every node is considered exactly once.
Each query terminates, and each marking call traverses a finite
ancestor chain or subtree; hence Algorithm~\ref{alg:tree}
terminates. When considered, a node is either already labeled
or receives a label from its query, regardless of the answer.
By~(ii), this label persists, so every node is labeled at return.
\end{proof}

\begin{proof}[Proof of Lemma~\ref{lem:path-query-bound}]
We induct on the number $s$ of levels in a recursive interval
of \textsc{DepthOrder}, proving that it contains at most
$\lceil\log_2(s+1)\rceil$ queried nodes of any root-to-leaf
path $P$ in any execution. The case $s=0$ is immediate.

For $s\ge1$, processing the middle level contributes at most
one query on $P$ and leaves every node at that level labeled.
By the closure and permanence properties of
Lemma~\ref{lem:consistency-termination}, all remaining queries
on $P$ within the interval lie in at most one subinterval,
containing at most $\lfloor s/2\rfloor$ levels.
The induction hypothesis therefore bounds the total by
\[
    1+\left\lceil\log_2\bigl(\lfloor s/2\rfloor+1\bigr)\right\rceil
    =\lceil\log_2(s+1)\rceil.
\]
Taking $s=h$ proves the claim.
\end{proof}

\begin{proof}[Proof of Lemma~\ref{lem:positive-count-queries}]
\textbf{Counting identity.}
List the records of $D$ as $z_1,\ldots,z_n$, including repetitions.
A record $z_r$ contributes to $c_u(D)$ exactly when $u\in P_{z_r}$.
Interchanging the two finite sums gives
\begin{align}
    \sum_{u\in\mathcal Q} c_u(D)
    &=
    \sum_{u\in\mathcal Q}
    \sum_{r=1}^{n} \mathbf{1}\{u\in P_{z_r}\}=
    \sum_{r=1}^{n}
    \sum_{u\in\mathcal Q} \mathbf{1}\{u\in P_{z_r}\}=
    \sum_{r=1}^{n} |\mathcal Q\cap P_{z_r}|.
\end{align}

\medskip
\noindent
\textbf{Query bound.}
Every node in $\mathcal Q_+$ has an integer count of at least one,
so
\begin{align}
    |\mathcal Q_+|
    \le \sum_{u\in\mathcal Q} c_u(D).
\end{align}
By the counting identity and Lemma~\ref{lem:path-query-bound},
\begin{align}
    \sum_{u\in\mathcal Q} c_u(D)
    =
    \sum_{r=1}^{n} |\mathcal Q\cap P_{z_r}|
    \le
    \sum_{r=1}^{n} \lceil\log_2(h+1)\rceil.
\end{align}
\end{proof}

\begin{proof}[Proof of Lemma~\ref{lem:one-middle-target}]Let $\mathcal B$ be the set of
queried nodes at which Algorithm~\ref{alg:guard} returns $\Bt$.  We prove the following two items:
\begin{enumerate}
\item[(i)] If a query at $u$ returns $\Bt$, then, after the corresponding inner-loop iteration, every node on every root-to-leaf path through $u$ is labeled. No subsequent query occurs on any such path.
\item[(ii)] Every root-to-leaf path $P$ satisfies $|\mathcal B\cap P|\le 1.$
\end{enumerate}
Proof for (i).
When a query at $u$ returns $\mathrm{B}$, the algorithm calls
\textsc{MarkUp} on $\operatorname{parent}(u)$ and
\textsc{MarkDown} on $u$. After these calls, every proper ancestor
of $u$ is Heavy, and $u$ and all its descendants are Light.
Every node on a root-to-leaf path through $u$ belongs to one of
these two sets. By Lemma~\ref{lem:consistency-termination}, these labels are permanent, and
Algorithm~5 skips labeled nodes. Hence no subsequent query
occurs on such a path.

\noindent Proof for (ii). Suppose two distinct nodes of $\mathcal B$ were comparable
under the ancestor relation. They would lie on a common root-to-leaf path. By~(i), the Between answer at whichever node
was queried first would prevent the other from being queried, leading to a contradiction. Thus $|\mathcal B\cap P|\le 1$ for every root-to-leaf path $P$.
\end{proof}

\begin{proof}[Proof of Lemma~\ref{lem:spv4-dense-cost}]
\textbf{(i) Scans, queries, and assignments.}
The depth order lists each depth exactly once, and each node appears
in exactly one level list. Only unknown nodes are queried, so there
are at most $N$ calls to \textsc{GuardedBT}. By
Lemma~\ref{lem:consistency-termination}, every marking assignment
labels a previously unknown node and every node is eventually labeled.
Hence there are exactly $N$ such assignments.

\medskip
\noindent\textbf{(ii) Total cost.}
Each successful \textsc{MarkUp} iteration labels a previously unknown
node, so there are at most $N$ such iterations in total. Each invocation
has at most one final stopping test. \textsc{MarkDown} scans a node's
child list only when that node first becomes Light, so each of the
$N-1$ tree edges is scanned at most once by this routine. Every recursive
downward call is charged to such an edge. Each query initiates at most
one upward and one downward call. Together with (i), these bounds give
$O(N)$ total marking work, without requiring bounded arity.

Initialize all counts to zero, tally the $n$ leaf records, and sum child
counts in a bottom-up traversal, taking $O(n+N)$ time. An ordered
depth-first traversal builds the level lists in $O(N)$ time, with each
list following the order induced by the public child order. The depth
order is generated in $O(h)$ time. Label initialization, node scans,
\textsc{GuardedBT} calls, and output take $O(N)$ additional time. Since
$h\le N$, the total time is $O(n+N)$. Counts, labels, level lists, and
traversal stacks use $O(N)$ auxiliary space.
\end{proof}

%% file: App-proofs.tex
\section{Privacy and Utility Proofs}\label{sec:app-proof}

\subsection{Proof for the Bespoke Analysis for TCT}\label{sec:app-proof-one-hit}

\begin{proof}[Proof of Lemma~\ref{lem:tct-one-hit}]
Fix neighboring inputs $D_0,D_1$ and fix the analyst's
data-independent randomness. We work with an analysis-only augmented
transcript.

At each call, use the decomposition
\begin{align}
  P_b=(1-\kappa)\bigl(pC+(1-p)N_b\bigr)+\kappa F_b
\end{align}
to generate the output as follows. First define a branch in \emph{exceptional}, \emph{common}, and \emph{residual}
\begin{align}
  J\in\{\mathsf{exc},\mathsf{com},\mathsf{res}\}
\end{align}
with probabilities
\begin{align}
  \kappa,\qquad (1-\kappa)p,\qquad
  (1-\kappa)(1-p),
\end{align}
respectively. Conditional on the branch, sample from $F_b$, $C$, or
$N_b$, respectively. The parameters of the decomposition may depend on the preceding
transcript, but for a fixed preceding history the branch probabilities
are the same under $D_0$ and $D_1$. 

\paragraph{Exceptional branches.}
Let
\begin{align}
  E_{\mathrm{exc}}
  :=
  \{\text{an exceptional branch is selected at some call}\}.
\end{align}
For $i\in\{1,\ldots,K\}$, let $E_i$ be the event that the $i$th call
occurs and selects the exceptional branch. Conditional on every
history preceding that call, its exceptional probability is at most
$\kappa_0$. Hence, for either $b\in\{0,1\}$,
\begin{align}
  \Pr_b[E_i]\le \kappa_0,
\end{align}
and therefore
\begin{equation}
  \Pr_b[E_{\mathrm{exc}}]
  \le \sum_{i=1}^{K}\Pr_b[E_i]
  \le K\kappa_0.
  \label{eq:tct-exception-bound}
\end{equation}

\paragraph{Residual overflow.}
Call a sample from $N_b$ a \emph{residual draw}. Define the overflow event:
\begin{align}
  E_{\mathrm{ov}}
  :=
  \{\text{at least $r+1$ residual draws occur before any exception}\}.
\end{align}
For $j\ge 0$, define $A_j$ to be the event that the first $j$
residual draws all occur before any exceptional branch and all $j$
outputs miss their respective targets. In particular,
$A_0$ is satisfied with certainty.

We claim that, for either input,
\begin{equation}
  \Pr_b[A_j]\le (1-q)^j.
  \label{eq:tct-Aj-bound}
\end{equation}
Suppose $A_{j-1}$ has occurred. The interaction may terminate
before another residual draw is reached. If a $j$th residual draw is
reached, however, its conditional law is $N_b$, and therefore its
conditional probability of missing the target is at most $1-q$.
Consequently,
\begin{align}
  \Pr_b[A_j\mid A_{j-1}]\le 1-q,
\end{align}
and induction gives~\eqref{eq:tct-Aj-bound}.

If $E_{\mathrm{ov}}$ occurs, then the first $r$ residual draws must all
miss their targets; otherwise a target hit would have stopped the
interaction before the $(r+1)$st residual draw. Thus
\begin{align}
  E_{\mathrm{ov}}\subseteq A_r,
\end{align}
and hence
\begin{equation}
  \Pr_b[E_{\mathrm{ov}}]\le (1-q)^r.
  \label{eq:tct-overflow-bound}
\end{equation}
When $r=K$, the event $E_{\mathrm{ov}}$ is impossible, because there
are at most $K$ calls in total.

\paragraph{The good event.}
Let
\begin{align}
  E_{>r}
  :=
  \{\text{more than $r$ residual draws occur during the interaction}\},
\end{align}
and define
\begin{align}
  G
  :=
  E_{\mathrm{exc}}^c \cap E_{>r}^c.
\end{align}
Thus $G$ is the event that no exceptional branch occurs and at most
$r$ residual draws occur.

By De Morgan's law,
\begin{equation}
  G^c
  =
  E_{\mathrm{exc}}\cup E_{>r}.
  \label{eq:tct-good-complement}
\end{equation}
Decompose the second event according to whether an exceptional branch
occurs:
\begin{align}
  E_{>r}
  =
  (E_{\mathrm{exc}}\cap E_{>r})
  \cup
  (E_{\mathrm{exc}}^c\cap E_{>r}).
\end{align}
Since
\begin{align}
  E_{\mathrm{exc}}\cap E_{>r}
  \subseteq E_{\mathrm{exc}},
\end{align}
we obtain
\begin{align}
  G^c
  =
  E_{\mathrm{exc}}\cup E_{>r}
  \subseteq
  E_{\mathrm{exc}}
  \cup
  (E_{\mathrm{exc}}^c\cap E_{>r}).
\end{align}

Recall the overflow event
\begin{align}
  E_{\mathrm{ov}}
  :=
  \{\text{the $(r+1)$st residual draw occurs before any exceptional
  branch}\}.
\end{align}
On the event $E_{\mathrm{exc}}^c\cap E_{>r}$, no exceptional branch
occurs anywhere in the interaction and at least $r+1$ residual draws
occur. Therefore the $(r+1)$st residual draw necessarily occurs before
any exceptional branch, and hence
\begin{align}
  E_{\mathrm{exc}}^c\cap E_{>r}
  \subseteq E_{\mathrm{ov}}.
\end{align}
Consequently,
\begin{equation}
  G^c
  \subseteq
  E_{\mathrm{exc}}\cup E_{\mathrm{ov}}.
  \label{eq:tct-good-bad-inclusion}
\end{equation}
Taking probabilities and applying the union bound gives, for either
$b\in\{0,1\}$,
\begin{align}
  \Pr_b[G^c]
  \le
  \Pr_b[E_{\mathrm{exc}}]
  +
  \Pr_b[E_{\mathrm{ov}}]
  \le
  K\kappa_0
  +
  \mathbf{1}_{\{r<K\}}(1-q)^r.\label{eq:tct-good-event-bound}
\end{align}

Now fix an augmented transcript $\omega\in G$. Since no exceptional
branch occurs, every factor in its probability comes either from a
common branch or from a residual branch. The branch-selection
probabilities are identical under the two inputs, and the common
distribution $C$ is also identical. Hence these factors cancel in the
likelihood ratio. For each residual draw,
\begin{align}
  N_0(y)\le e^\lambda N_1(y)
  \qquad\text{and}\qquad
  N_1(y)\le e^\lambda N_0(y).
\end{align}
Since $\omega\in G$ contains at most $r$ residual draws,
\begin{equation}
  \Pr_0[\omega]\le e^{r\lambda}\Pr_1[\omega],
  \qquad
  \Pr_1[\omega]\le e^{r\lambda}\Pr_0[\omega].
  \label{eq:tct-pointwise-good}
\end{equation}

Finally, let $S$ be any event in the released transcript space, and
let $\widetilde S$ be its inverse image in the augmented transcript
space. Using
Eq.~\eqref{eq:tct-good-event-bound} and
Eq.~\eqref{eq:tct-pointwise-good},
\begin{align*}
  \Pr_0[S]
  &\le
    \Pr_0[\widetilde S\cap G]+\Pr_0[G^c] \\
  &\le
    e^{r\lambda}\Pr_1[\widetilde S\cap G]
    +K\kappa_0
    +\mathbf{1}_{\{r<K\}}(1-q)^r \\
  &\le
    e^{r\lambda}\Pr_1[S]
    +K\kappa_0
    +\mathbf{1}_{\{r<K\}}(1-q)^r.
\end{align*}
Exchanging $D_0$ and $D_1$ gives the reverse inequality. 
\end{proof}

The important point for this Lemma that we will use subsequently is that $r$ is an analysis parameter that we are free to tune: increasing $r$ weakens the $\eps$ component but strengthens the $\delta$ component of the $(\eps, \delta)$-DP guarantee.  
We will use this freedom subsequently to choose a value of $r$ that achieves the target $(\eps,\delta)$ pair for the main theorem. 

\subsection{Proofs for Privacy}\label{sec:app-transcript-privacy}

\begin{proof}[Proof of Lemma~\ref{lem:BTtarget}]
Write $P_i:=P_{c_i}$ for $i\in\{0,1\}$, and exchange the indices
if necessary so that $c_0\le c_1$. Set
\begin{align}
  \xi:=e^\lambda>1,
  \qquad
  \gamma:=1-e^{-\lambda(b-a)}.
\end{align}
We will specify the probabilities of each output under $C,N_0,N_1$
and prove that
\begin{align}
  P_i=pC+(1-p)N_i,
  \qquad
  N_0\approx_\lambda N_1,
  \qquad
  N_i(\mathsf B)\ge\frac{\gamma}{1+\xi}.
\end{align}

\paragraph{Step 1: The original output probabilities.}
For $Z\sim\operatorname{Lap}(1/\lambda)$, define
\begin{align}
    \ell_i&:=P_i(\mathsf L)=\Pr[c_i+Z<a],\\
    m_i&:=P_i(\mathsf B)=\Pr[a\le c_i+Z\le b],\\
    u_i&:=P_i(\mathsf H)=\Pr[c_i+Z>b].
\end{align}
These are ordinary real numbers satisfying
$\ell_i+m_i+u_i=1$. In particular, $m_i>0$ because $a<b$ and the
Laplace density is positive everywhere. Since $c_0\le c_1$,
\begin{align}\label{eq:monotone-l-u}
  \ell_0\ge\ell_1,
  \qquad
  u_0\le u_1.
\end{align}
The densities $f_i(x)=(\lambda/2)e^{-\lambda|x-c_i|}$ satisfy $
  \frac{f_0(x)}{f_1(x)}
  =e^{\lambda(|x-c_1|-|x-c_0|)}
  \in[\xi^{-1},\xi],$
since $|c_1-c_0|\le1$. Integrating over the regions corresponding to
any set of output symbols gives
\begin{align}
  P_0(S)\le\xi P_1(S),
  \qquad
  P_1(S)\le\xi P_0(S),
  \qquad S\subseteq\{\mathsf L,\mathsf B,\mathsf H\}.
  \label{eq:bt-proof-original-private}
\end{align}

\paragraph{Step 2: Choose how much probability is shared.}
We assign the same amount $\alpha_{\mathsf L}$ of the Low probability
under each input to the \emph{common component}, and similarly assign
$\alpha_{\mathsf H}$ of the High probability. We assign \emph{no Middle
probability to the common component}.

The amounts must be chosen so that the remaining probabilities still
satisfy the factor-$\xi$ privacy bounds. For the Low and High outputs,
respectively, choose them to satisfy
\begin{align}
  \ell_0-\alpha_{\mathsf L}
    =\xi(\ell_1-\alpha_{\mathsf L}),
  \qquad
  u_1-\alpha_{\mathsf H}
    =\xi(u_0-\alpha_{\mathsf H}).
\end{align}
Solving these scalar equations gives
\begin{equation}
  \alpha_{\mathsf L}
    :=\frac{\xi\ell_1-\ell_0}{\xi-1},
  \qquad
  \alpha_{\mathsf H}
    :=\frac{\xi u_0-u_1}{\xi-1}.
  \label{eq:bt-proof-common-masses}
\end{equation}
These choices are valid fractions of probability mass to assign to the common component. 
Indeed, Eq.~\eqref{eq:bt-proof-original-private} gives $\ell_0\le\xi\ell_1$ and $u_1\le\xi u_0$, so both quantities are nonnegative. 
Also, by Eq.~\eqref{eq:monotone-l-u}
\begin{align}
    \alpha_{\mathsf L}
      &=\ell_1-\frac{\ell_0-\ell_1}{\xi-1}
        \le\ell_1\le\ell_i,\\
    \alpha_{\mathsf H}
      &=u_0-\frac{u_1-u_0}{\xi-1}
        \le u_0\le u_i.
\end{align}
Thus neither amount exceeds the corresponding output probability under either input.

Set
\begin{align}
  p:=\alpha_{\mathsf L}+\alpha_{\mathsf H},
  \qquad
  \rho:=1-p.
\end{align}
For each $i$, we have $0\le p\le\ell_i+u_i=1-m_i<1$, so
$\rho\ge m_i>0$.

\paragraph{Step 3: Define the distributions and verify the mixture.}
If $p>0$, define $C$ by its three output probabilities:
\begin{align}
  C(\mathsf L):=\frac{\alpha_{\mathsf L}}p,
  \qquad C(\mathsf B):=0,
  \qquad C(\mathsf H):=\frac{\alpha_{\mathsf H}}p.
\end{align}
If $p=0$, define $C(\mathsf L)=1$ and
$C(\mathsf B)=C(\mathsf H)=0$; this choice is immaterial because
$C$ then has mixture weight zero. In either case, $C$ is a probability
distribution and
\begin{align}
  pC(\mathsf L)=\alpha_{\mathsf L},
  \qquad pC(\mathsf B)=0,
  \qquad pC(\mathsf H)=\alpha_{\mathsf H}.
\end{align}

For each $i\in\{0,1\}$, define $N_i$ by
\begin{equation}
  \begin{aligned}
    N_i(\mathsf L)&:=\frac{\ell_i-\alpha_{\mathsf L}}\rho,\\
    N_i(\mathsf B)&:=\frac{m_i}\rho,\\
    N_i(\mathsf H)&:=\frac{u_i-\alpha_{\mathsf H}}\rho.
  \end{aligned}
  \label{eq:bt-proof-residual-probabilities}
\end{equation}
All three quantities are nonnegative by Step~2, and their sum is
\begin{align}
  \frac{\ell_i-\alpha_{\mathsf L}+m_i+u_i-\alpha_{\mathsf H}}\rho
  =\frac{1-p}\rho=1.
\end{align}
Hence $N_i$ is a probability distribution.

Now consider sampling from $C$ with probability $p$, and from $N_i$
with probability $\rho=1-p$. For each output, the resulting probability
is
\begin{align}
    pC(\mathsf L)+\rho N_i(\mathsf L)
      &=\alpha_{\mathsf L}+(\ell_i-\alpha_{\mathsf L})=\ell_i,\\
    pC(\mathsf B)+\rho N_i(\mathsf B)
      &=0+m_i=m_i,\\
    pC(\mathsf H)+\rho N_i(\mathsf H)
      &=\alpha_{\mathsf H}+(u_i-\alpha_{\mathsf H})=u_i.
\end{align}
Thus this sampling procedure reproduces $P_i$ exactly. Equivalently,
\begin{equation}
  P_i=pC+(1-p)N_i.
  \label{eq:bt-proof-mixture}
\end{equation}

\paragraph{Step 4: Verify privacy of the residual pair.}
The two equations used to choose the common amounts in Step~2, together
with the common normalizing factor $\rho$, give
\begin{align}
  N_0(\mathsf L)=\xi N_1(\mathsf L),
  \qquad
  N_1(\mathsf H)=\xi N_0(\mathsf H).
\end{align}
These identities also hold when the corresponding residual
probabilities are zero. At the Middle output,
\begin{align}
  N_0(\mathsf B)=\frac{m_0}\rho
    \le\xi\frac{m_1}\rho=\xi N_1(\mathsf B),
\end{align}
and the reverse inequality also holds  from Eq.~\eqref{eq:bt-proof-original-private}.
Consequently, $N_0\approx_\lambda N_1$.

\paragraph{Step 5: Bound the residual Middle probability.}
By construction, $N_i(\mathsf B)=m_i/\rho$. Thus it remains to show
that the remaining total probability $\rho$ is not too large relative
to $m_i$.

First, applying Eq.~\eqref{eq:bt-proof-original-private} to the non-Low
and non-High events gives
\begin{align}
  1-\ell_1\le\xi(1-\ell_0),
  \qquad
  1-u_0\le\xi(1-u_1).
\end{align}
Using the first inequality and $\alpha_{\mathsf H}\ge0$, we obtain
\begin{align}
  \rho:=1-(\alpha_{\mathsf L}+\alpha_{\mathsf H})
  &\le1-\alpha_{\mathsf L}
   =\frac{\xi(1-\ell_1)-(1-\ell_0)}{\xi-1}
   \le\frac{(\xi^2-1)(1-\ell_0)}{\xi-1}
   =(1+\xi)(1-\ell_0)
   \le(1+\xi)(1-\ell_i).
  \label{eq:bt-proof-residual-low}
\end{align}
Similarly, using the second inequality and $\alpha_{\mathsf L}\ge0$,
\begin{align}
  \rho
  &\le1-\alpha_{\mathsf H}
   =\frac{\xi(1-u_0)-(1-u_1)}{\xi-1}
   \le\frac{(\xi^2-1)(1-u_1)}{\xi-1}
   =(1+\xi)(1-u_1)
   \le(1+\xi)(1-u_i).
  \label{eq:bt-proof-residual-high}
\end{align}

Fix $i\in\{0,1\}$. There are three cases, depending only on the
location of $c_i$.

\emph{If $c_i\le a$,} the Laplace upper-tail formula gives
\begin{align}
  1-\ell_i=\frac12e^{-\lambda(a-c_i)},
  \qquad
  m_i=\frac12e^{-\lambda(a-c_i)}
             \bigl(1-e^{-\lambda(b-a)}\bigr)
      =\gamma(1-\ell_i).
\end{align}
Therefore, by Eq.~\eqref{eq:bt-proof-residual-low},
\begin{align}
  N_i(\mathsf B)
  =\frac{m_i}\rho
  \ge\frac{\gamma(1-\ell_i)}{(1+\xi)(1-\ell_i)}
  =\frac{\gamma}{1+\xi}.
\end{align}

\emph{If $c_i\ge b$,} the Laplace lower-tail formula gives
\begin{align}
  1-u_i=\frac12e^{-\lambda(c_i-b)},
  \qquad
  m_i=\frac12e^{-\lambda(c_i-b)}
             \bigl(1-e^{-\lambda(b-a)}\bigr)
      =\gamma(1-u_i).
\end{align}
Eq.~\eqref{eq:bt-proof-residual-high} yields
\begin{align}
  N_i(\mathsf B)
  \ge\frac{\gamma(1-u_i)}{(1+\xi)(1-u_i)}
  =\frac{\gamma}{1+\xi}.
\end{align}

\emph{If $a<c_i<b$,} set
\begin{align}
  x:=e^{-\lambda(c_i-a)},
  \qquad
  y:=e^{-\lambda(b-c_i)}.
\end{align}
Then $x,y\in(0,1)$, $xy=e^{-\lambda(b-a)}=1-\gamma$, and
\begin{align}
  m_i=1-\frac{x+y}{2}.
\end{align}
Since $(1-x)(1-y)\ge0$, we have
$x+y\le1+xy=2-\gamma$, and hence $m_i\ge\gamma/2$.
Because $\rho\le1$ and $1+\xi>2$,
\begin{align}
  N_i(\mathsf B)=\frac{m_i}\rho
    \ge m_i\ge\frac{\gamma}{2}
    \ge\frac{\gamma}{1+\xi}.
\end{align}

The bound holds for each $i$, regardless of the location of the other
count. In particular, it covers pairs lying on different sides of a
threshold. Finally, since $b-a=2/\lambda$,
\begin{align}
  \frac{\gamma}{1+\xi}
  =\frac{1-e^{-\lambda(b-a)}}{1+e^\lambda}
  =\frac{1-e^{-2}}{1+e^\lambda}
  =\qstar.
\end{align}
Together with Steps~3--4, this proves the required pure target
decomposition.
\end{proof}

\begin{proof}[Proof of Lemma~\ref{lem:guards}]
Write $P_i:=P_{c_i}$ for the unguarded output laws from
Lemma~\ref{lem:BTtarget}, and $Q_i:=Q_{c_i}$ for the guarded output
laws. Let $\Delta_s$ denote the probability distribution that returns
$s\in\mathcal Y$ with probability one. Whenever $\kappa=0$, choose
the unused distributions $F_0=F_1=\Delta_{\mathsf B}$.

Exchange the indices if necessary so that $c_0\le c_1$. If the
counts are distinct, $c_1=c_0+1$.

\paragraph{Identical guarded laws.}
If $c_0=c_1$, or both counts lie in the same deterministic region,
then $Q_0=Q_1$. Choose
\begin{align}
  \kappa=0,\qquad p=1,\qquad C=Q_0=Q_1,
  \qquad N_0=N_1=\Delta_{\mathsf B}.
\end{align}
These choices give the required decomposition, with identical
residual laws and residual target probability one.

\paragraph{Both counts are in the randomized region.}
Suppose $a-R<c_0\le c_1<b+R$. Then $Q_i=P_i$ for both inputs.
By Lemma~\ref{lem:BTtarget}, there exist $p\in[0,1]$ and
probability distributions $C,N_0,N_1$ such that
\begin{align}
  Q_i=P_i=pC+(1-p)N_i,\qquad i\in\{0,1\},\\
  N_0\approx_\lambda N_1,\qquad
  \min_i N_i(\mathsf B)\ge\qstar.
\end{align}
The hypotheses of that lemma hold because $|c_1-c_0|\le1$ and
$b-a=2/\lambda$. Taking $\kappa=0$ therefore gives the desired
guarded decomposition without changing any of these witnesses.

\paragraph{One guard is crossed.}
The counts cannot lie in opposite deterministic regions: that would
require $c_1-c_0\ge b-a+2R>1$. Thus the only remaining cases have
one randomized input and one deterministic input. Define
\begin{align}
  (j,s,d):=
  \begin{cases}
    (1,\mathsf L,a-c_1),
      & c_0\le a-R<c_1,\\
    (0,\mathsf H,c_0-b),
      & c_0<b+R\le c_1.
  \end{cases}
\end{align}
Here $j$ indexes the randomized input, and $s$ is the deterministic
output on the other input. Hence
\begin{align}
  Q_j=P_j,\qquad Q_{1-j}=\Delta_s.
\end{align}
In the lower-guard case, $c_1=c_0+1\le a-R+1$; in the upper-guard
case, $c_0=c_1-1\ge b+R-1$. Consequently, in either case,
$d\ge R-1\ge0$.

Define $\kappa$ to be the probability that the randomized input
does not return $s$. The Laplace tail formulas used in the proof of
Lemma~\ref{lem:BTtarget} give
\begin{align}
  \kappa
  :=P_j(\mathcal Y\setminus\{s\})
  =\frac12e^{-\lambda d}
  \le\frac12e^{-\lambda(R-1)}
  =\delta_0.
\end{align}
In particular, $0<\kappa\le\delta_0\le1/2$. Choose
\begin{align}
  p=1,\qquad C=\Delta_s,\qquad
  N_0=N_1=\Delta_{\mathsf B},\qquad
  F_{1-j}=\Delta_s,
\end{align}
and define the remaining exceptional distribution by its output
probabilities:
\begin{align}
  F_j(y):=
  \begin{cases}
    0, & y=s,\\
    P_j(y)/\kappa, & y\ne s.
  \end{cases}
\end{align}
This is a probability distribution because
$\sum_{y\ne s}P_j(y)=\kappa$. Moreover, $P_j(s)=1-\kappa$, so,
for every $y\in\mathcal Y$,
\begin{align}
  Q_j(y)
    &=(1-\kappa)C(y)+\kappa F_j(y),\\
  Q_{1-j}(y)
    &=C(y)
     =(1-\kappa)C(y)+\kappa F_{1-j}(y).
\end{align}
Since $p=1$, these identities give exactly
\begin{align}
  Q_i=(1-\kappa)\bigl(pC+(1-p)N_i\bigr)+\kappa F_i,
  \qquad i\in\{0,1\}.
\end{align}
The residual distributions have zero mixture weight, but still
satisfy the conditions of the target definition:
\begin{align}
  N_0=N_1=\Delta_{\mathsf B},\qquad
  N_0\approx_\lambda N_1,\qquad
  N_i(\mathsf B)=1\ge\qstar.
\end{align}

These cases exhaust all neighboring count pairs. In every case,
the constructed witnesses satisfy the required residual privacy
and target bounds, with exceptional mass $\kappa\le\delta_0$.
\end{proof}

\paragraph{Transcript privacy.}
For a database $D$, let
\begin{align}
  \operatorname{Tr}(D)
  :=((u_1,s_1),\ldots,(u_m,s_m))
  \label{eq:query-transcript}
\end{align}
denote the complete query transcript of Algorithm~\ref{alg:tree},
where $u_j$ is the node queried at call $j$, $s_j\in\mathcal Y$
is its returned symbol, and $m$ is the total number of calls. Set
$L:=\lceil\log_2 (h+1)\rceil$. Substituting parameters from Lemma~\ref{lem:guards} to Lemma~\ref{lem:one-middle-target}, we obtain the following privacy guarantees. 

\begin{lemma}[Transcript privacy]
\label{lem:transcript-privacy}
Let $T$ be a finite public rooted tree of height $h\ge1$. Fix public parameters $\tau\ge 0$, $0<\lambda\le 1$,
and $R\ge 1$.
Run Algorithm~\ref{alg:tree} with thresholds
$a:=\tau-1/\lambda$ and $b:=\tau+1/\lambda$, using fresh
independent noise in every randomized invocation of $\GBT$. Let $\qstar$ and $\delta_0$ be as in
Eqs.~\eqref{eq:qexact} and~\eqref{eq:localdelta}, respectively.
For every neighboring pair $D_0\sim D_1$ and every integer
$1\le r\le L$, the transcripts
$\operatorname{Tr}(D_0)$ and $\operatorname{Tr}(D_1)$ obtained from Algorithm~\ref{alg:tree} are $\left(r\lambda,\delta_r\right)\text{-indistinguishable}$ for
\begin{align}
\delta_r:=L\delta_0+
    \mathbf{1}_{\{r<L\}}(1-\qstar)^r.
  \label{eq:transcript-privacy}
\end{align}
\end{lemma}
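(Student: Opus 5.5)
The plan is to reduce the whole transcript to an instance of Lemma~\ref{lem:tct-one-hit} with horizon $K=L$, run only over the calls on the path $P_z$. Fix a neighboring pair $D_0=D_1\uplus\{z\}$ (or the reverse), and let $P_z$ be the root-to-leaf path of $z$. By Eq.~\eqref{eq:adjcounts}, $c_u(D_0)=c_u(D_1)$ for every $u\notin P_z$, and $|c_u(D_0)-c_u(D_1)|\le 1$ for $u\in P_z$.

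The first step is to recast Algorithm~\ref{alg:tree} as an adaptive interaction $\mathcal A$ in the sense of Lemma~\ref{lem:tct-one-hit}. The traversal order from \textsc{DepO} and the public child order are data-independent. Given the transcript so far, the next queried node is a deterministic function of the previous responses, because labels are determined by propagation. Two kinds of call arise:
\begin{itemize}[leftmargin=*]
\item If the queried node $u$ lies off $P_z$, its response law $Q_{c_u}$ is identical under $D_0$ and $D_1$. We absorb it into the interaction as data-independent randomness. Formally, it is a call whose decomposition has $p=1$, $C=Q_{c_u}$ and $\kappa=0$, so it incurs no residual draw; equivalently, we treat it as part of the analyst's public coins.
\item If $u\in P_z$, the call is $\GBT(c_u;a,b,R,\lambda)$ on a pair of counts differing by at most one. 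Lemma~\ref{lem:guards} supplies a $\qstar$-target decomposition of $\{\Bt\}$ with parameters $(\lambda,\kappa)$, where $\kappa\le\delta_0$. This holds for every history, since the witnesses may depend on the pair and hence on the history.
\end{itemize}

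The second step is to verify the horizon and stopping hypotheses. By Lemma~\ref{lem:path-query-bound}, applied to $P_z$, which is a root-to-leaf path of the public tree even when $z\notin D_1$, at most $L$ calls land on $P_z$. By Lemma~\ref{lem:one-middle-target}, after the first $\Bt$ response on $P_z$ no further call lands on $P_z$. So the on-path sub-interaction stops at its first target hit, or after at most $L$ calls. Lemma~\ref{lem:tct-one-hit} then yields $(r\lambda,\,L\kappa_0+\mathbf 1_{\{r<L\}}(1-\qstar)^r)$ with $\kappa_0=\delta_0$, which is exactly $\delta_r$. The full transcript $\operatorname{Tr}$, including the length and the off-path pairs, is a post-processing of the on-path responses together with the data-independent off-path randomness.

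The main obstacle is that Lemma~\ref{lem:tct-one-hit} is phrased for an $\mathcal A$ that halts entirely after the first hit. Here the algorithm keeps making off-path calls after a $\Bt$ on $P_z$, and off-path calls are interleaved with on-path ones. To handle this:
\begin{itemize}[leftmargin=*]
\item Use a coupling. Condition on (or pre-sample) all the fresh Laplace noises used at off-path nodes, and note that these are independent of $b$.
\item Alternatively, re-run the proof of Lemma~\ref{lem:tct-one-hit} verbatim on the augmented transcript, where off-path calls are common-branch factors with $p=1$ that cancel in the likelihood ratio.
\item In that re-run, the events $E_{\mathrm{exc}}$ and $E_{\mathrm{ov}}$ involve only on-path calls. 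The bound $K\kappa_0$ uses at most $L$ on-path calls. The overflow bound $(1-\qstar)^r$ uses that a residual hit on $P_z$ ends all residual draws, because later calls are off-path and have no residual component.
\end{itemize}
I would present the argument in this second form, as the cleanest justification.
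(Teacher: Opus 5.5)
Your argument is correct. It uses the same ingredients as the paper: Lemma~\ref{lem:guards} for each on-path call, Lemma~\ref{lem:path-query-bound} for the horizon $K=L$, and Lemma~\ref{lem:one-middle-target} for the stopping rule. The route you choose to present differs in how the off-path calls are neutralized.

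The paper takes your first option. It pre-samples an answer $W_u\sim Q_{c_u(D_0)}=Q_{c_u(D_1)}$ for every $u\notin P_z$; this reproduces the transcript law because each node is queried at most once. It then fixes a realization $w$ and treats the on-path calls as an adaptive interaction run by a simulator that knows $D_0,D_1,w$ and whose only private input is the bit $i$. Lemma~\ref{lem:tct-one-hit} is applied to that interaction as a black box. Finally, the full transcript is recovered through a deterministic map $\Phi_w$ common to both inputs, and the $(r\lambda,\delta_r)$ bound, which is uniform in $w$, is averaged over the common law $\mu$ of $W$.

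Your second option instead re-runs the proof of Lemma~\ref{lem:tct-one-hit} on the augmented transcript of the whole algorithm, with off-path calls as $p=1$, $\kappa=0$ factors that cancel in the likelihood ratio. In effect you prove a slightly more general one-hit lemma, in which the horizon and the stop-after-first-hit requirement apply only to calls carrying a nontrivial decomposition. That generalization does go through:
\begin{itemize}
\item the exceptional union bound runs over at most $L$ on-path calls;
\item the events $A_j$ involve only residual draws, which occur only on $P_z$;
\item a $\Bt$ response on $P_z$ ends all further on-path calls, so a residual hit ends all residual draws.
\end{itemize}
The paper's route buys modularity, since the TCT proof need not be reopened, at the cost of the coupling and reconstruction steps. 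Yours avoids the coupling but requires re-verifying each step of the TCT argument.

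One small point of wording: the off-path laws $Q_{c_u}$ are not data-independent; they are only common to the fixed pair $D_0,D_1$. That is all either route needs. Calling them the analyst's public coins is accurate only relative to a simulator that knows both databases, as in the paper.
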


\begin{proof}[Proof of Lemma~\ref{lem:transcript-privacy}]
Fix neighboring databases and orient them as
$D_1=D_0\uplus\{z\}$, where $z$ is a leaf record. Let $P_z$ be
its root-to-leaf path. The original node counts satisfy
\begin{align}
  c_{u}(D_1)-c_{u}(D_0)
  =\mathbf{1}_{\{u\in P_z\}}.
  \label{eq:transcript-count-difference}
\end{align}
Call a query \emph{affected} if its node belongs to $P_z$.
For a fixed preceding query-response transcript, the next queried
node and any stopping decision are determined by the public
traversal order and the labels obtained from earlier responses.

\paragraph{Fix the off-path randomness.}
Recall  from
Lemma~\ref{lem:guards} that $Q_c$ is the guarded output law of $\GBT$ (Algorithm~\ref{alg:guard}) on input count $c$. For each $u\notin P_z$, independently
pre-sample an output
\begin{align}
  W_u\sim Q_{c_{u}(D_0)}=Q_{c_{u}(D_1)},
  \label{eq:off-path-samples}
\end{align}
independently also of all randomness used at affected queries.
The equality follows from Eq.~\eqref{eq:transcript-count-difference}.
Write $W=(W_u)_{u\notin P_z}$ and let $\mu$ denote its law.
For this fixed neighboring pair, $\mu$ is the same under both inputs. Algorithm~\ref{alg:tree} invokes $\GBT$ (Algorithm~\ref{alg:guard}) at each node at most once. Consequently,
using $W_u$ as the answer whenever an off-path node $u$ is queried,
and using fresh randomness at affected queries, reproduces the
original transcript distribution exactly. 

\paragraph{Apply Lemma~\ref{lem:tct-one-hit} to the affected queries.}
Fix a realization $w$ with $\mu(w)>0$. Consider a simulator that
knows $D_0,D_1,w$, but whose private input is only the bit
$i\in\{0,1\}$. It follows Algorithm~\ref{alg:tree}, answers every off-path query
at $u$ by $w_u$, and obtains an affected response at $u\in P_z$
by invoking the two-input mechanism
\begin{align}
  \mathcal M_u(i)\sim Q_{c_{u}(D_i)}
  \qquad\text{with target }\{\mathsf B\}.
\end{align}
Let $A_w(i)$ denote the query-response transcript of these affected
invocations only. Since $w$ is fixed, the simulator's next affected
request and its decision that no affected requests remain are
functions of the preceding affected transcript, the fixed quantities
$D_0,D_1,w$, and the public tree and traversal order. The affected
responses use fresh randomness, so their conditional laws are exactly
the displayed laws of $\mathcal M_u$.

For every affected request, the two counts differ by one.
Lemma~\ref{lem:guards} therefore gives the required target
decomposition with local parameter $\lambda$, target parameter
$\qstar$, and exceptional mass at most $\delta_0$. Then we can apply Lemma~\ref{lem:tct-one-hit} on the two-element input space
$\{0,1\}$, with $0$ and $1$ designated neighbors, and with
\begin{align}
  K=\lceil \log_2(h+1) \rceil,\qquad q=\qstar,\qquad \kappa_0=\delta_0.
\end{align}
It follows that $A_w(0)$ and $A_w(1)$ are
$(r\lambda,\delta_r)$-indistinguishable, uniformly over $w$.

\paragraph{Recover the complete transcript.}
For fixed $w$, the full transcript can be reconstructed by replaying
the public traversal, using $w_u$ at off-path queries and the
recorded responses in $A_w(i)$ at affected queries. This also
reconstructs any off-path queries after the last affected query.
Hence there is a deterministic map $\Phi_w$, the same for both
values of $i$, such that $\Phi_w(A_w(i))$ has the full transcript
law in the simulation with off-path answers $w$.

For any event $S$ in the complete transcript space:
% , post-processing for each fixed $w$ and averaging with the common weights $\mu(w)$ give
\begin{align}
  \Pr[\operatorname{Tr}(D_0)\in S]
  &=\sum_w\mu(w)
      \Pr[\Phi_w(A_w(0))\in S]\notag\\
  &\le\sum_w\mu(w)\left(
      e^{r\lambda}\Pr[\Phi_w(A_w(1))\in S]+\delta_r
      \right)\notag\\
  &=e^{r\lambda}\Pr[\operatorname{Tr}(D_1)\in S]+\delta_r.
\end{align}
Exchanging the inputs completes the proof for
Eq.~\eqref{eq:transcript-privacy}.
\end{proof}

The released heavy set is a deterministic function
of $\operatorname{Tr}(D)$ and the public tree and traversal order. Post-processing preserves DP, so the guarantee in Lemma~\ref{lem:transcript-privacy} also holds for Algorithm~\ref{alg:tree}, which was presented as Theorem~\ref{thm:guarded-tree-privacy}. 

\begin{proof}[Proof of Theorem~\ref{thm:calibrated-privacy}]
Since $h\ge 1$, we have $L:=\lceil\log_2(h+1)\rceil\ge 1$.
Also, $r_0\ge 1$, so $1\le r\le L$ and $0<\lambda=\frac{\varepsilon}{r}\le 1.$
Moreover, $L/\delta>1$ implies $R>1$, and $w=1/\lambda$. Theorem~\ref{thm:guarded-tree-privacy} applies.

The multiplicative privacy parameter is
$
    r\lambda=\varepsilon.
$
The guard term satisfies
\begin{align}
\label{eq:calibrated-guard-budget}
    \frac{L}{2}e^{-\lambda(R-1)}
    = \frac{L}{2}e^{-\ln(L/\delta)}
    = \frac{\delta}{2}.
\end{align}

If $r=L$, term $\mathbf{1}_{\{r<L\}}(1-\qstar)^r$ vanishes, proving
$(\varepsilon,\delta/2)$-DP.

Otherwise, $r=r_0<L$. Applying Theorem~\ref{thm:guarded-tree-privacy},
we obtain
\begin{align}
\label{eq:calibrated-target-budget}
    (1-q_\lambda)^r
    &\le \left(\frac45\right)^r = e^{-r\ln(5/4)} \le e^{-\ln(2/\delta)}
     = \frac{\delta}{2}.
\end{align}
Combining Eq.~\eqref{eq:calibrated-guard-budget} and
Eq.~\eqref{eq:calibrated-target-budget}, the total additive privacy
parameter is at most $\delta$.
This proves $(\varepsilon,\delta)$-DP.
\end{proof}

\subsection{Proofs for Utility}\label{sec:app-proof-utility}

\begin{theorem}[Exact deterministic margin]\label{thm:detutility}
% For every $\lambda>0$, $R\ge1$, and parameters Eq.~\eqref{eq:primitiveparams}, o
On any input database $D$, \cref{alg:tree} satisfies the following with probability $1$.
\begin{align}
 u\in\Hhat&\Longrightarrow c_u(D)>a-R=\tau-(w+R),\label{eq:heavycertificate}\\
 u\notin\Hhat&\Longrightarrow c_u(D)<b+R=\tau+(w+R).\label{eq:lightcertificate}
\end{align}
Consequently $E=w+R$ is a simultaneous deterministic margin as required in Definition~\ref{def:utility}.
\end{theorem}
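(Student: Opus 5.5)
We prove both implications by tracking which query created each label, since by Lemma~\ref{lem:consistency-termination} every node receives exactly one permanent label, assigned by some query at a node $v$ (the \emph{source}) through \textsc{MarkUp} or \textsc{MarkDown}. The argument is a deterministic case analysis over the three possible outcomes of $\GBT$ at the source, combined with monotonicity of counts along ancestor chains ($c_v(D)\le c_u(D)$ whenever $v\in\Desc(u)$). No probabilistic reasoning is needed, because the guards of Algorithm~\ref{alg:guard} restrict which outputs are possible, and this holds for every realization of the noise.

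The first step is a pointwise property of $\GBT$, valid for every noise value. If $\GBT(c;a,b,R,\lambda)$ returns $\Hi$ or $\Bt$, then $c>a-R$. Indeed, when $c\le a-R$ the first guard deterministically returns $\Lo$. Symmetrically, if it returns $\Lo$ or $\Bt$, then $c<b+R$, since $c\ge b+R$ forces $\Hi$ through the second guard. This uses $a<b$ and $R\ge1$ only to ensure the two guard regions are disjoint, so that the branches are well defined.

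For \eqref{eq:heavycertificate}, take $u\in\Hhat$ and let $v$ be the source of its Heavy label. There are two cases.
\begin{itemize}
\item If $v$ returned $\Hi$, then $u\in\Anc(v)$ and $c_v(D)>a-R$.
\item If $v$ returned $\Bt$, then $u\in\Anc(v)\setminus\{v\}$, and again $c_v(D)>a-R$.
\end{itemize}
In both cases $v\in\Desc(u)$, so $c_u(D)\ge c_v(D)>a-R=\tau-(w+R)$, using $a=\tau-w$.

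For \eqref{eq:lightcertificate}, take $u\notin\Hhat$. By completeness (Lemma~\ref{lem:consistency-termination}(iii)), $u$ is labeled Light by some source $v$ that returned $\Lo$ or $\Bt$. In either case $u\in\Desc(v)$ and $c_v(D)<b+R$, so $c_u(D)\le c_v(D)<b+R=\tau+(w+R)$.

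The main subtlety is justifying that every label really originates from such a query. The concern is that \textsc{MarkUp} or \textsc{MarkDown} might stop early and leave a label attributable to a different node, or that skipped nodes might be labeled in some other way. This is handled by the permanence and closure statements in Lemma~\ref{lem:consistency-termination}: every assignment is performed inside a \textsc{MarkUp} from $v$ or $\parent(v)$, which only touches ancestors of $v$, or inside a \textsc{MarkDown} from $v$, which only touches descendants of $v$, where $v$ is the node just queried.

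Finally, we translate these certificates into Definition~\ref{def:utility} with $E=w+R$. By contraposition of \eqref{eq:lightcertificate}, $c_u(D)\ge\tau+E$ implies $u\in\Hhat$. By \eqref{eq:heavycertificate}, $u\in\Hhat$ implies $c_u(D)>\tau-E$, hence $c_u(D)\ge\tau-E$. The set $\Hhat$ is ancestor-closed by Lemma~\ref{lem:consistency-termination}(i). Since these implications hold for every noise realization, $\beta=0$.
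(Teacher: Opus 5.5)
Your proof is correct and follows the same route as the paper: the guards force $c>a-R$ whenever a query returns High or Between and $c<b+R$ whenever it returns Low or Between, every label traces back to a queried witness through the marking routines, and counts are monotone along ancestor chains. Your version is somewhat more explicit than the paper's, since it separates the High and Between cases, notes that the guard regions are disjoint, and spells out the translation to Definition~\ref{def:utility} including ancestor-closure, but the argument is the same.
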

Combining Theorem~\ref{thm:calibrated-privacy} and Theorem~\ref{thm:detutility} gives exactly the utility guarantees in Theorem~\ref{thm:main}. 

\begin{proof}[Proof of Theorem~\ref{thm:detutility}]
A query can return $\Hi$ or $\Bt$ only if its count is strictly greater than $a-R$, because counts at or below that lower guard return $\Lo$ deterministically. Every Heavy label has a queried witness: either that node or a descendant returned $\Hi$ or $\Bt$. The witness count is greater than $a-R$. Hence, its ancestor's count is greater than $a-R$. This proves Eq.~\eqref{eq:heavycertificate}. 

The proof for Eq.~\eqref{eq:lightcertificate} is similar. A query can return $\Lo$ or $\Bt$ only if its count is strictly less than $b+R$, because counts at or above that upper guard return $\Hi$ deterministically. Every Light node is the queried witness itself or a descendant of a query returning $\Lo$ or $\Bt$. Its count is at most the witness count, proving Eq.~\eqref{eq:lightcertificate}.

By termination every node receives one of these two labels, according to Lemma~\ref{lem:consistency-termination}. 
\end{proof}

%% file: App-sparse.tex
\section{Sparse Implementation}
\label{sec:spv4-sparse}
The dense implementation of Algorithm~\ref{alg:tree} stores and scans
all nodes of the public tree. The sparse implementation instead stores only the union of the root-to-leaf paths of the input records, removing branches that contain no records (namely, nodes with zero weight). When the threshold is at least the deterministic error margin, every zero-count query returns Low and affects only zero-count descendants. Such nodes can therefore be
omitted from invocations of $\GBT$, with the omitted nodes treated as Light. We apply the same
query and marking rules to the retained nodes, while classification
still concerns the entire original public tree. This preserves the
Heavy-set distribution and gives $O(nh)$ time and auxiliary space for an input
database of size $n$.

Throughout this section, write $n:=|D|$ and assume 
\[
    \tau\ge E:=w+R,\qquad w=1/\lambda,
    \qquad 0<\lambda\le1,\qquad R\ge1,
\]
with the original thresholds $a=\tau-w$ and $b=\tau+w$.

\paragraph{Stored subtree.}
Define
\[
    V_+:=\{u\in V:c_D(u)\ge1\}
        =\bigcup_{z\in D}P_z,
    \qquad T_+:=T[V_+],\qquad m:=|V_+|,
\]
where $P_z$ is the root-to-leaf path of record $z$. Thus $T_+$ contains
exactly the positive-count nodes and the original edges between them.
Every ancestor of a positive-count node also has positive count. $T_+$ is a rooted subtree of $T$ with the same root. It retains the original depths and public child order; no paths are contracted. Since each record
path contains at most $h$ nodes,
\begin{align}
    m\le\sum_{z\in D}|P_z|\le nh,
\end{align}
where repeated records are counted separately.

\paragraph{Implementation.}
If $D=\varnothing$, return $\varnothing$ immediately. Otherwise, build
$T_+$ by inserting each record path and increasing the count of every
node on that path by one, starting newly stored counts at zero. The
resulting counts are exactly the original counts $c_u(D)$.

Initialize all stored labels to unknown and apply the query and marking
steps of Algorithm~\ref{alg:tree} to the stored nodes. Keep the original
public parameters, height $h$, and depth order, even if $T_+$ is shorter.
At each depth, consider only stored nodes, in the restriction of the
original public order, and skip those already labeled. Use the same
marking routines: \textsc{MarkUp} follows the original ancestor chain,
all of which is stored, and \textsc{MarkDown} visits only stored children.
Their stopping rules are unchanged. Omitted nodes are implicitly Light. Return only the set of stored nodes labeled Heavy.

\begin{proposition}[Dense--sparse equivalence]
\label{prop:spv4-equivalence}
Under $\tau\ge E$, for every database $D$, the sparse implementation
and Algorithm~\ref{alg:tree} admit a coupling in which the sparse
query--response sequence is the dense sequence with all zero-count
queries deleted, and the final labels agree on all of $V$, interpreting
omitted nodes as Light. Consequently, they have the same Heavy-set
distribution and, with the same parameter calibration, the same
$(\varepsilon,\delta)$-DP guarantee and simultaneous deterministic
classification margin $E$.
\end{proposition}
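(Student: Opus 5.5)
We build an explicit coupling and compare the dense and sparse executions step by step. The key observation concerns any zero-count node $u\notin V_+$. Since $\tau\ge E=w+R$, we have $a-R=\tau-w-R\ge 0=c_u(D)$. So $\GBT(0;a,b,R,\lambda)$ returns $\Lo$ deterministically through the lower guard, with no noise drawn. Every descendant of $u$ also has count zero, so the resulting \textsc{MarkDown}$(u)$ labels only zero-count nodes Light. It never touches a node of $V_+$.

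For the coupling, we give each node $u\in V_+$ its own Laplace variable $Z_u$, shared by both implementations. The dense run uses $Z_u$ if it queries $u$ in the randomized region, and the sparse run does the same. Zero-count nodes need no randomness. We then prove, by induction over the dense processing sequence (depth order from \textsc{DepO}$(0,h-1)$, then public left-to-right order within a level), the following invariant. After the dense algorithm has processed a node $v$ (queried or skipped):
\begin{enumerate}
\item[(a)] the labels of the two runs agree on every node of $V_+$;
\item[(b)] every zero-count node that the dense run has labeled is Light.
\end{enumerate}

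The inductive step has two cases. If $v\notin V_+$, the sparse run takes no action. The dense run either skips $v$ or queries it and receives $\Lo$, which labels only zero-count nodes Light. Both (a) and (b) are preserved, and this query is precisely one of the deleted zero-count entries. If $v\in V_+$, then by (a) $v$ has the same label in both runs, so both skip it or both query it. When both query, the count $c_v(D)$ and the variable $Z_v$ coincide, so the responses agree. We then compare the marking calls. \textsc{MarkUp} follows the ancestor chain, which lies entirely in $V_+$ and carries identical labels in both runs, so the two calls perform the same assignments. \textsc{MarkDown}$(v)$ in the dense run may descend into zero-count children, but the sparse run visits only stored children. The dense assignments outside $V_+$ are all Light, preserving (b). Inside $V_+$, a node is reached by the dense recursion exactly when its stored path from $v$ is reached, since the stopping test is evaluated on labels that agree by (a). So the two runs make the same assignments on $V_+$, preserving (a).

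At termination, Lemma~\ref{lem:consistency-termination}(iii) says every dense node is labeled. By (b), every node of $V\setminus V_+$ is Light in the dense run, which matches the convention that omitted nodes are implicitly Light. By (a), the labels agree on $V_+$. The query-response sequences correspond by deleting the zero-count queries, and the Heavy sets coincide pointwise under the coupling, so they have the same distribution.

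Since the released output is identical in law, Theorem~\ref{thm:calibrated-privacy} and Theorem~\ref{thm:detutility} transfer unchanged. The empty-database case returns $\varnothing$, which matches the dense output because every query then returns $\Lo$ deterministically. We expect the main obstacle to be the \textsc{MarkDown} comparison: showing that truncating the recursion at unstored children cannot change any later stopping decision on $V_+$. Invariant (b) is what resolves this, because pruned dense branches only ever assign Light, and no node of $V_+$ is a descendant of a zero-count node.
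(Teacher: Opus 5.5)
Your proposal is correct and follows essentially the same route as the paper. It uses the same per-node Laplace coupling on $V_+$, the same observation that $a-R=\tau-E\ge 0$ forces every zero-count query to return $\Lo$ deterministically and mark only zero-count descendants, and the same induction along the common traversal order showing that \textsc{MarkUp} and \textsc{MarkDown} act identically on stored nodes; your invariant (b) corresponds to the paper's remark that omitted nodes never become Heavy.

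One detail you should state explicitly: each node is queried at most once (Lemma~\ref{lem:consistency-termination}(ii)). That is exactly why reusing a single pre-sampled $Z_u$ per node reproduces the fresh-noise law of Algorithm~\ref{alg:tree}.
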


\begin{proof}
For fixed public information, the dense traversal is determined by its responses: starting from unknown labels, each response fixes the marking
updates, and the public scan order fixes the next query or termination.

Since $a-R=\tau-E\ge0$, zero-count queries return Low and mark only zero-count descendants. Upward marking starts from a High or Between response at a positive-count node and stays within $V_+$. Thus zero-count queries do not affect stored labels, and omitted nodes never become Heavy.

Fix $D$ and couple the executions using independent
$Z_u\sim\operatorname{Lap}(1/\lambda)$ for $u\in V_+$, with both using $Z_u$ whenever the randomized branch at $u$ is reached. Since each node is queried at most once, this preserves the required fresh-noise distributions.

We can then make an inductive argument over the stored nodes in their common traversal order. Their labels initially agree, and intervening dense zero-count queries cannot change them. Both executions therefore skip the next stored node or query it with the same count and noise, obtaining the same response. Upward marking follows the same ancestor chain. Downward marking has the same effect on stored nodes, because omitted children have no stored descendants and the stopping tests agree on stored nodes. Hence the stored labels continue to agree, proving the query correspondence.

At termination, all omitted nodes are Light in both executions and all stored labels agree. Equality of the Heavy-set distributions transfers the established privacy and utility guarantees.
\end{proof}

% Sparse query selection also uses the private set $V_+$. The privacy
% guarantee therefore concerns only the released Heavy set, not $V_+$ or
% the sparse query log, which remain internal.

\paragraph{Running time and space.}
Use the computational conventions in Appendix~\ref{sec:app-dense-cost}, and assume bounded arity and $O(h)$-time access to each record path without enumerating the public tree. 
Inserting all paths builds $T_+$ and its counts in $O(nh)$ time. 
An ordered depth-first traversal of $T_+$ groups the stored nodes into one list per depth, preserving their inherited public order within each list. 
Building these lists and generating the original depth order by emitting each depth once together take $O(m+h)$ time and space.

Each stored node is considered once and queried at most once. 
The marking argument of Lemma~\ref{lem:spv4-dense-cost} gives $O(m)$ total marking work. 
Including construction, traversal, and output, the total time is $O(nh+m+h)$ and the auxiliary space is $O(m+h)$. Since $m\le nh$, both bounds are $O(nh)$. 
For $n=0$, the algorithm returns immediately in $O(1)$ time and auxiliary space.

%% file: App-extension-tree-counting.tex
\section{From Heavy-Node Classification to Numerical Tree Counting}
\label{sec:app-extension}

\subsection{Tree Counting through Classification}

Let $T$ be a finite public rooted tree with node set $V$ and height $h\ge3$. The private database $D$ consists of leaf records, and $c_u(D)$ is the count of node $u$, as in the main paper. 
The numerical tree counting problem~\cite{ghazi2023}
asks for an estimate $\widetilde c_u$ of every count. 

\paragraph{The $\operatorname{mRMSE}_{\alpha}$ error criterion.}
Fix a multiplicative tolerance $0<\alpha<1$. Following~\cite[Definition~1.4]{ghazi2023}, we allow an error of $\alpha c_u(D)$ and measure only the amount by which the absolute error exceeds this allowance. 
For an estimator $\widetilde c=\mathcal A(D)$, define the residual error at node $u$ by
\begin{align}
    r_u(D):=
    \bigl(|\widetilde c_u-c_u(D)|-\alpha c_u(D)\bigr)_+,
    \qquad (x)_+:=\max\{x,0\}.
\end{align}
For example, when $\alpha=0.1$ and $c_u(D)=1000$, an absolute error of $130$ gives residual error $30$. 
The multiplicative maximum root-mean-square error is
\begin{align}
    \operatorname{mRMSE}_{\alpha}(\mathcal A;T)
    :=\sup_D\max_{u\in V}
      \sqrt{\mathbb E\!\left[r_u(D)^2\right]},
    \label{eq:app-counting-mrmse}
\end{align}
where the expectation is over the estimator's randomness and the supremum ranges over all input databases. 
Thus $\operatorname{mRMSE}_{\alpha}(\mathcal A;T)\le B$ means that $\mathbb E[r_u(D)^2]\le B^2$ for every database and node.
The tolerance $\alpha$ is fixed in advance; it is the root-mean-square residual error that we bound as a function of $h$. 

\paragraph{The approach of Ghazi et al.}
Ghazi et al.~\cite[Section~4.2]{ghazi2023} use classification at geometrically spaced thresholds to obtain numerical estimates.
Their reduction processes larger thresholds first, assigns estimates to nodes identified as Heavy, and continues on a forest of lighter, disjoint subtrees at smaller thresholds. 
Classifications place estimated counts between nearby thresholds, giving multiplicative accuracy;
counts below the smallest scale are handled by an additive allowance. Combining this reduction with their SVT-based classifier gives an estimator
$\mathcal A_{\mathrm G}$ satisfying~\cite[Theorem~4.1]{ghazi2023} when $\alpha,\varepsilon,\delta$ are fixed
\begin{align}
    \operatorname{mRMSE}_{\alpha}(\mathcal A_{\mathrm G};T)
    =O_{\alpha,\varepsilon,\delta}(\log h).
\end{align}

\paragraph{Using \textsc{BetweenCut}.}
Our classifier (i.e., Algorithm~\ref{alg:tree}) has an additive classification margin $O_{\varepsilon,\delta}(\log\log h)$ that holds simultaneously for all nodes with probability one and is independent of the database size.
To address the tree counting problem, we can use the same geometric-threshold idea, but run increasing thresholds on the original tree. A node receives an estimate when it first becomes
Light. Unless this happens in the first round, its preceding Heavy answer gives a lower bound on its count, and its Light answer gives an
upper bound. Close threshold spacing makes these bounds differ by only a small multiplicative factor. Nodes that are Light in the first round
have small counts and can be estimated by zero.

The main issue is to account for all classification calls without introducing dependence on the number of rounds. We assign decreasing privacy budgets whose sums are bounded by $\varepsilon$ and $\delta$. 
Although smaller budgets increase the later classification margins, the thresholds grow faster, through our careful construction. This keeps each margin small relative to its threshold and makes error bound $O_{\alpha,\varepsilon,\delta}(\log\log h)$, an exponential improvement over the error of $O_{\alpha,\varepsilon,\delta}(\log h)$ in~\cite{ghazi2023}.

\subsection{Construction}

Fix $0<\varepsilon\le1$ and $0<\delta<1/10$. A call to \textsc{BetweenCut} below means invoking Algorithm~\ref{alg:tree} with its local
parameters calibrated to the specified privacy budget as in the main paper. 

\paragraph{A uniform classification bound.}
Because the privacy parameters will vary across calls, we first obtain a margin bound with one numerical constant valid throughout the parameter range. Let $E_h(\varepsilon',\delta')$ denote the calibrated deterministic classification margin. 
The explicit utility bound for Algorithm~\ref{alg:tree} gives, for every $h\ge3$, $0<\varepsilon'\le1$, and $0<\delta'<1/10$,
\begin{align}
    E_h(\varepsilon',\delta')
    \le 1+\frac{5}{\varepsilon'}\log\frac2{\delta'}
       \left(1+\log\frac{L}{\delta'}\right),
    \qquad L:=\left\lceil\log_2(h+1)\right\rceil.
\end{align}
To simplify this expression, write $\ell:=\log(1/\delta')>\log 10>2$ and $t:=\log\log h>0$.
Since the height $h\ge3$ is an integer,
$
L=1+\lfloor\log_2 h\rfloor\le 1+\frac{\log h}{\log 2}\le 3\log h.
$
Thus $\log L\le\log 3+t$, and the two logarithmic factors satisfy
\begin{align}
    \log\frac2{\delta'}
    &=\ell+\log 2
     \le\frac32\ell
    \\
    1+\log\frac{L}{\delta'}
    &\le\ell+t+1+\log 3\le 2\ell+t
    \le 2(\ell+t).
\end{align}
Substituting these bounds gives
\begin{align}
    E_h(\varepsilon',\delta')
    \le\frac{16}{\varepsilon'}\ell(\ell+t).
\end{align}
We therefore fix $C:=16$ once and define
\begin{equation}
    e_h(\varepsilon',\delta')
    :=\frac{C}{\varepsilon'}\log\frac1{\delta'}
       \left(\log\frac1{\delta'}+\log\log h\right).
    \label{eq:app-counting-margin}
\end{equation}
The derivation proves $E_h(\varepsilon',\delta')\le e_h(\varepsilon',\delta')$ for every allowed $h,\varepsilon',\delta'$---the same $C$ works for all calls. 

In particular, a call at threshold $\theta\ge0$ is
$(\varepsilon',\delta')$-DP and returns an ancestor-closed Heavy set
$\widehat H$ satisfying, simultaneously with probability one,
\begin{equation}
    \begin{aligned}
        u\in\widehat H
        &\ \Longrightarrow\
        c_u(D)\ge\theta-e_h(\varepsilon',\delta'),\\
        u\notin\widehat H
        &\ \Longrightarrow\
        c_u(D)<\theta+e_h(\varepsilon',\delta').
    \end{aligned}
    \label{eq:app-counting-classification}
\end{equation}
These are the lower and upper bounds on counts used by the reduction.

\paragraph{Thresholds and privacy budgets.}
We choose thresholds a factor $g>1$ apart and keep each classification margin at most a fraction $\gamma$ of its threshold. Set
\begin{align}
    \gamma:=\frac{\alpha}{8},\qquad
    g:=1+\frac{\alpha}{2},\qquad
    \rho:=g^{-1/2},\qquad p:=1-\rho.
\end{align}
Define the initial threshold by
\begin{align}
    \theta_1:=\frac{e_h(\varepsilon p,\delta p)}{\gamma}.
\end{align}
For every potential round $j\ge1$, use
\begin{align}
    \theta_j:=\theta_1g^{j-1},\qquad
    \varepsilon_j:=\varepsilon p\rho^{j-1},\qquad
    \delta_j:=\delta p\rho^{j-1}.
    \label{eq:app-counting-schedule}
\end{align}
It suffices to apply basic composition to demonstrate that the overall procedure achieves $(\varepsilon, \delta)$-DP. 
Since $0<\rho<1$ and $p=1-\rho$, the finite geometric-sum formula gives, for every integer $J\ge1$,
\begin{align}
    \sum_{j=1}^{J}\varepsilon_j
    &=\varepsilon(1-\rho^J)\le\varepsilon\\
    \sum_{j=1}^{J}\delta_j
    &=\delta(1-\rho^J)\le\delta.
    \label{eq:app-counting-budgets}
\end{align}
Thus, the budgets allocated to the executed calls never exceed $\varepsilon$ and $\delta$, even though their number is not fixed in advance. 

\paragraph{The counting procedure.}
For rounds $j=1,2,\ldots$, run \textsc{BetweenCut} at threshold $\theta_j$ with privacy budget $(\varepsilon_j,\delta_j)$.
Each call starts with fresh labels and randomness on the original tree and database. When a node $u$ is first classified Light, record its round $j(u)$ and permanently assign
\begin{align}
    \widetilde c_u:=
    \begin{cases}
        0, & j(u)=1,\\
        (1+\gamma)\theta_{j(u)}, & j(u)\ge2.
    \end{cases}
\end{align}
The procedure stops if the root is Light, and then releases the estimate vector. Denote this estimator by $\mathcal A_{\mathrm{count}}$.

\subsection{Guarantee and proof}

\begin{theorem}[Numerical tree counting from \textsc{BetweenCut}]
\label{thm:app-tree-counting}
For every finite public rooted tree of height $h\ge3$,
$0<\alpha<1$, $0<\varepsilon\le1$, and $0<\delta<1/10$, 
the estimator $\mathcal A_{\mathrm{count}}$ is $(\varepsilon,\delta)$-DP. Its nonnegative
estimates satisfy, simultaneously with probability one,
\begin{align}
    |\widetilde c_u-c_u(D)|
    \le\max\{\alpha c_u(D), B\}
    \qquad\text{for every }u\in V,
    \label{eq:app-counting-accuracy}
\end{align}
where
\begin{align}
    B:=(1+\gamma)\theta_1
      =O_{\alpha,\varepsilon,\delta}(\log\log h).
\end{align}
Consequently,
\begin{align}
    \operatorname{mRMSE}_{\alpha}(\mathcal A_{\mathrm{count}};T)
    \le B
    =O_{\alpha,\varepsilon,\delta}(\log\log h).
    \label{eq:app-counting-accuracy-stated}
\end{align}
\end{theorem}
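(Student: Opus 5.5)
The proof has three parts: privacy, accuracy, and the size of $B$.

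\emph{Privacy.} This follows from basic composition. Each round $j$ runs Algorithm~\ref{alg:tree} with parameters calibrated to $(\varepsilon_j,\delta_j)$, so it is $(\varepsilon_j,\delta_j)$-DP by Theorem~\ref{thm:main}. The admissibility conditions hold because $\varepsilon_j\le\varepsilon\le1$ and $\delta_j<\delta<1/10$. The number of executed rounds is data-dependent, but the stopping rule (root Light) is a post-processing of the released outputs. We therefore view the procedure as an adaptive composition of an infinite sequence of calls, where calls after stopping are replaced by a null output. Adaptive basic composition over any prefix gives $(\sum_{j\le J}\varepsilon_j,\sum_{j\le J}\delta_j)$-DP. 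The bound \eqref{eq:app-counting-budgets} caps this at $(\varepsilon,\delta)$ uniformly in $J$, and letting $J\to\infty$ handles the limit. The estimate vector is a deterministic function of the transcript.

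\emph{Accuracy.} Termination is needed first. Since $\theta_j\to\infty$ while the root count is finite, and Light is forced once $c_{\mathrm{root}}<\theta_j-e_h(\varepsilon_j,\delta_j)$, we need $e_j:=e_h(\varepsilon_j,\delta_j)\le\gamma\theta_j$ for all $j$. This inequality also drives the estimates, so I would prove it as a separate claim. The margin $e_j$ scales like $\rho^{-(j-1)}\ell_j(\ell_j+t)$ with $\ell_j=\log(1/\delta_j)$ growing linearly in $j$. The threshold $\theta_j$ grows like $g^{j-1}=\rho^{-2(j-1)}$. The ratio is therefore
\[
\frac{e_j}{\gamma\theta_j}
=\rho^{\,j-1}\frac{\ell_j(\ell_j+t)}{\ell_1(\ell_1+t)},
\qquad \ell_j=\ell_1+(j-1)\log(1/\rho).
\]
I would bound this by $1$ using $\ell_j/\ell_1\le 1+(j-1)\log(1/\rho)/\ell_1$ together with $\ell_1\ge\log 10$, and show that $x\mapsto\rho^x(1+x\log(1/\rho)/2)^2$ stays at most $1$. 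This calculation is the main obstacle. It may fail for the displayed constants, in which case I would enlarge $C$ or $\theta_1$ by an absolute factor, noting that $\ell_1>2$ dominates.

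With the claim in hand, the rest is casework. If $j(u)=1$, then $c_u<\theta_1+e_1\le(1+\gamma)\theta_1=B$, so the estimate $0$ has error at most $B$. If $j(u)\ge2$, then $u$ was Heavy in round $j-1$, giving $c_u\ge(1-\gamma)\theta_{j-1}$, and Light in round $j$, giving $c_u<(1+\gamma)\theta_j$. Hence $\widetilde c_u=(1+\gamma)\theta_j$ is at least $c_u$, and
\[
\widetilde c_u-c_u\le\Bigl(\frac{(1+\gamma)g}{1-\gamma}-1\Bigr)c_u .
\]
With $\gamma=\alpha/8$ and $g=1+\alpha/2$, checking that this factor is at most $\alpha$ is elementary for $\alpha<1$. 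Nonnegativity of the estimates is immediate, and every node eventually becomes Light because the root does. The mRMSE bound then follows pointwise from \eqref{eq:app-counting-accuracy}, since $r_u\le B$ always.

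\emph{Size of $B$.} We have $B=(1+\gamma)e_h(\varepsilon p,\delta p)/\gamma$. With $\alpha$ fixed, $p$ and $\gamma$ are constants, so \eqref{eq:app-counting-margin} gives $B=O_{\alpha,\varepsilon,\delta}(\log\log h)$.
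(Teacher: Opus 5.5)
Your proposal is correct and follows the paper's proof essentially step by step. Privacy comes from basic composition with the geometric budget sums. The key claim $e_h(\varepsilon_j,\delta_j)\le\gamma\theta_j$ comes from the ratio $e^{-x}(1+x/A)(1+x/(A+t))$. Termination follows once $(1-\gamma)\theta_j$ exceeds the root count, and the accuracy part is the same two-case argument. Your hedge about the constants is unnecessary: with $y=(j-1)\log(1/\rho)$, $\ell_1>\log 10>2$ and $t\ge 0$, the ratio is at most $e^{-y}(1+y/2)^2\le 1$ because $1+y/2\le e^{y/2}$. This is exactly the paper's bound $e^{-x}e^{2x/A}\le 1$, so $C=16$ and $\theta_1$ can stay as defined.
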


\vspace{0pt}
\noindent\textbf{Comparison with Ghazi et al.~\cite{ghazi2023}.} Compared with $\operatorname{mRMSE}_{\alpha}(\mathcal A_{\mathrm G};T)=O_{\alpha,\varepsilon,\delta}(\log h)$ from~\cite{ghazi2023}, 
Eq.~\eqref{eq:app-counting-accuracy-stated} is an exponential improvement in $h$.
Ghazi et al.~\cite{ghazi2023} also considers a threshold-based utility notion with a public accuracy threshold $\tau_u$ for each node $u$, requiring $
    |\widetilde c_u-c_u(D)|
    \le \alpha\max\{c_u(D),\tau_u\}.$
Our Theorem~\ref{thm:app-tree-counting} directly gives
$|\widetilde c_u-c_u(D)|\le\alpha \max\{c_u(D),\frac{B}{\alpha}\}$, 
without requiring such node-specific thresholds.

% The simultaneous guarantee in Eq.~\eqref{eq:app-counting-accuracy} is actually stronger
% than the stated root-mean-square bound in Eq.~\eqref{eq:app-counting-accuracy-stated}. It gives absolute error at most
% $B$ for counts at most $B/\alpha$, and relative error at most $\alpha$
% for larger counts.

\begin{proof}[Proof of Theorem~\ref{thm:app-tree-counting}]
\emph{Classification accuracy at every round.}
The function $e_h$ defined in Eq.~\eqref{eq:app-counting-margin} upper-bounds the calibrated deterministic margin.
It suffices to show that
$e_h(\varepsilon_j,\delta_j)\le\gamma\theta_j$.

Write
\begin{align}
    A:=\log\frac1{\delta p},\qquad t:=\log\log h.
\end{align}
Since $0<p<1$ and $\delta<1/10$, we have $A>\log 10>2$. Also, $t>0$ because $h\ge3$. Fix a round $j\ge1$ and set
\begin{align}
    x:=(j-1)\log\frac1\rho\ge0.
\end{align}
The construction schedule in Eq.~\eqref{eq:app-counting-schedule} and $\rho=g^{-1/2}$ give
\begin{align}
    \varepsilon_j=\varepsilon p e^{-x},\qquad
    \log\frac1{\delta_j}=A+x,\qquad
    \theta_j=\theta_1e^{2x}.
\end{align}
Using the definition of $e_h$, we obtain
\begin{align}
    e_h(\varepsilon_j,\delta_j)
    =\frac{C}{\varepsilon p e^{-x}}(A+x)(A+x+t)
    =\frac{C}{\varepsilon p}e^x(A+x)(A+x+t).
\end{align}
The choice
$\theta_1=e_h(\varepsilon p,\delta p)/\gamma$ yields
\begin{align}
    \gamma\theta_j
    =\gamma\theta_1e^{2x}
    =e_h(\varepsilon p,\delta p)e^{2x}
    =\frac{C}{\varepsilon p}e^{2x}A(A+t).
\end{align}
Since $t>1$, $1+y < e^y$ for any $y$, and $A>2$, we have
\begin{align}
    \frac{e_h(\varepsilon_j,\delta_j)}{\gamma\theta_j}
    &=\frac{
        \frac{C}{\varepsilon p}e^x(A+x)(A+x+t)
    }{
        \frac{C}{\varepsilon p}e^{2x}A(A+t)
    }\\
    &=e^{-x}\frac{A+x}{A}\frac{A+x+t}{A+t}\\
    &=e^{-x}\left(1+\frac{x}{A}\right)
             \left(1+\frac{x}{A+t}\right)\\
    &\le e^{-x}\left(1+\frac{x}{A}\right)^2\\
    &\le e^{-x}e^{2x/A}\\
    &\le1.
\label{eq:app-counting-margin-ratio}
\end{align}
Consequently, every call's deterministic classification margin is
at most $e_h(\varepsilon_j,\delta_j)\le\gamma\theta_j$, with probability $1$.

% For any fixed database, each potential call satisfies
% \eqref{eq:app-counting-classification} with probability one.
% Since there are countably many potential calls, their guarantees
% hold on a common probability-one event. We use this event for
% termination and numerical accuracy below; the privacy argument
% concerns unconditional output distributions.

\noindent\emph{Termination.}
Fix a finite database $D$ and write $n:=|D|$. By our construction, $(1-\gamma)\theta_j$ is increasing in each round by a factor of $(1+\frac{\alpha}{2})$. 
Hence, 
\begin{align}
    J_D:=\min\{j\ge1:(1-\gamma)\theta_j>n\}
\end{align}
is finite. On the probability-one event above, if round $J_D$ is
reached, then
\begin{align*}
    n<(1-\gamma)\theta_{J_D}
    \le\theta_{J_D}
       -e_h(\varepsilon_{J_D},\delta_{J_D}),
\end{align*}
according to what we just proved---$e_h(\varepsilon_j,\delta_j)\le\gamma\theta_j$. Namely, the root will be classified as Light at round $J_D$. The procedure therefore stops after at most $J_D$ calls.

\noindent\emph{Simultaneous numerical accuracy.}
Fix a database $D$ and work on the probability-one event where all classification guarantees hold. For each node $u$, let $j(u)$ be its first Light round, which exists by the above termination argument.

If $j(u)=1$, then
\begin{align}
    |\widetilde c_u-c_u(D)|
    &=c_u(D)
    <\theta_1+e_h(\varepsilon_1,\delta_1)
    \le(1+\gamma)\theta_1
    =B
\end{align}
Otherwise, write $j:=j(u)\ge2$ for readability. The node was Heavy in round $j-1$
and is Light in round $j$. Therefore,
\begin{align}
    c_u(D)
    &\ge\theta_{j-1}-e_h(\varepsilon_{j-1},\delta_{j-1})
    &&\text{(Heavy in round $j-1$)}\\
    &\ge(1-\gamma)\theta_{j-1}
    &&\text{(margin bound in round $j-1$)},\\
    c_u(D)
    &<\theta_j+e_h(\varepsilon_j,\delta_j)
    &&\text{(Light in round $j$)}\\
    &\le(1+\gamma)\theta_j=:\widetilde c_u
    &&\text{(margin bound in round $j$)}.
\end{align}
The lower bound is positive because $\gamma<1$ and
$\theta_{j-1}>0$. We obtain
\begin{align}
    \frac{\widetilde c_u}{c_u(D)}
    &\le\frac{(1+\gamma)\theta_j}
                  {(1-\gamma)\theta_{j-1}}=\frac{g(1+\gamma)}{1-\gamma}
    =\frac{(1+\alpha/2)(1+\alpha/8)}{1-\alpha/8}
    =1+\alpha-
      \frac{\alpha(4-3\alpha)}{16(1-\alpha/8)}
    \le1+\alpha.
\end{align}
Together with $\widetilde c_u\ge c_u(D)$, this implies
\begin{align}
    |\widetilde c_u-c_u(D)|
    =\widetilde c_u-c_u(D)
    \le\alpha c_u(D).
\end{align}
Combining the two cases proves
Eq.~\eqref{eq:app-counting-accuracy} simultaneously for all nodes.

Expanding the definition of $B$ gives
\begin{align}
    B=(1+\gamma)\theta_1
     =\frac{(1+\gamma)C}{\gamma\varepsilon p}
       A\bigl(A+\log\log h\bigr).
\end{align}
For fixed $\alpha,\varepsilon,\delta$, the quantities $\gamma$, $p$,
and $A$ are fixed and independent of both $h$ and $D$. Consequently,
$B=O_{\alpha,\varepsilon,\delta}(\log\log h)$, with no database-size
dependence.

Therefore, for every database and node
\begin{align*}
    r_u(D)
    &=\bigl(|\widetilde c_u-c_u(D)|-\alpha c_u(D)\bigr)_+\\
    &\le\bigl(\max\{B,\alpha c_u(D)\}-\alpha c_u(D)\bigr)_+
    &&\text{by Eq.~\eqref{eq:app-counting-accuracy}}\\
    &=\bigl(B-\alpha c_u(D)\bigr)_+\\
    &\le B.
\end{align*}
Thus 
\begin{align}
    \operatorname{mRMSE}_{\alpha}(\mathcal A_{\mathrm{count}};T)
    =\sup_D\max_{u\in V}\sqrt{\mathbb E[r_u(D)^2]}
    \le B
    =O_{\alpha,\varepsilon,\delta}(\log\log h).
\end{align}
\end{proof}

\paragraph{Fixed tree height.}
The assumption $h\ge3$ concerns the height of the original public tree $T$, not the number of levels still awaiting classification.
Every counting round invokes Algorithm~\ref{alg:tree} on the same tree with the same height parameter $h$. Thus the assumption $h\ge 3$ remains valid throughout. Each call starts with fresh Heavy/Light \emph{labels}, while previously assigned \emph{numerical estimates} remain unchanged.

\paragraph{Increasing thresholds.}
The classifier in~\cite{ghazi2023} depends on an upper bound $M$ on the root count through the ratio $M/\tau$. Their decreasing-threshold reduction controls this ratio by continuing on lighter subtrees, whose root-count bounds decrease along with the thresholds~\cite{ghazi2023}.
In contrast, \textsc{BetweenCut}'s classification margin does not depend on a root-count bound, so we can repeatedly classify the original tree without this subtree refinement. Increasing thresholds let us start at a public, database-size-independent accuracy scale and stop when the root becomes Light, without first estimating an upper bound on the database size. This direction simplifies the construction.

%% file: references.bib
@InProceedings{tcc06,
author="Dwork, Cynthia
and McSherry, Frank
and Nissim, Kobbi
and Smith, Adam",
title="Calibrating Noise to Sensitivity in Private Data Analysis",
booktitle="TCC",
year="2006"
}

@InProceedings{eps-delta-indistinguish,
author="Dwork, Cynthia
and Kenthapadi, Krishnaram
and McSherry, Frank
and Mironov, Ilya
and Naor, Moni",
title="Our Data, Ourselves: Privacy Via Distributed Noise Generation",
booktitle="EUROCRYPT",
year="2006",
}

@inproceedings{ghazi2023,
  author        = {Ghazi, Badih and Kamath, Pritish and Kumar, Ravi and
                   Manurangsi, Pasin and Wu, Kewen},
  title         = {On Differentially Private Counting on Trees},
  booktitle     = {ICALP},
  year          = {2023}
}

@inproceedings{lumberjack2026,
  author        = {Lebeda, Christian Janos and Erb, David and Cebere, Tudor
                   and Bellet, Aur{\'e}lien},
  title         = {{Lumberjack}: Better Differentially Private Random Forests
                   through Heavy Hitter Detection in Trees},
  booktitle     = {NeurIPS},
  year          = {2026}
}

@inproceedings{cohenlyu2023,
  author        = {Cohen, Edith and Lyu, Xin},
  title         = {The Target-Charging Technique for Privacy Analysis
                   across Interactive Computations},
  booktitle     = {NeurIPS},
  year          = {2023}
}

@inproceedings{bnsv2015,
  author        = {Bun, Mark and Nissim, Kobbi and Stemmer, Uri
                   and Vadhan, Salil},
  title         = {Differentially Private Release and Learning
                   of Threshold Functions},
  booktitle     = {FOCS},
  pages         = {634--649},
  year          = {2015}
}

@article{biswas2026,
author = {Biswas, Ari and Cormode, Graham and Kanza, Yaron and Srivastava, Divesh and Zhou, Zhengyi},
title = {Differentially Private Hierarchical Heavy Hitters},
year = {2024},
volume = {2},
number = {5},
journal = {PACMMOD},
month = nov,
numpages = {25},
}

@inproceedings{wu2022heavyhitters,
  author    = {Wu, Hao and Wirth, Anthony},
  title     = {Asymptotically Optimal Locally Private Heavy Hitters
               via Parameterized Sketches},
  booktitle = {AISTATS},
  year      = {2022}
}

@inproceedings{zhu2020federated,
  author    = {Zhu, Wennan and Kairouz, Peter and McMahan, Brendan
               and Sun, Haicheng and Li, Wei},
  title     = {Federated Heavy Hitters Discovery with Differential Privacy},
  booktitle = {AISTATS},
  year      = {2020}
}

@inproceedings{kim2021ngram,
  author    = {Kim, Kunho and Gopi, Sivakanth and Kulkarni, Janardhan
               and Yekhanin, Sergey},
  title     = {Differentially Private {$n$}-gram Extraction},
  booktitle = {NIPS},
  pages     = {5102--5111},
  year      = {2021},
}

@inproceedings{zhang2016privtree,
  author    = {Zhang, Jun and Xiao, Xiaokui and Xie, Xing},
  title     = {{PrivTree}: A Differentially Private Algorithm for
               Hierarchical Decompositions},
  booktitle = {SIGMOD},
  pages     = {155--170},
  year      = {2016},
}

@article{dwork:roth:14,
  author       = {Cynthia Dwork and
                  Aaron Roth},
  title        = {The Algorithmic Foundations of Differential Privacy},
  journal      = {FTCS},
  year         = {2014}
}

@INPROCEEDINGS{boosting-and-dp,
  author={Dwork, Cynthia and Rothblum, Guy N. and Vadhan, Salil},
  booktitle={FOCS}, 
  title={Boosting and Differential Privacy}, 
  year={2010}}

@InProceedings{pmlr-v37-kairouz15,
  title = 	 {The Composition Theorem for Differential Privacy},
  author = 	 {Kairouz, Peter and Oh, Sewoong and Viswanath, Pramod},
  booktitle = 	 {ICML},
  year = 	 {2015}
}

@inproceedings{optimal-dp-thresholds,
author = {Cohen, Edith and Lyu, Xin and Nelson, Jelani and Sarl{\'o}s, Tam{\'a}s and Stemmer, Uri},
title = {Optimal Differentially Private Learning of Thresholds and Quasi-Concave Optimization},
year = {2023},
booktitle = {STOC},
pages = {472–482},
}

@inproceedings{svt-origin,
author = {Dwork, Cynthia and Naor, Moni and Reingold, Omer and Rothblum, Guy N. and Vadhan, Salil},
title = {On the complexity of differentially private data release: efficient algorithms and hardness results},
year = {2009},
booktitle = {STOC}}

@InProceedings{svt-app-pmlr-v117-nandi20a,
  title = 	 {Privately Answering Classification Queries in the Agnostic PAC Model},
  author =       {Nandi, Anupama and Bassily, Raef},
  booktitle = 	 {ALT},
  year = 	 {2020},
}

@inproceedings{svt-app-mw-linear,
author = {Ullman, Jonathan},
title = {Private Multiplicative Weights Beyond Linear Queries},
year = {2015},
booktitle = {PODS}
}

@InProceedings{svt-app-pmlr-v65-feldman17a,
  title = 	 {Generalization for Adaptively-chosen Estimators via Stable Median},
  author = 	 {Feldman, Vitaly and Steinke, Thomas},
  booktitle = 	 {COLT},
  year = 	 {2017}
}

@inproceedings{svt-app-valid,
author = {Dwork, Cynthia and Feldman, Vitaly and Hardt, Moritz and Pitassi, Toniann and Reingold, Omer and Roth, Aaron Leon},
title = {Preserving Statistical Validity in Adaptive Data Analysis},
year = {2015},
booktitle = {STOC}
}

@InProceedings{svt-app-pmlr-v83-nissim18a,
  title = 	 {Clustering Algorithms for the Centralized and Local Models},
  author = 	 {Nissim, Kobbi and Stemmer, Uri},
  booktitle = 	 {ALT},
  year = 	 {2018}
}

@inproceedings{svt-app-dp-level-erm,
author = {Ligett, Katrina and Neel, Seth and Roth, Aaron and Waggoner, Bo and Wu, Zhiwei Steven},
title = {Accuracy first: selecting a differential privacy level for accuracy-constrained ERM},
year = {2017},
booktitle = {NIPS}
}

@inproceedings{svt-app-10.1145/2488608.2488651,
author = {Hsu, Justin and Roth, Aaron and Ullman, Jonathan},
title = {Differential privacy for the analyst via private equilibrium computation},
year = {2013},
booktitle = {STOC}
}

@inproceedings{svtrevisited,
  author       = {Haim Kaplan and
                  Yishay Mansour and
                  Uri Stemmer},
  title        = {The Sparse Vector Technique, Revisited},
  booktitle = {COLT},
  year         = {2021}
}

@article{understandingsvt,
  author       = {Min Lyu and
                  Dong Su and
                  Ninghui Li},
  title        = {Understanding the Sparse Vector Technique for Differential Privacy},
  journal      = {PVLDB},
  year         = {2017}
}

@inproceedings{korolova2009,
  author    = {Aleksandra Korolova and Krishnaram Kenthapadi and
               Nina Mishra and Alexandros Ntoulas},
  title     = {Releasing Search Queries and Clicks Privately},
  booktitle = {WWW},
  year      = {2009}
}

@inproceedings{sandtdp,
  author = {Bharadwaj, Akash and Cormode, Graham},
  title = {Sample and Threshold Differential Privacy: Histograms and Applications},
  booktitle = {AISTATS},
  year = {2022}
}

@inproceedings{DBLP:conf/icml/KaplanSS22,
  author       = {Haim Kaplan and
                  Shachar Schnapp and
                  Uri Stemmer},
  title        = {Differentially Private Approximate Quantiles},
  booktitle    = {ICML},
  year         = {2022}
}

@article{dp-sparse-linear,
author = {Kerschbaum, Florian and Lee, Steven and Wu, Hao},
title = {Optimal Pure Differentially Private Sparse Histograms in Deterministic Linear Time},
year = {2026},
volume = {4},
number = {2},
journal = {PACMMOD},
month = may
}

@inproceedings{DBLP:conf/colt/KaplanLMNS20,
  author       = {Haim Kaplan and
                  Katrina Ligett and
                  Yishay Mansour and
                  Moni Naor and
                  Uri Stemmer},
  title        = {Privately Learning Thresholds: Closing the Exponential Gap},
  booktitle    = {COLT},
  year         = {2020}
}

@InProceedings{lebeda_et_al:LIPIcs.FORC.2025.23,
  author    = {Lebeda, Christian Janos and Retschmeier, Lukas},
  title     = {{The Correlated Gaussian Sparse Histogram Mechanism}},
  booktitle = {FORC},
  year      = {2025}
}

@inproceedings{dp-sparse-data,
author = {Cormode, Graham and Procopiuc, Cecilia and Srivastava, Divesh and Tran, Thanh T. L.},
title = {Differentially private summaries for sparse data},
year = {2012},
booktitle = {ICDT},
pages = {299–311}
}

@INPROCEEDINGS{dp-spatial,
  author={Cormode, Graham and Procopiuc, Cecilia and Srivastava, Divesh and Shen, Entong and Yu, Ting},
  booktitle={ICDE}, 
  title={Differentially Private Spatial Decompositions}, 
  year={2012},
  pages={20-31}
  }

@inproceedings{instance-mean,
author = {Huang, Ziyue and Liang, Yuting and Yi, Ke},
title = {Instance-optimal mean estimation under differential privacy},
year = {2021},
booktitle = {NeurIPS}
}

@inproceedings{
blocki2023differentially,
title={Differentially Private $L_2$-Heavy Hitters in the Sliding Window Model},
author={Jeremiah Blocki and Seunghoon Lee and Tamalika Mukherjee and Samson Zhou},
booktitle={ICLR},
year={2023}
}

@ARTICLE{ldp-hh-wang,
  author={Wang, Tianhao and Li, Ninghui and Jha, Somesh},
  journal={TDSC}, 
  title={Locally Differentially Private Heavy Hitter Identification}, 
  year={2021},
  volume={18},
  number={2},
  pages={982-993}
  }

@inproceedings{differentially-private-heavy-fa,
title = {Differentially Private Heavy Hitter Detection using Federated Analytics},
booktitle = {ICML},
author = {Karan Chadha and Junye Chen and John Duchi and Vitaly Feldman and Hanieh Hashemi and Omid Javidbakht and Audra McMillan and and Kunal Talwar},
year = {2023}
}

@article{federated-hh-ldp,
author = {Zhang, Yuemin and Ye, Qingqing and Hu, Haibo},
title = {Federated Heavy Hitter Analytics with Local Differential Privacy},
year = {2025},
volume = {3},
number = {1},
journal = {PACMMOD},
}

@article{hierarchical-spatial-ldp,
author = {Alptekin, Ece and Balioglu, Berkay Kemal and Gursoy, M. Emre},
title = {Hierarchical Spatial Decompositions under Local Differential Privacy},
year = {2025},
issue_date = {August 2025},
volume = {19},
number = {7},
journal = {TKDD},
month = aug,
articleno = {128},
numpages = {37}
}

@InProceedings{rectangle-queries,
author="Dwork, Cynthia
and Naor, Moni
and Reingold, Omer
and Rothblum, Guy N.",
editor="Iwata, Tetsu
and Cheon, Jung Hee",
title="Pure Differential Privacy for Rectangle Queries via Private Partitions",
booktitle="ASIACRYPT",
year="2015",
pages="735--751"
}

@article{Xiao2012DPCubeDP,
  title={DPCube: Differentially Private Histogram Release through Multidimensional Partitioning},
  author={Yonghui Xiao and Li Xiong and Liyue Fan and Slawomir Goryczka and Haoran Li},
  journal={TDP},
  year={2014}
}

@inproceedings{
imola2025differentially,
title={Differentially Private Quantiles with Smaller Error},
author={Jacob Imola and Fabrizio Boninsegna and Hannah Keller and Anders Aamand and Amrita Roy Chowdhury and Rasmus Pagh},
booktitle={NeurIPS},
year={2025}
}

@article{
alabi2023bounded,
title={Bounded Space Differentially Private Quantiles},
author={Daniel Alabi and Omri Ben-Eliezer and Anamay Chaturvedi},
journal={TMLR},
year={2023}
}

@InProceedings{pmlr-v139-gillenwater21a,
  title = 	 {Differentially Private Quantiles},
  author =       {Gillenwater, Jennifer and Joseph, Matthew and Kulesza, Alex},
  booktitle = 	 {ICML},
  year = 	 {2021},
}

@inproceedings{DP_unbounded,
 author = {Durfee, David},
 booktitle = {NeurIPS},
 title = {Unbounded Differentially Private Quantile and Maximum Estimation},
 year = {2023}
}

@inproceedings{practical-ldp-hh,
author = {Bassily, Raef and Nissim, Kobbi and Stemmer, Uri and Thakurta, Abhradeep},
title = {Practical locally private heavy hitters},
year = {2017},
booktitle = {NIPS},
pages = {2285–2293}
}

@inproceedings{chang2021oneround,
  author    = {Alisa Chang and Badih Ghazi and Ravi Kumar and
               Pasin Manurangsi},
  title     = {Locally Private {$k$}-Means in One Round},
  booktitle = {ICML},
  year      = {2021},
}

@misc{apple2017learning,
  author       = {{Differential Privacy Team}},
  title        = {Learning with Privacy at Scale},
  year         = {2017},
  month        = dec,
  day          = {6},
  organization = {Apple Machine Learning Research},
  url          = {https://machinelearning.apple.com/research/learning-with-privacy-at-scale},
  note         = {Accessed: 2026-10-05}
}

@inproceedings{ferry-random-forest-attack,
author = {Ferry, Julien and Fukasawa, Ricardo and Pascal, Timoth{\'e}e and Vidal, Thibaut},
title = {Trained random forests completely reveal your dataset},
year = {2024},
booktitle = {ICML},
numpages = {25}
}
